\documentclass[12pt]{article}

\usepackage[authoryear]{natbib}
\usepackage{amsmath,amssymb,amsthm}
\usepackage{geometry}
\usepackage{times}
\usepackage{setspace}
\usepackage{chngcntr}  
\usepackage{enumitem}
\usepackage{dirtytalk}
\numberwithin{equation}{section}
\numberwithin{table}{section}
\numberwithin{figure}{section}

\newtheorem{assumption}{Assumption}
\newtheorem{lemma}{Lemma}
\newtheorem{example}{Example}
\newtheorem{proposition}{Proposition}
\newtheorem{definition}{Definition}
\newtheorem{remark}{Remark}
\newtheorem{corollary}{Corollary}
\newcommand{\proofinapp}[1]{\noindent\textit{Proof.} See Appendix~\ref{#1}.}
\usepackage[colorlinks=true,
            linkcolor=blue,
            citecolor=blue,
            urlcolor=blue]{hyperref}
\begin{document}

\title{Evaluation and Assignment with Networked Competition and Spillovers\thanks{Cabrales gratefully acknowledges funding from AEI under grants PID2024-156629NB-I00 and CEX2021-001181-M.
 }}
\author{Antonio Cabrales\thanks{%
Department of Economics, Universidad Carlos III de Madrid; e-mail:
antonio.cabrales@uc3m.es}, Wenhao Cheng\thanks{%
Corresponding author. Research Institute of Economics and Management, Southwestern University of Finance and Economics; email:
chengwh@swufe.edu.cn}}
\maketitle

\begin{abstract}
This paper studies how organizations should jointly design evaluation rules and assign workers when performance depends on both effort and non-discretionary advantage. Agents choose effort in positions linked by a competition network, while their effective advantage depends on own type and spillovers through a second network. The planner chooses both the assignment and the effort weight in evaluation. Equilibrium effort rises with a position’s Katz-Bonacich centrality and falls with effective advantage. The optimal evaluation rule generally differs from true output. When effort is more important in production, the planner lowers the effort weight and uses negative assortative assignment to strengthen incentives. When advantage is more important, the planner raises the effort weight and uses positive assortative assignment to exploit spillovers. We also study a constraint requiring assignments to be pairwise stable, which creates an output loss depending on the intensity of competition.

%This paper studies how organizations should jointly design evaluation rules and assign workers when performance depends on both effort and non-discretionary advantages. Agents are assigned to positions, choose effort, and are evaluated relative to benchmarks generated by a competition network. Their effective advantage depends on their own type and on spillovers from other positions, summarized by a second network. The planner chooses both an assignment and the weight placed on effort in evaluation. We show that equilibrium effort is increasing in a position's Katz-Bonacich centrality in the competition network and decreasing in its effective advantage. The optimal policy generally makes evaluation differ from true output. When effort is sufficiently important in production, the planner lowers the weight on effort in evaluation and uses negative assortative assignment to strengthen incentives. When effective advantage is sufficiently important, the planner raises the weight on effort and uses positive assortative assignment to exploit spillovers while mitigating the effort-reducing effect of advantage. We also study a constraint requiring assignments to be pairwise stable. Stability sorts agents by private positional advantage rather than aggregate spillover value, creating an output loss whose comparative statics depend on the intensity of competition.
\end{abstract}

\noindent\textbf{Keywords:} relative performance evaluation; worker assignment; organizational design; incentives; contests; network games; peer effects; spillovers; assortative matching.

\noindent\textbf{JEL Codes:} D23, D85, C72, J33, M52.

\newpage

\onehalfspacing
\section{Introduction}

Relative performance evaluation is central to how organizations allocate promotions and rewards. Workers are often compared with peers, either explicitly through rankings, predefined distributions (\citealp{CardinaelsFeichter2021}), calibration meetings (\citealp{KhanKornWilliams2024}), or implicitly through managerial assessments of who is outperforming whom.\footnote{A large literature in economic theory has studied the design of relative performance incentives, from rank-order tournaments (\citealp{LazearRosen1981RankOrder}) to correlated outcomes (\citealp{Fleckinger2012}), peer effects (\citealp{Krakel2016}), and contractual externalities (\citealp{OzdenorenYuan2017}).} Yet performance in organizations is rarely generated by effort alone. It also reflects non-discretionary advantages such as ability, fit, reputation, project quality, mentoring, or access to better colleagues. Moreover, these advantages are not fixed independently of organizational design: they may spill over through learning, collaboration, and knowledge transmission, and they depend on where agents are assigned.\footnote{ \citet{JohnsenKuSalvanes2024}, for example, provide evidence from Norway of implicit tournaments around visible behaviors such as not taking leave. This makes a contest model with ex post reward allocation a natural framework for such environments.}

This paper studies how evaluation and assignment should be designed jointly in such environments. We introduce a model with two networks. A competition network determines which positions are compared in evaluation, while a spillover network determines how agents’ fixed types translate into effective advantage across positions. The planner chooses both an evaluation rule, which weights effort relative to effective advantage, and an assignment of agents to positions. Agents then choose effort in a relative-performance contest. The analysis shows that the planner generally does not want evaluation to coincide with true output. Instead, the planner may deliberately overweight effort or overweight effective advantage, depending on the importance of effort in output and on the structure of competition and spillovers.

To be more precise, agents in the model are linked through effort-driven competition and advantage-enhancing spillovers, captured by two overlapping networks, which we refer to as the \emph{competition structure} and the \emph{spillover structure}. For example, two salespeople serving similar customer groups may be benchmarked against each other in evaluation, and their talent may affect each other through observation and learning. We refer to the non-discretionary component as \emph{effective advantage}. It is generated by agents' fixed types and by assignment-dependent spillovers across positions. Thus, effective advantage need not reflect a worker's own innate ability alone, but may capture reputation, mentoring, access to better colleagues, or other advantages created by where the worker is assigned. We think this is a conceptual contribution of the model. The planner in the first stage seeks to maximize output, which depends on both total effort and total effective advantage linearly, and does so by both assigning agents to positions and choosing the relative weight placed on effort and effective advantage in evaluation. In the second stage, after observing their assigned positions, effective advantages, and the evaluation rule, agents choose effort to maximize their probability of winning a prize given a constant marginal cost of effort.

To understand our intended applications, consider a professional-service firm that assigns associates to clients or projects. Associates are evaluated relative to their peers, but output depends not only on effort, such as hours, initiative, or client contact, but also on client importance, partner mentoring, and the knowledge generated by nearby colleagues. The firm can shape incentives in two ways. It can decide how much evaluation should reward observable effort rather than realized outcomes that partly reflect assignment-driven advantages. It can also assign workers to projects that differ in their exposure to competitors and in the spillovers they generate for others. Similar issues arise in sales organizations, hospitals, academic departments, schools, and public agencies, where performance comparisons, assignment, and spillovers interact.

In terms of results, we first show that, under reasonable conditions, agents exert effort that is increasing in their Katz centrality in the competition structure and decreasing in their effective advantage. Building on this result, we then characterize the planner’s optimal policy in the first stage and show that evaluation generally does not coincide with output. Interestingly, when the weight of effort in output is above a cutoff, the planner strategically assigns a lower weight to effort in evaluation, equivalently a higher weight to effective advantage, and chooses an assignment that minimizes total effective advantage. By contrast, when the weight of effort in output is below that cutoff, the planner strategically assigns a higher weight to effort in evaluation, equivalently a lower weight to effective advantage, and chooses an assignment that maximizes total effective advantage. In the former case, effective advantage is relatively unimportant for output, so the planner is willing to sacrifice it and instead create a low-advantage environment in which agents work harder to make up for their disadvantage. In the latter case, effective advantage is more important for output, so the planner seeks to maximize total advantage, while reducing its importance in evaluation in order to prevent high-advantage agents from exerting too little effort.

We also extend the model to allow for endogenous reassignment after the planner’s decision, capturing environments in which agents may privately swap tasks. We study this issue through a notion of pairwise stability under transferable utility, so the planner’s assignment must rule out mutually beneficial swaps. We show that, under a symmetric spillover structure, stability imposes a simple sorting restriction on assignments. We then characterize the resulting stability loss in output and show how it depends on the intensity of competition, which is measured by aggregate Katz centrality. Stronger competition reduces the stability loss when effective advantage is relatively important for output, by allowing evaluation to better compensate for assignment distortions. By contrast, stronger competition increases the stability loss when effort is relatively important for output, by magnifying the incentive role of the blocked assignment channel.

Our framework implies that organizational design should not treat evaluation, assignment, and workplace interactions as separate problems. When performance reflects both effort and non-discretionary factors, and when agents are linked by both competition and spillovers, a planner should jointly choose who is placed where and how much evaluation weights effort relative to advantage. The framework also suggests that organizational incentives are shaped not only by rewards, but also by the structure of rivalry and spillovers across positions.

The model also delivers a set of comparative statics that connect the theory to observable variation across organizations and evaluation systems. Equilibrium effort is higher in positions that are more central in the competition network. This means that  agents who are compared with, or influence the standing of, more other agents have stronger incentives to exert effort. By contrast, effort is lower for agents with greater effective advantage, since advantage partly substitutes for effort in determining success. This disincentive effect is weaker when the evaluation rule places more weight on effort. Thus reforms that increase the effort weight in evaluation should raise effort especially among agents who previously enjoyed large non-discretionary advantages.

The assignment implications depend on the gap between the true production technology and the evaluation rule. When evaluation puts more weight on effort than true output does, the planner assigns high-ability agents to positions with high aggregate spillover value. When evaluation puts less weight on effort than true output does, the planner instead uses negative assortative assignment to strengthen incentives for otherwise disadvantaged agents. The  cutoff between these regimes falls as competition becomes more intense, so organizations with denser benchmarking networks are more likely to rely on incentive-generating negative sorting. In contrast, when spillover opportunities are more unequal, the planner is more likely to use positive sorting to exploit the high social value of assigning able agents to central positions. Finally, when assignments must be pairwise stable, observed sorting may reflect private positional advantage rather than aggregate spillover value, creating a measurable wedge between decentralized assignments and the planner's preferred allocation.

Our paper is related to several strands of the literature. First, it connects to assignment and matching models in organizations. \citet{Sattinger1993AssignmentModels} surveys assignment models of earnings and productive heterogeneity, while \citet{BeckerMurphy1992DivisionLabor} and \citet{Garicano2000Hierarchies} emphasize the role of specialization, knowledge, and coordination in the allocation of workers to tasks. \citet{GabaixLandier2008CEOPay} provides a prominent example of a talent-to-position assignment model. Relative to this literature, our contribution is to study assignment jointly with the design of evaluation rules when performance depends both on effort and on non-discretionary advantage.

Second, the paper relates to the literature on performance measurement and distorted incentives. \citet{Baker1992IncentiveContracts} shows that incentive contracts based on imperfect performance measures need not implement first-best behavior, and \citet{HolmstromMilgrom1991Multitask} show how measured performance can distort effort allocation in multi-task environments. \citet{FelthamXie1994PerformanceMeasure} formalize the idea of performance-measure congruity, while \citet{BakerGibbonsMurphy1994SubjectivePerformance} study the interaction between objective and subjective performance measures. In our model, the evaluation rule can deliberately depart from true output, not because the available measure is imperfect, but because the planner uses evaluation to shape effort incentives under competition and spillovers.

Third, the spillover component of the model relates to work on peer effects and knowledge spillovers. \citet{MasMoretti2009Peers} and \citet{CornelissenDustmannSchoenberg2017PeerEffects} provide evidence on peer effects in the workplace, while \citet{Waldinger2012PeerEffectsScience} and \citet{AzoulayGraffZivinWang2010Superstar} study spillovers among scientists and collaborators. \citet{CalvoArmengolPatacchiniZenou2009PeerEffects} show how network structure matters for peer effects in education. Our model abstracts from the empirical identification of peer effects and instead studies how a planner should assign agents when such spillovers interact with relative evaluation.

The paper also contributes to the literature on contests, tournaments, and network games. \citet{LazearRosen1981RankOrder} provide the canonical tournament model, and \citet{MoldovanuSela2001OptimalPrize} study prize allocation in contests. \citet{Konrad2009Strategy} surveys the broader contest literature. Our network structure is also related to \citet{ballester2006s}, who show how equilibrium actions in network games depend on centrality, and to \citet{matros2024contests}, who analyze contests on networks. The distinctive feature of our approach is that the organization chooses both the assignment of agents to positions and the evaluation rule that determines the relevant contest incentives.

Our work relates to the contract-theory and mechanism-design literature on relative performance evaluation, as already mentioned. Early work studies how peer outcomes can be used to filter common shocks and reduce incentive costs \citep{GreenStokey1983,NalebuffStiglitz1983,LazearRosen1981RankOrder}. Subsequent work extends this analysis to richer settings with correlated performance or risks \citep{Mookherjee1984,Bartling2011,Fleckinger2012,OzdenorenYuan2017}. More recently, \citet{SunZhao2024} study a setting with both effort spillovers in production and correlation in risk with detailed network structure. Other work studies how relative performance evaluation can improve performance in more applied organizational settings \citep{Dye1992,MaskinQianXu2000}. Our work adds to this literature by jointly studying evaluation and assignment in a networked environment.

Finally another related strand is the literature on contests and tournaments on networks. Although not explicitly in a network context, \cite{Krakel2016} studies how competition between peers changes agents' incentives. Other works develop explicit contest or competition models on networks and characterize how alliances, rivalries, local network structure, and agents’ perceptions of their rivals affect equilibrium effort and success probabilities \citep{konig2017networks, bozbay2018contest, matros2024contests, BochetFaureLongZenou2021}. Relative to this literature, our framework adds a spillover layer of non-discretionary factors to competition and studies how a planner can intervene in this setting.

The remainder of the paper is organized as follows. Section~\ref{sec:model} presents the model. Section~\ref{analysis} characterizes the second-stage equilibrium and studies the planner’s optimal choice of evaluation and assignment. Section~\ref{sec:pairwise} extends the analysis to pairwise stable assignments and characterizes the resulting stability loss. Section~\ref{sec:conclusion} concludes. 
\section{Model}
\label{sec:model}

We consider an organization with $n$ positions and $n$ agents. The planner assigns agents to positions and chooses how performance is evaluated. Agents then exert effort, and rewards are determined by relative performance. Rewards in this framework need not be purely monetary, but can also be understood as a higher probability of promotion, recognition by a superior, or simply the utility from outperforming peers. These incentives are provided ex post, after evaluation, rather than specified ex ante in explicit contracts.

\subsection{Timing and performance score}
\label{subsec:timing}

The interaction has two stages. In Stage 1, the planner chooses an assignment of agents to positions together with an evaluation rule. In Stage 2, after observing these choices, agents choose effort.

Formally, let
\[
J:=\{1,\dots,n\}
\]
denote the set of positions, indexed by $j,k\in J$, and let
\[
I:=\{1,\dots,n\}
\]
denote the set of agents, indexed by $i\in I$. An assignment is a bijection
\[
m:I\to J,
\]
where $m(i)$ is the position occupied by agent $i$, and $m^{-1}(j)$ is the agent assigned to position $j$.

Agents occupying positions compete on the basis of an internal performance score. The performance score at position $j$ under assignment $m$ is
\begin{equation}
x_j^{(m)}:=\omega e_j^{(m)}+(1-\omega)\bigl(\theta_j^{(m)}+\alpha\bigr),
\label{eq:evaluation_score}
\end{equation}
where $e_j^{(m)}\ge 0$ is the effort exerted at position $j$. $\theta_j^{(m)}$ denotes the effective advantage at position $j$, and $\alpha$ is a common organization-wide shifter, which could also be made
position-specific or agent-specific, without affecting the main results of the model. The parameter $\omega\in(0,1]$ governs how much weight the planner places on effort relative to the non-discretionary component $\theta_j^{(m)}+\alpha$ of performance.

Effective advantage depends on the assignment as it spills over across positions. We capture this by assuming
\begin{equation}
\theta^{(m)}:=Ma^{(m)},
\label{eq:theta_def}
\end{equation}
where $M$ is an entrywise nonnegative matrix and $a^{(m)}$ is the vector of agent types in position order. $a_i>0$ (or equivalently $a_j^{(m)}$ for the position $j=m(i)$ assigned to agent $i$) is broadly a fixed, non-discretionary type, which may capture not only ability, but also other factors such as fit, reputation, or favoritism. The element $M_{jk}$ measures how much the type assigned to position $k$ contributes to the effective advantage at position $j$. We refer to $M$ as \emph{spillover structure}.

The spillover structure $M$ can be interpreted as capturing how fixed agent characteristics propagate across positions. A useful example is a company that sells internet and communication services, with salespeople assigned to different but related customer segments, such as individual households, business clients, and public-sector clients. In this setting, an agent's type may be interpreted as sales talent, which can be thought of as fixed and predetermined, at least during the relevant competition and evaluation period. A more talented salesperson may be better at reading demand, have better instinct for which customers are promising, or understand more clearly which sales strategies matter. These advantages can spill over to nearby positions because salespeople in related segments can observe and learn from them over time. The extent of this spillover depends on the position. For example, teams serving business clients may interact more with teams serving public-sector clients, while teams focused on a narrow niche segment may generate much less spillover. Hence, placing a highly talented salesperson in a more connected segment allows more colleagues to benefit. 

Such spillover, however, is not necessary for the model. In an extreme case, one can let $M_{ij}=0$ for all $i\neq j$, and $M_{jj}=1$ for all $j$, so that $\theta_j^{(m)}=a_j^{(m)}$ is interpreted simply as one's own talent or some other fixed characteristic.

For later use, define
\begin{equation}
b:=M^\top \mathbf{1},
\qquad
b_j=\sum_{r=1}^n M_{rj},
\label{eq:b_def}
\end{equation}
so that aggregate effective advantage can be written as
\[
\mathbf{1}^\top \theta^{(m)}=b^\top a^{(m)}.
\]
Hence $b_j$ measures the contribution of the agent assigned to position $j$ to total effective advantage.

In Stage 1, the planner chooses both the assignment $m$ and an evaluation weight $\omega\in(0,1]$. In Stage 2, given $(m,\omega)$, each agent $i$ chooses effort for the position she occupies.

\paragraph{Empirical counterparts.} The objects $a_i$, $M$, and $b$ have natural empirical counterparts. The type $a_i$ can be measured by pre-period productivity, credentials, experience, or baseline performance. The matrix $M$ can be constructed from observable channels through which fixed characteristics spill over across positions, such as co-location, mentoring links, client overlap, project interdependence, or communication networks. The induced object \[ b_j=\sum_{r=1}^n M_{rj} \] then measures the aggregate spillover value of assigning a high-type agent to position $j$. Thus, the theory predicts not only which agents should work harder, but also how agents should be assigned across positions with different spillover centrality.

\subsection{Stage 2: agents, evaluation, and payoffs}
\label{subsec:stage2}

Throughout this subsection, we take the planner's Stage-1 choices $(m,\omega)$ as given.

Success at position $j$ occurs when evaluated performance exceeds a benchmark formed from the scores of connected positions:
\begin{equation}
x_j^{(m)}+u_j
\geq
\beta\sum_{k=1}^n w_{jk}x_k^{(m)},
\qquad
u_j\sim N(0,\sigma^2)\ \text{i.i.d.}
\label{eq:success_condition}
\end{equation}
where $u_j$ captures unobservable factors affecting success, such as luck. Performance comparisons are then linked through a nonnegative directed network
\[
W=(w_{jk})_{j,k=1}^n,
\qquad
w_{jk}\geq 0,
\qquad
w_{jj}=0.
\]
Here $w_{jk}$ measures how the performance of position $k$ enters the benchmark faced by position $j$, and $\beta\geq 0$ scales the overall strength of these benchmark interactions. We refer to $W$ as the \emph{competition structure}.

The competition structure includes the basic case in which all $w_{jk}=\frac{1}{n-1}$, so that an individual stands out when their position's performance exceeds the average of the peer group, multiplied by $\beta$. It also allows for more flexible comparison patterns that better capture real-world settings. In the sales example above, different customer segments may be evaluated differently, so salespeople are mainly compared with others facing similar customers and sales targets rather than with the whole firm. For instance, salespeople serving small clients may be benchmarked mainly against those serving other small clients or medium-sized clients.

Based on Equation \eqref{eq:success_condition}, define the standardized success margin
\begin{equation}
z_j^{(m)}
:=
\frac{x_j^{(m)}-\beta\sum_{k=1}^n w_{jk}x_k^{(m)}}{\sigma}.
\label{eq:z_def}
\end{equation}
Then the probability of success at position $j$ is
\[
\Pr(\text{success at }j)=\Phi\bigl(z_j^{(m)}\bigr),
\]
where $\Phi(\cdot)$ denotes the standard normal cdf.

Let
\begin{equation}
E_j^{(m)}:=e_{m^{-1}(j)},
\qquad j\in J,
\label{eq:position_effort}
\end{equation}
denote the effort exerted at position $j$. Each successful agent receives a common reward $U>0$ (success is not mutually exclusive across agents). Therefore, if agent $i$ is assigned to position $j=m(i)$, her expected payoff is
\begin{equation}
EU_i\bigl(E^{(m)};\omega,m\bigr)
=
U\Phi\bigl(z_j^{(m)}\bigr)-vE_j^{(m)},
\label{eq:agent_payoff}
\end{equation}
where $v>0$ is the marginal cost of effort.

Given $(m,\omega)$, a Stage-2 equilibrium is an effort profile
\[
E^{*(m)}(\omega)=\big(E_1^{*(m)}(\omega),\dots,E_n^{*(m)}(\omega)\big)
\]
such that each agent's effort is optimal given the efforts of all other agents.

\subsection{Stage 1: planner's problem}
\label{subsec:planner_problem}

The planner cares about aggregate true output rather than the internal evaluation score.\footnote{While we refer to this as \say{true output}, it is more generally intended to capture what the planner truly cares about, and need not coincide with output in the usual sense.} At position $j$, true output is given by
\begin{equation}
y_j^{(m)}:=sE_j^{(m)}+(1-s)\bigl(\theta_j^{(m)}+\alpha\bigr),
\qquad
s\in(0,1].
\label{eq:true_output}
\end{equation}
Thus, if $\omega=s$, the evaluation rule coincides with true output, so agents are evaluated according to the output they produce. When $\omega\neq s$, the planner deliberately tilts evaluation away from true output in order to affect incentives. Equivalently,
\begin{equation}
x_j^{(m)}
=
y_j^{(m)}
+
(\omega-s)\Bigl(E_j^{(m)}-\bigl(\theta_j^{(m)}+\alpha\bigr)\Bigr).
\label{eq:evaluation_decomposition}
\end{equation}

\paragraph{Empirical interpretation of $s$ and $\omega$.} The distinction between $s$ and $\omega$ is empirically important. The parameter $s$ describes the true production technology. It can be seen as the weight of effort in the output that the organization ultimately values. By contrast, $\omega$ describes the evaluation rule used to allocate rewards, promotions, recognition, or other career benefits. In applications, $s$ can be estimated from an output equation relating realized performance to effort and non-discretionary advantage, whereas $\omega$ can be inferred from scorecards, evaluation rubrics, supervisor weights, or policy reforms that change how much effort-based metrics enter performance assessment. Hence deviations of $\omega$ from $s$ are not merely theoretical. They correspond to observable distortions between what the organization values and what it rewards.

The linearity of both true output and evaluation follows classical frameworks in personnel economics, such as \citet{prendergast1999provision}, where output and evaluation are referred to as the \say{objective measure} and the \say{subjective measure}. It is also common in principal-agent models to assume that output is a linear function of the relevant inputs.\footnote{See, for example, \cite{LazearRosen1981RankOrder,Sannikov2008}.} 

In our setting, once output is linearly determined by its inputs, a likewise linear evaluation rule allows the planner to reweight effort relative to the non-discretionary component. For example, a manager may treat contracts signed or revenue generated as a hard metric, but still wish to give extra credit to salespeople who visibly exert greater effort, like making more calls, attending more meetings, or visiting more clients, while holding realized output fixed. This corresponds to $s<\omega$, so that, in Equation \eqref{eq:evaluation_decomposition}, effort receives additional credit conditional on true output \(y_j^{(m)}\).\footnote{Equation \eqref{eq:evaluation_decomposition} implies that giving additional credit to effort necessarily reduces the
weight placed on the non-discretionary component, \(\theta_j^{(m)}+\alpha\). This follows from the
normalization that the weights between them sum to one. Relaxing this normalization does not fundamentally
change the model. To see this, suppose instead that the evaluation score were
\[
x_j^{(m)}=y_j^{(m)}+\omega_1E_j^{(m)}+\omega_2\big(\theta_j^{(m)}+\alpha\big),
\]
with arbitrary \(\omega_1\) and \(\omega_2\). Using
\(y_j^{(m)}=sE_j^{(m)}+(1-s)\big(\theta_j^{(m)}+\alpha\big)\), we can rewrite this as
\[
x_j^{(m)}
=
\Big(1+\frac{\omega_1}{s}\Big)y_j^{(m)}
+
\Big(\omega_2-\frac{1-s}{s}\omega_1\Big)\big(\theta_j^{(m)}+\alpha\big),
\]
or equivalently,
\[
x_j^{(m)}
=
\Big(1+\frac{\omega_2}{1-s}\Big)y_j^{(m)}
+
\Big(\omega_1-\frac{s}{1-s}\omega_2\Big)E_j^{(m)}.
\]
For \(s\in(0,1)\), the coefficient on \(E_j^{(m)}\) is positive if and only if the coefficient on
\(\theta_j^{(m)}+\alpha\) in the previous expression is negative:
\[
\omega_1-\frac{s}{1-s}\omega_2>0
\quad\Longleftrightarrow\quad
\omega_2-\frac{1-s}{s}\omega_1<0.
\]
Under this normalization, holding true output fixed, increasing the implicit credit given to one component necessarily reduces the implicit credit given to the other.}

The planner chooses the assignment $m$ and the evaluation rule $\omega$ to maximize aggregate true output, anticipating the Stage-2 equilibrium induced. Given that $E_j^{*(m)}(\omega)$ denotes the equilibrium effort at position $j$, the planner's objective is
\begin{equation}
\mathcal{W}(\omega,m)
:=
\sum_{j=1}^n y_j^{*(m)}(\omega)
=
s\sum_{j=1}^n E_j^{*(m)}(\omega)
+
(1-s)\sum_{j=1}^n \theta_j^{(m)}.
\label{eq:planner_objective}
\end{equation}
Hence the planner solves
\begin{equation}
\max_{\omega\in(0,1]}\max_{m\in\mathcal{M}} \mathcal{W}(\omega,m),
\label{eq:planner_problem_full}
\end{equation}
where $\mathcal{M}$ is the set of all bijections from $I$ to $J$.

The planner therefore faces a joint design problem: he assigns agents to positions and chooses $\omega$ so as to balance the output value of effective advantage against the incentive effects of evaluation. In the baseline model, we assume that agents simply take the positions assigned to them in Stage 2. In an extension, we consider the case in which agents can swap positions, in which case the planner faces an incentive-compatibility constraint.

In the analysis, we solve the model by backward induction. Given $(\omega,m)$, a Stage-2 equilibrium is a pure-strategy Nash equilibrium in effort choices. Anticipating the corresponding Stage-2 equilibrium, the planner then chooses $(\omega,m)$ to maximize the Stage-1 objective.

\section{Analysis}\label{analysis}
\subsection{Stage 2 equilibrium}
%===============================================================

Stage 2 is a contest game in a network setting, a class of models that has been studied extensively in the literature.\footnote{See, for example, \cite{konig2017networks,bozbay2018contest,faure2020perceived,matros2024contests}.} Common assumptions in this literature limit strategic interdependence and restrict the benefits and costs of effort so as to ensure a well-defined interior equilibrium. The two assumptions below serve these purposes here.

\begin{assumption}[Network stability]\label{ass:network-stability}
\[
0\le \beta < \frac{1}{\rho(W)}.
\]
\end{assumption}

where $\rho(W)$ denotes the spectral radius of $W$. This is a standard assumption in the network strategic interaction literature, and ensures that agents are not overly sensitive to changes in others’ effort choices, so the equilibrium does not become explosive.

The next assumption defines the active-contest region. It ensures that, for all
evaluation weights in this region, the candidate interior effort profile is positive and no agent strictly prefers the zero-effort corner. For this assumption and for later use, define
\[
\underline\omega:=\frac{v\sigma}{U\phi(0)}
=\frac{v\sigma\sqrt{2\pi}}{U},
\]
and, for $\omega>\underline\omega$,
\[
\bar z(\omega):=
\sqrt{2\log\!\left(\frac{\omega}{\underline\omega}\right)}.
\]
Equivalently, $\bar z(\omega)$ is the positive root of
\[
\phi(\bar z(\omega))=\frac{v\sigma}{U\omega}.\footnote{Since $\phi(z)=(1/\sqrt{2\pi})\exp(-z^2/2)$ and
$\underline\omega=v\sigma\sqrt{2\pi}/U$, the equation
$\phi(z)=v\sigma/(U\omega)$ is equivalent to
$\exp(-z^2/2)=\underline\omega/\omega$, hence
$z^2=2\log(\omega/\underline\omega)$. The positive root is therefore
$z=\sqrt{2\log(\omega/\underline\omega)}$.}
\]
Also define
\[
q:=(I-\beta W)^{-1}\mathbf 1.
\]
This is the \emph{Katz centrality} vector associated with the competition network $W$
and parameter $\beta$. It captures how strongly each position is amplified
through the network, taking into account both direct and indirect effects.

\begin{assumption}[Interior-equilibrium region]\label{ass:alpha-bound} There exists $\omega^I\in(\underline\omega,1]$ such that, for every
$\omega\in[\omega^I,1]$, every assignment $m\in\mathcal M$, and every position
$j\in J$,
\[
0<
\bar z(\omega)q_j
-
\frac{1-\omega}{\sigma}
\left(
\sum_{k=1}^n M_{jk}a_k^{(m)}+\alpha
\right)
<
2\bar z(\omega).
\]

\end{assumption}

The first inequality ensures strictly positive candidate effort. The
second inequality rules out cases in which the agent would prefer to deviate to zero effort. Assumption~\ref{ass:alpha-bound} equivalently requires the non-discretionary
component at each position, $\sum_{k=1}^n M_{jk}a_k^{(m)}+\alpha$, to lie in an interval. 

The assumption is technical, but it also has a natural interpretation. The paper
focuses on active-contest environments in which all agents remain meaningfully
engaged in relative performance evaluation. It therefore excludes cases in which
an agent's non-discretionary advantage is either so high that effort is not
needed, or so low that the agent gives up and chooses zero effort. Intuitively, one can also think that a manager would not tolerate zero effort and would typically intervene before such a corner outcome is reached. In any case, the corner cases are not relevant to our comparative statics. As it is natural to believe that, in most organizations, people rarely exert no effort at all, we choose not to focus on those cases.\footnote{The assumption would also look less restrictive if the universal $\alpha$ were replaced by an agent-specific term $\alpha_i$ or a position-specific term $\alpha_j$. Such variants would not change the main results of the model.}

\begin{lemma}[No effort below $\underline\omega$]\label{lem:collapse-threshold}
For any assignment $m$, if $\omega\le \underline\omega$, then the unique Stage-2
pure-strategy Nash equilibrium is
\[
E^{*(m)}(\omega)\equiv 0.
\]
\end{lemma}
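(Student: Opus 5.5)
The plan is to show that when $\omega\le\underline\omega$, effort is strictly dominated: for every agent and every profile of the others' efforts, zero effort is the unique best response. Fix an assignment $m$ and an agent $i$ at position $j=m(i)$, and treat the other positions' efforts $E_{-j}$ as arbitrary. By \eqref{eq:evaluation_score} and \eqref{eq:z_def}, $z_j^{(m)}$ is affine in $E_j$:
\[
z_j^{(m)}=\frac{\omega E_j+c_j}{\sigma},
\]
where $c_j$ collects $(1-\omega)(\theta_j^{(m)}+\alpha)$ and the benchmark term $-\beta\sum_k w_{jk}x_k^{(m)}$. The benchmark term does not depend on $E_j$ because $w_{jj}=0$. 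So agent $i$'s payoff \eqref{eq:agent_payoff} is a one-dimensional function
\[
g(E_j)=U\Phi\!\left(\frac{\omega E_j+c_j}{\sigma}\right)-vE_j,\qquad E_j\ge 0.
\]

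The key step is a derivative bound. We have
\[
g'(E_j)=\frac{U\omega}{\sigma}\,\phi\!\left(\frac{\omega E_j+c_j}{\sigma}\right)-v.
\]
Since $\phi(z)\le\phi(0)$ for all $z$, with equality only at $z=0$, this gives
\[
g'(E_j)\le \frac{U\omega\phi(0)}{\sigma}-v=v\left(\frac{\omega}{\underline\omega}-1\right)\le 0
\]
whenever $\omega\le\underline\omega$, using the definition $\underline\omega=v\sigma/(U\phi(0))$.

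The main obstacle is the boundary case $\omega=\underline\omega$, where $g'$ may vanish, so I must still show that $0$ is the unique maximizer. When $\omega<\underline\omega$, $g'<0$ everywhere and $g$ is strictly decreasing. When $\omega=\underline\omega$, $g'(E_j)=0$ only at the single point $E_j=-c_j/\omega$ where the argument of $\phi$ is zero, and $g'<0$ elsewhere. A continuous function whose derivative is nonpositive and vanishes at most at one point is strictly decreasing. In both cases, then, $g$ is strictly decreasing on $[0,\infty)$, and $E_j=0$ is the unique best response to every $E_{-j}$.

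Existence and uniqueness then follow directly. Because zero is the unique best response regardless of the others' choices, the profile $E^{*(m)}(\omega)\equiv0$ is a pure-strategy Nash equilibrium. Any other equilibrium would have some agent choosing a positive effort, which is not a best response, so this is the only pure-strategy equilibrium. Finally, the argument never uses Assumption~\ref{ass:network-stability} or Assumption~\ref{ass:alpha-bound}, and it holds for every $m$, $\alpha$, $M$ and $a$, as the statement requires.
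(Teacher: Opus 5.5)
Your proposal is correct and matches the paper's proof: both bound the marginal payoff by $U\phi(0)\omega/\sigma - v \le 0$, conclude that zero effort is the unique best response to any profile of others' efforts, and hence that $E^{*(m)}(\omega)\equiv 0$ is the unique pure-strategy equilibrium. Your handling of $\omega=\underline\omega$, where the derivative vanishes only at the single point where $z_j=0$, is a more explicit version of the paper's remark that the payoff is ``strictly decreasing except possibly at isolated points.''
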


\proofinapp{app:proof-lemma1}

Intuitively, if $\omega\le \underline\omega$, each agent always weakly prefers to
reduce effort, because the marginal benefit of increasing effort is always no
larger than the marginal cost $v$. For $\omega\in(\underline\omega,\omega^I)$, the first-order condition admits a
positive root, but the closed-form candidate need not be a Nash equilibrium. In
such a region, some agents may prefer the zero-effort corner, and pure equilibria,
when they exist, may involve corner effort choices. The
main characterization is below for $\omega\in[\omega^I,1]$, which throughout the paper is called \emph{active-contest region}, where
Assumption~\ref{ass:alpha-bound} guarantees that the closed-form candidate is an
all-active Stage-2 equilibrium and that it is the unique pure-strategy equilibrium
in which all agents exert strictly positive effort.

\begin{proposition}[Second-stage equilibrium]
\label{prop:stage2-omega} Maintain Assumptions~\ref{ass:network-stability}
and \ref{ass:alpha-bound}. Given $(\omega,m)$ with $\omega\ge\omega^I$, the
Stage~2 game admits the following all-active pure-strategy Nash equilibrium: 
\begin{equation*}
x^{*(m)}(\omega)=\sigma \bar z(\omega)\,q, \qquad E^{*(m)}(\omega)=\frac{%
\sigma}{\omega}\bar z(\omega)\,q -\frac{1-\omega}{\omega}\big(%
\theta^{(m)}+\alpha\mathbf{1}\big).
\end{equation*}
Moreover, this is the unique pure-strategy equilibrium in which all agents
exert positive effort. In this equilibrium, 
\begin{equation*}
z^{(m)}_j=\bar z(\omega) \qquad\text{and}\qquad \Pr(\text{success at }%
j)=\Phi(\bar z(\omega))
\end{equation*}
for all $j$.
\end{proposition}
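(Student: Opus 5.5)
We will find the all-active equilibrium from the first-order conditions, solve them with Assumption~\ref{ass:network-stability}, and then use Assumption~\ref{ass:alpha-bound} to show no agent gains by leaving the interior.

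\textbf{Step 1 (first-order conditions).} Fix $(\omega,m)$ with $\omega\ge\omega^I$. Because $w_{jj}=0$, agent $m^{-1}(j)$ affects $z_j^{(m)}$ only through $x_j^{(m)}$, and $\partial z_j^{(m)}/\partial E_j^{(m)}=\omega/\sigma$. The payoff \eqref{eq:agent_payoff} is smooth in own effort. At any interior best response,
\[
\frac{U\omega}{\sigma}\phi\bigl(z_j^{(m)}\bigr)=v,
\qquad\text{that is,}\qquad \phi\bigl(z_j^{(m)}\bigr)=\frac{v\sigma}{U\omega}.
\]
Since $\omega>\underline\omega$, this equation has exactly the two roots $z_j=\pm\bar z(\omega)$. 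The payoff's derivative in $E_j$ is proportional to $U\omega\phi(z_j)/\sigma-v$. This derivative is positive for $|z_j|<\bar z$ and negative for $|z_j|>\bar z$. Hence $-\bar z$ is a local minimum and $+\bar z$ is a local maximum in own effort. So every all-active equilibrium must satisfy $z_j^{(m)}=\bar z(\omega)$ for all $j$.

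\textbf{Step 2 (solving the system).} The condition $z=\bar z\mathbf 1$ reads $(I-\beta W)x=\sigma\bar z\mathbf 1$. By Assumption~\ref{ass:network-stability}, $I-\beta W$ is invertible, so $x^{*(m)}=\sigma\bar z(\omega)q$ is the unique solution. Inverting \eqref{eq:evaluation_score} gives $E=(x-(1-\omega)(\theta^{(m)}+\alpha\mathbf 1))/\omega$, which is the stated formula. Positivity of $E_j^{*}$ is exactly the first inequality in Assumption~\ref{ass:alpha-bound}, after multiplying by $\sigma/\omega$ and using $\theta_j^{(m)}=\sum_kM_{jk}a_k^{(m)}$. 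This step also gives uniqueness: any all-active equilibrium must have $z=\bar z\mathbf 1$, and that system has only this solution.

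\textbf{Step 3 (global optimality).} This is the main obstacle, because $\Phi$ is not concave, so the first-order condition alone does not guarantee a best response. Hold others' scores fixed and let $c_j=\beta\sum_k w_{jk}x_k^{*}$. Agent $j$'s payoff as a function of $z\in[z_0,\infty)$ is $U\Phi(z)-(v\sigma/\omega)(z-z_0)$ plus a constant. Here $z_0$ is the margin at zero effort, namely
\[
z_0=\bar z-\frac{\omega}{\sigma}E_j^{*}
=\bar z-\Bigl(\bar z q_j-\frac{1-\omega}{\sigma}\bigl(\theta_j^{(m)}+\alpha\bigr)\Bigr).
\]
From the sign pattern of the derivative in Step 1, the only candidates for a maximum are the local maximum $\bar z$ and the boundary $z_0$. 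If $z_0\ge-\bar z$, the payoff is increasing on $[z_0,\bar z]$ and decreasing afterwards, so $\bar z$ is the global maximum. The second inequality in Assumption~\ref{ass:alpha-bound} gives $\frac{\omega}{\sigma}E_j^*<2\bar z$, which is exactly $z_0>-\bar z$. Every agent is therefore best-responding, and the profile is a Nash equilibrium.

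\textbf{Step 4 (success probabilities).} Substituting $z_j=\bar z(\omega)$ gives success probability $\Phi(\bar z(\omega))$ at every position.
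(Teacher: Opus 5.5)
Your proof is correct and follows essentially the same route as the paper's. You take the first-order condition with the positive root $\bar z(\omega)$, invert $I-\beta W$ to obtain $x^{*(m)}=\sigma\bar z(\omega)q$ and uniqueness, and then use the sign pattern of $\phi(z_j)-\phi(\bar z(\omega))$ on $[z_0,\infty)$, with $z_0>-\bar z(\omega)$ supplied by Assumption~\ref{ass:alpha-bound}. The only difference is that you let this single unimodality argument also cover the zero-effort corner, whereas the paper additionally checks that deviation with a separate integral bound, which is redundant given the monotonicity step.
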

\proofinapp{app:proof-stage2-omega}

\paragraph{Empirical implications of equilibrium effort.} The closed-form effort expression in Proposition~\ref{prop:stage2-omega} generates testable implications for individual effort. In the active-contest region, \[ E_j^{*(m)}(\omega) = \frac{\sigma}{\omega}\bar z(\omega)q_j - \frac{1-\omega}{\omega}\bigl(\theta_j^{(m)}+\alpha\bigr). \] Therefore, holding effective advantage fixed, effort is increasing in the Katz centrality of the position in the competition network: \[ \frac{\partial E_j^{*(m)}(\omega)}{\partial q_j} = \frac{\sigma}{\omega}\bar z(\omega)>0. \] Workers in positions that are more central in the benchmarking structure should therefore exert more observable effort, such as calls, visits, hours, client contacts, applications processed, or other measures of work intensity. The same expression implies that, holding network centrality fixed, effort is decreasing in effective advantage: \[ \frac{\partial E_j^{*(m)}(\omega)}{\partial \theta_j^{(m)}} = -\frac{1-\omega}{\omega}\le 0. \] Thus agents with greater non-discretionary advantage should exert less effort, because the advantage makes success easier. Moreover, \[ \frac{\partial^2 E_j^{*(m)}(\omega)} {\partial \omega\,\partial \theta_j^{(m)}} = \frac{1}{\omega^2}>0, \] so increasing the effort weight in evaluation weakens the negative effect of advantage on effort. This gives a difference-in-differences implication. After a reform that raises the weight placed on effort in evaluation, effort should rise relatively more among agents with high effective advantage.

This prediction is related to empirical evidence that relative performance information and peer comparisons affect effort and productivity. For example, \citet{MasMoretti2009Peers} show that peer productivity affects worker effort in a supermarket setting, especially when coworkers can observe one another. \citet{BlanesVidalNossol2011} find that providing workers with feedback about their relative position in the productivity distribution increases productivity, while \citet{AzmatIriberri2010} find similar effects of relative performance feedback in an educational setting. Related evidence that relative incentives can alter effort, especially when individual effort imposes externalities on others, is provided by \citet{BandieraBarankayRasul2005}.

\vspace{15pt}

Proposition~\ref{prop:stage2-omega} implies that an agent’s equilibrium effort is increasing in her Katz centrality in the active-contest region. This is consistent with the classical literature on effort and strategic interaction on networks.\footnote{For example, \cite{ballester2006s} and \cite{konig2017networks}}. Another noteworthy feature is that the equilibrium effort need not be monotone in $\omega$. Differentiating yields
\begin{equation}\label{eq:effort-derivative-omega}
\frac{\partial E^{*(m)}(\omega)}{\partial \omega}
=
\frac{\sigma q}{\omega^2}\left(\frac{1-\bar z^2(\omega)}{\bar z(\omega)}\right)
+\frac{\theta^{(m)}+\alpha\mathbf 1}{\omega^2}.
\end{equation}
Its sign is generally ambiguous, since the sign of $1-\bar z^2(\omega)$ is ambiguous. Intuitively, when $\omega$ is small, performance depends more heavily on effective advantage, so less-advantageous agents face a greater disadvantage and may exert more effort to compensate. By contrast, when $\omega$ is large, performance depends more directly on effort, which may weaken this compensating motive. 

This can also be seen from the role of $\theta^{(m)}$ in Equation~\ref{eq:effort-derivative-omega}. The derivative may be negative when effective advantage is low, and likely positive when effective advantage is high. Intuitively, for less-advantaged agents, a lower $\omega$ means that their disadvantage matters more, which strengthens their incentive to make up for it through effort. Conversely, for more-advantaged agents, this force is weaker.

It is therefore not generally optimal for the planner to make $\omega$ as large as possible. In the next section, we further show that it is also generically not optimal for the planner to set $\omega=s$.

%===============================================================
\subsection{Planner's optimal evaluation and assignment}
%===============================================================

In the previous subsection, we took $(\omega,m)$ as given. We now turn to the planner's problem and analyze the optimal choice of $(\omega,m)$, assuming that the planner anticipates the Stage-2 outcome and that agents cannot change positions after the assignment is chosen. Since the unique Stage-2 equilibrium is guaranteed only when $\omega\in[\omega^I,1]$, we restrict the planner's choice of
$\omega$ to this interval. Thus, the planner's problem becomes
\begin{equation}
\max_{\omega\in[\omega^I,1]}\max_{m\in\mathcal{M}} \mathcal{W}(\omega,m).
\label{eq:planner_problem_active}
\end{equation}
We justify this restriction later in this section by discussing what the
planner's payoff might be when the planner chooses $\omega\in[0,\omega^I)$. Recall $q:=(I-\beta W)^{-1}\mathbf 1$ and write
$Q:=\mathbf 1^\top q$, i.e., the sum of Katz centrality indices. Using
\[
E^{*(m)}(\omega)
=
\frac{\sigma}{\omega}\bar z(\omega)\,q
-
\frac{1-\omega}{\omega}\big(\theta^{(m)}+\alpha\mathbf 1\big),
\]
we obtain
\[
\sum_j E^{*(m)}_j(\omega)
=
\frac{\sigma}{\omega}\bar z(\omega)\,Q
-
\frac{1-\omega}{\omega}
\big(\mathbf 1^\top\theta^{(m)}+n\alpha\big).
\]
Plugging this into Equation~\eqref{eq:planner_objective} yields
\begin{equation}\label{eq:planner-reduced-omega}
\mathcal W(\omega,m)
=
s\,\frac{\sigma}{\omega}\bar z(\omega)\,Q
-
s\,\frac{1-\omega}{\omega}\,n\alpha
+
\Big(1-\frac{s}{\omega}\Big)\mathbf 1^\top\theta^{(m)}.
\end{equation}
Since $\mathbf 1^\top\theta^{(m)}=b^\top a^{(m)}$ with
$b=M^\top\mathbf 1$, the assignment-relevant term is
\[
\Big(1-\frac{s}{\omega}\Big)b^\top a^{(m)}.
\]
Hence, conditional on $\omega\in[\omega^I,1]$, the planner's assignment problem
reduces to maximizing a one-dimensional criterion. The lemma below makes this
reduction explicit.

\begin{lemma}[One-dimensional reduction of the assignment problem]\label{lem:reduction-b}
Maintain Assumptions~\ref{ass:network-stability} and~\ref{ass:alpha-bound}, and
restrict the planner's choice of $\omega$ to $[\omega^I,1]$. The planner's objective
can be written as
\[
\mathcal W(\omega,m)
=
A(\omega)
+
\Big(1-\frac{s}{\omega}\Big)b^\top a^{(m)},
\]
where $A(\omega)$ does not depend on $m$. Consequently, conditional on $\omega$,
the planner chooses $m$ to maximize $b^\top a^{(m)}$ when $\omega>s$, to minimize
$b^\top a^{(m)}$ when $\omega<s$, and is indifferent over assignments when
$\omega=s$.
\end{lemma}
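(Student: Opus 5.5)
The plan is to substitute the closed-form equilibrium of Proposition~\ref{prop:stage2-omega} into the planner's objective \eqref{eq:planner_objective}, isolate the terms that involve the assignment, and then read off the sign of their coefficient.

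\emph{Step 1: aggregate equilibrium effort.} Fix $\omega\in[\omega^I,1]$ and any $m\in\mathcal M$. By Assumptions~\ref{ass:network-stability} and~\ref{ass:alpha-bound}, Proposition~\ref{prop:stage2-omega} applies. It gives the equilibrium effort vector
\[
E^{*(m)}(\omega)=\tfrac{\sigma}{\omega}\bar z(\omega)q-\tfrac{1-\omega}{\omega}\bigl(\theta^{(m)}+\alpha\mathbf 1\bigr).
\]
This profile is unique among all-active equilibria, so $\mathcal W(\omega,m)$ is well defined. Summing over positions with $\mathbf 1^\top$ yields
\[
\mathbf 1^\top E^{*(m)}(\omega)=\tfrac{\sigma}{\omega}\bar z(\omega)Q-\tfrac{1-\omega}{\omega}\bigl(\mathbf 1^\top\theta^{(m)}+n\alpha\bigr).
\]
Note that $q$ and $Q$ depend only on $(\beta,W)$ and not on $m$.

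\emph{Step 2: collect the assignment-dependent terms.} Substitute the sum from Step~1 into \eqref{eq:planner_objective}. The objective also contains the term $(1-s)\mathbf 1^\top\theta^{(m)}$. In \eqref{eq:planner_objective} the constant $(1-s)n\alpha$ has been dropped, and that constant is independent of $m$ in any case. The coefficient on $\mathbf 1^\top\theta^{(m)}$ is therefore
\[
(1-s)-s\tfrac{1-\omega}{\omega}=1-\tfrac{s}{\omega}.
\]
Every remaining term depends only on $\omega$ and the primitives, so we set
\[
A(\omega):=s\tfrac{\sigma}{\omega}\bar z(\omega)Q-s\tfrac{1-\omega}{\omega}n\alpha.
\]
If the $(1-s)n\alpha$ constant is kept, it is simply absorbed into $A(\omega)$. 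Finally, use \eqref{eq:theta_def} and \eqref{eq:b_def}:
\[
\mathbf 1^\top\theta^{(m)}=\mathbf 1^\top Ma^{(m)}=(M^\top\mathbf 1)^\top a^{(m)}=b^\top a^{(m)}.
\]
This gives the stated representation.

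\emph{Step 3: the conditional assignment problem.} For fixed $\omega$, the term $A(\omega)$ is constant across $m$. Maximizing $\mathcal W(\omega,\cdot)$ over $m$ is therefore equivalent to maximizing $c(\omega)\,b^\top a^{(m)}$, where $c(\omega)=1-s/\omega$. Since $\omega>0$, $c(\omega)$ is positive when $\omega>s$, negative when $\omega<s$, and zero when $\omega=s$. Multiplying by a positive constant preserves the maximizer. Multiplying by a negative constant turns maximization into minimization of $b^\top a^{(m)}$. A zero coefficient makes $\mathcal W(\omega,\cdot)$ constant in $m$, so the planner is indifferent. Existence of a maximizer or minimizer is immediate because $\mathcal M$ is finite.

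The only point needing care is the bookkeeping of the $\alpha$ constants and the confirmation that Proposition~\ref{prop:stage2-omega} holds uniformly in $m$. The latter is exactly what Assumption~\ref{ass:alpha-bound} guarantees, since it is stated for every assignment. Beyond this there is no real obstacle.
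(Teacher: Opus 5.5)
Your proposal is correct and follows the paper's argument: substitute the aggregate equilibrium effort from Proposition~\ref{prop:stage2-omega} into the objective to get \eqref{eq:planner-reduced-omega}, use $\mathbf 1^\top\theta^{(m)}=b^\top a^{(m)}$, and read off the sign of $1-s/\omega$. You also point out that \eqref{eq:planner_objective} omits the constant $(1-s)n\alpha$, and you handle it correctly by absorbing it into $A(\omega)$, since it does not depend on the assignment.
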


\begin{proof}
This follows directly from Equation~\eqref{eq:planner-reduced-omega} and
$\mathbf 1^\top\theta^{(m)}=b^\top a^{(m)}$.
\end{proof}
Note that Lemma~\ref{lem:reduction-b} states properties of the optimal assignment, rather than characterizing the planner’s assignment mechanism. Note also that the planner does not distort evaluation for its own sake. The distortion is useful because it changes the effort response induced by the relative benchmark. Proposition~\ref{prop:opt-assign-given-omega} provides a formal characterization.

\begin{proposition}[Optimal assignment]\label{prop:opt-assign-given-omega}
Maintain Assumptions~\ref{ass:network-stability} and~\ref{ass:alpha-bound}, and
restrict the planner's choice of $\omega$ to $[\omega^I,1]$. Fix
$\omega\in[\omega^I,1]$. Let $(i(1),\dots,i(n))$ be any ordering of agents such that
\begin{equation}\label{eq:agents-sorted-by-ability}
a_{i(1)}\le \cdots \le a_{i(n)},
\end{equation}
and let $(j(1),\dots,j(n))$ be any ordering of positions such that
\begin{equation}\label{eq:positions-sorted-by-b}
b_{j(1)}\le \cdots \le b_{j(n)}.
\end{equation}
Then an assignment is planner-optimal conditional on $\omega$ if and only if it has
the following form, up to arbitrary permutations within tied values of $\{a_i\}$
and/or tied values of $\{b_j\}$:
\begin{equation}\label{eq:opt-assignment-given-omega}
m^\star(i(k))=
\begin{cases}
j(k), & \text{if }\omega>s,\\
j(n+1-k), & \text{if }\omega<s,\\
\text{any bijection from $I$ to $J$,} & \text{if }\omega=s,
\end{cases}
\qquad k=1,\dots,n.
\end{equation}
\end{proposition}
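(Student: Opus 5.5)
By Lemma~\ref{lem:reduction-b}, for fixed $\omega\in[\omega^I,1]$ the planner's objective is $A(\omega)+\bigl(1-\frac{s}{\omega}\bigr)b^\top a^{(m)}$. Only the second term depends on $m$. So the whole statement reduces to characterizing the maximizers and minimizers of the linear assignment functional
\[
F(m):=b^\top a^{(m)}=\sum_{i\in I} b_{m(i)}a_i
\]
over bijections $m$. I will split by the sign of $1-s/\omega$.
\begin{itemize}
\item If $\omega=s$, the coefficient vanishes and every bijection is optimal. This is the third case.
\item If $\omega>s$, the coefficient is strictly positive, so the optimal $m$ are exactly the maximizers of $F$.
\item If $\omega<s$, the coefficient is strictly negative, so the optimal $m$ are exactly the minimizers of $F$.
\end{itemize}

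The core is the rearrangement inequality. I will prove it by an exchange argument, which also gives the ``only if'' direction.

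\emph{Maximization.} Take any $m$ and two agents $i,i'$ with $a_i<a_{i'}$ but $b_{m(i)}>b_{m(i')}$. Swapping their positions changes $F$ by
\[
(a_{i'}-a_i)\bigl(b_{m(i)}-b_{m(i')}\bigr)>0.
\]
Hence any maximizer admits no such strictly discordant pair. Conversely, every assignment with no strictly discordant pair is comonotone. After permuting within ties of $a$ and within ties of $b$, it coincides with $m^\star(i(k))=j(k)$. All such comonotone assignments give the same value of $F$: the sorted $a$-sequence is paired with the sorted $b$-sequence, and permutations inside tied blocks do not change the sum. Since a maximizer exists ($\mathcal M$ is finite), this value is the maximum. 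So the comonotone assignments are exactly the maximizers.

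\emph{Minimization.} This is the same argument applied to $-b$, or equivalently with the exchange inequality reversed. The minimizers are exactly the anti-comonotone assignments $m^\star(i(k))=j(n+1-k)$, up to tie permutations.

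The main obstacle is making the ``up to arbitrary permutations within tied values'' clause precise. I need to show that the set of assignments with no strictly discordant pair coincides with the set obtainable from the sorted pairing by permuting within tied $a$-blocks and tied $b$-blocks, and this must hold when ties occur in both vectors simultaneously.

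My first attempt: given a concordant $m$, define an ordering $i(\cdot)$ of agents by listing them in increasing order of $b_{m(i)}$, breaking ties by $a_i$. Concordance makes this ordering nondecreasing in $a$ as well, so it is a valid choice in \eqref{eq:agents-sorted-by-ability}. Then set $j(k):=m(i(k))$, which is nondecreasing in $b$ and so a valid choice in \eqref{eq:positions-sorted-by-b}. This exhibits $m$ in the stated form for some admissible orderings. Since the proposition quantifies over \emph{any} orderings satisfying \eqref{eq:agents-sorted-by-ability}--\eqref{eq:positions-sorted-by-b}, this settles the characterization. The negative-assortative case is handled symmetrically by ordering positions in decreasing $b$.
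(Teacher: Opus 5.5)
Your proposal is correct and follows the paper's own proof. Both reduce via Lemma~\ref{lem:reduction-b} to maximizing or minimizing $b^\top a^{(m)}$ according to the sign of $1-s/\omega$, and then use the pairwise swap formula to rule out inversions in any optimizer. Your extra step, that all comonotone (resp.\ anti-comonotone) assignments give the same value and that an optimizer exists over the finite set $\mathcal M$, supplies the sufficiency direction, which the paper's proof leaves implicit.
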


\proofinapp{app:proof-opt-assign-given-omega}

Proposition~\ref{prop:opt-assign-given-omega} has a simple intuition. Effective advantage raises output directly, but it also depresses effort by making winning easier. When $\omega>s$, the direct effect dominates, so the planner prefers an assignment that maximizes total effective advantage. We call this a \emph{positive assortative assignment}. When $\omega<s$, the incentive effect dominates, so the planner instead prefers an assignment that minimizes total effective advantage. We call this a \emph{negative assortative assignment}. Note that it is possible that $s<\omega^I$. In that case, the optimal assignment is always positive assortative, given we restrict the planner's choice $\omega\in[\omega^I,1]$.

\paragraph{Empirical implication for assignment sorting.} Proposition~\ref{prop:opt-assign-given-omega} implies a sharp prediction for the sorting of agents across positions. Let ability be measured by $a_i$ and let position spillover centrality be measured by $b_j$. When $\omega>s$, the planner assigns high-ability agents to high-$b_j$ positions, so the rank correlation between $a_i$ and $b_{m(i)}$ should be positive. When $\omega<s$, the planner assigns high-ability agents to low-$b_j$ positions, so the same rank correlation should be negative. Thus the sign of assortative matching should vary with the gap between the evaluation rule and the true production technology. This implication is distinct from sorting on own-position advantage. Since $b_j=\sum_r M_{rj}$ measures the total spillover contribution of position $j$, positive assortative assignment means sorting by a position's aggregate contribution to effective advantage, not necessarily by its diagonal term $M_{jj}$. Empirically, this distinction matters when some positions generate large spillovers for others even if their own direct advantage is not especially large.

The prediction that effort responds to the weight placed on measured performance is consistent with evidence from personnel settings. \citet{Lazear2000} shows that the introduction of piece rates substantially increased productivity in a firm, while \citet{BandieraBarankayRasul2007} show that managerial performance pay changed how managers allocated attention across workers. More broadly, \citet{AshrafBandieraJack2014} show that the design of financial and non-financial rewards affects agents' performance in a task with social spillovers.

\vspace{15pt} 

Define \(B^+:=b^\top a^{(m^+)}\) and \(B^-:=b^\top a^{(m^-)}\), where \(m^+\) and \(m^-\) denote positive- and negative-assortative assignments, respectively. Throughout this paper, we assume \(B^+>B^-\), since otherwise the assignment problem would be trivial. Proposition~\ref{prop:opt-assign-given-omega} implies that the planner selects the
assignment attaining $B^+$ when $\omega\ge s$ and the assignment attaining $B^-$
when $\omega\le s$. Accordingly, over the active-contest region
$\omega\in[\omega^I,1]$, the planner's value on each relevant branch is
\begin{equation}\label{eq:V-plus}
V^+(\omega):=
s\,\frac{\sigma}{\omega}\bar z(\omega)\,Q
-\;s\,\frac{1-\omega}{\omega}\,n\alpha
+\Big(1-\frac{s}{\omega}\Big)B^+,
\qquad \omega\in[\max\{s,\omega^I\},1],
\end{equation}
and, if $\omega^I\le s$,
\begin{equation}\label{eq:V-minus}
V^-(\omega):=
s\,\frac{\sigma}{\omega}\bar z(\omega)\,Q
-\;s\,\frac{1-\omega}{\omega}\,n\alpha
+\Big(1-\frac{s}{\omega}\Big)B^-,
\qquad \omega\in[\omega^I,s].
\end{equation}
If $\omega^I>s$, the negative-assortative branch is empty.

Each branch admits a closed-form stationary point. Recall that $\underline\omega=\frac{v\sigma\sqrt{2\pi}}{U}$.
For any $B\in\{B^+,B^-\}$, define
\begin{equation}\label{eq:K-z-omegahat}
K(B):=\frac{B+n\alpha}{\sigma Q},
\qquad
z(B):=\frac{K(B)+\sqrt{K(B)^2+4}}{2},
\qquad
\widehat\omega(B):=\underline\omega\exp\!\Big(\frac{z(B)^2}{2}\Big).
\end{equation}
It can be shown that for a given value of $B$, the candidate stationary point is summarized by
the pair $(z(B),\widehat\omega(B))$. 

Define the feasible positive-assortative candidate by
\begin{equation}\label{eq:omega-plus-star}
\omega_+^\ast
:=
\min\left\{1,\max\left\{\max\{s,\omega^I\},\widehat\omega(B^+)\right\}\right\}.
\end{equation}

Define the feasible negative-assortative candidate by
\begin{equation}\label{eq:omega-minus-star}
\omega_-^\ast
:=
\min\left\{s,\max\left\{\omega^I,\widehat\omega(B^-)\right\}\right\}.
\end{equation}

Since $B^+\ge B^-$, we have $\widehat\omega(B^+)\ge \widehat\omega(B^-)$. Moreover,
by definition,
\[
\omega_+^\ast
\ge
\max\{s,\omega^I\},
\]
since $s,\omega^I\le 1$. And
\[
\omega_-^\ast
\le s.
\]
Therefore,
\[
\omega_+^\ast\ge s\ge \omega_-^\ast.
\]
Hence the feasible positive-assortative candidate is weakly larger than the
feasible negative-assortative candidate. Additionally, let $m^+$ denote a positive assortative assignment and $m^-$ denote a negative assortative assignment. Write a policy as $(\omega,m)$. The following lemma and proposition then characterize the planner's optimal policy in different cases.

\begin{lemma}[Optimal policy when $s<\omega^I$]\label{lem:positive-only}
Maintain Assumptions~\ref{ass:network-stability} and~\ref{ass:alpha-bound}, and
restrict the planner's choice of $\omega$ to $[\omega^I,1]$. If $s<\omega^I$,
then the negative-assortative branch is empty, and the policy $(\omega_+^\ast,m^+)$ is optimal for the planner.
\end{lemma}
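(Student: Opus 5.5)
The plan is to reduce the problem to a one-dimensional maximization over $\omega$ on the single positive-assortative branch, and then show that this branch value is single-peaked in $\omega$ with peak at $\widehat\omega(B^+)$. The optimum over $[\omega^I,1]$ is then the projection $\omega_+^\ast$ of that peak onto the interval.

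\emph{Step 1: only the positive branch is relevant.} Since $s<\omega^I$, every admissible $\omega\in[\omega^I,1]$ satisfies $\omega>s$. Lemma~\ref{lem:reduction-b} and Proposition~\ref{prop:opt-assign-given-omega} then say that, for each such $\omega$, the optimal assignments are exactly the positive assortative ones, and $m^+$ attains $b^\top a^{(m)}=B^+$. The interval $[\omega^I,s]$ on which $V^-$ would be defined is empty. Hence
\[
\max_{\omega\in[\omega^I,1]}\max_{m}\mathcal W(\omega,m)=\max_{\omega\in[\omega^I,1]}V^+(\omega),
\]
and it remains to show that $\omega_+^\ast=\min\{1,\max\{\omega^I,\widehat\omega(B^+)\}\}$ maximizes $V^+$ on this interval. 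This uses $\max\{s,\omega^I\}=\omega^I$.

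\emph{Step 2: sign of $dV^+/d\omega$.} Write $C:=B^++n\alpha$ and regroup $V^+$ as
\[
V^+(\omega)=s n\alpha+B^+ +\frac{s}{\omega}\bigl(\sigma Q\,\bar z(\omega)-C\bigr).
\]
From $\bar z(\omega)^2=2\log(\omega/\underline\omega)$ we get $\bar z'(\omega)=1/(\omega\bar z(\omega))$. This is well defined on the interval because $\omega^I>\underline\omega$ gives $\bar z>0$. Differentiating yields
\[
\frac{dV^+}{d\omega}=\frac{s}{\omega^2}\Bigl(C-\sigma Q\bar z+\frac{\sigma Q}{\bar z}\Bigr)
=\frac{s\sigma Q}{\omega^2\bar z}\bigl(-\bar z^2+K(B^+)\bar z+1\bigr).
\]
Note that $Q>0$, since $q=\sum_k\beta^kW^k\mathbf 1\ge\mathbf 1$ under Assumption~\ref{ass:network-stability}. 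So the sign of the derivative equals the sign of the quadratic $g(z)=-z^2+Kz+1$ evaluated at $z=\bar z(\omega)$. This quadratic has roots of opposite sign, because their product is $-1$. Its unique positive root is $z(B^+)$ as defined in \eqref{eq:K-z-omegahat}, and this holds whatever the sign of $K$. Therefore $g>0$ on $(0,z(B^+))$ and $g<0$ on $(z(B^+),\infty)$.

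\emph{Step 3: unimodality and projection.} Since $\bar z$ is strictly increasing in $\omega$, and $\bar z(\omega)=z(B^+)$ exactly when $\omega=\widehat\omega(B^+)$, the function $V^+$ is strictly increasing for $\omega<\widehat\omega(B^+)$ and strictly decreasing for $\omega>\widehat\omega(B^+)$. Restricting to $[\omega^I,1]$ gives three cases. If $\widehat\omega(B^+)\le\omega^I$, the maximizer is $\omega^I$. If $\widehat\omega(B^+)\ge1$, it is $1$. Otherwise it is $\widehat\omega(B^+)$ itself. In all three cases the maximizer is $\omega_+^\ast$, and together with Step 1 this shows that $(\omega_+^\ast,m^+)$ is optimal.

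The only step needing care is Step 2. It requires the change of variables to $\bar z$ and the observation that the bracket is a quadratic in $\bar z$ with exactly one positive root. Everything else follows directly from the earlier lemma and proposition.
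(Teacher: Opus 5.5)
Your proposal is correct and follows the paper's proof essentially step for step. Both use Proposition~\ref{prop:opt-assign-given-omega} to reduce to the positive branch, regroup $V^+$ the same way, differentiate using $\bar z'(\omega)=1/(\omega\bar z(\omega))$, and project $\widehat\omega(B^+)$ onto $[\omega^I,1]$. The only differences are cosmetic: you sign the derivative via the quadratic $-z^2+K(B^+)z+1$, whose roots have opposite sign, whereas the paper notes that $-z+K(B^+)+1/z$ is strictly decreasing on $z>0$; and you state $Q>0$ explicitly, which the paper leaves implicit.
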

\proofinapp{app:proof-lemma3}

We treat the lemma above as characterizing a special case in which a unique closed-form equilibrium possibly arises only when the evaluation weight on effort is higher than the weight of effort in true output. This could mean that the weight of effort in true output is too low. The proposition below gives the paper's main result. It characterizes the more general case in which the planner chooses $\omega$ from a wider range.

\begin{proposition}[Optimal policy when $\omega^I\le s$]\label{prop:omega-cutoff}
Maintain Assumptions~\ref{ass:network-stability} and~\ref{ass:alpha-bound}, and
restrict the planner's choice of $\omega$ to $[\omega^I,1]$. Suppose
$\omega^I\le s$.

\emph{Case 1.} If $1<\widehat\omega(B^-)\le \widehat\omega(B^+)$, then the policy
$(\omega_+^\ast,m^+)$ is optimal for the planner.

\emph{Case 2.} If $\widehat\omega(B^+)<\omega^I\le s$, then:
\begin{enumerate}[topsep=2pt,itemsep=2pt,parsep=0pt,partopsep=0pt]
\item If $s=\omega^I$, every policy $(\omega^I,m)$ is optimal, regardless of the assignment $m$.
\item If $s>\omega^I$, the policy
$(\omega_-^\ast,m^-)$
is optimal for the planner.
\end{enumerate}

\emph{Case 3.} If $\widehat\omega(B^-)\le 1$ and $\omega^I\le \widehat\omega(B^+)$,
then there exists a cutoff $s^\star$ such that:
\begin{enumerate}[topsep=2pt,itemsep=2pt,parsep=0pt,partopsep=0pt]
\item If $s<s^\star$, the policy
$(\omega_+^\ast,m^+)$
is optimal for the planner.
\item If $s>s^\star$, the policy
$(\omega_-^\ast,m^-)$
is optimal for the planner.
\item If $s=s^\star$, the planner is indifferent between the two policies.
\end{enumerate}

The cutoff $s^\star$ is defined as follows. Let
\[
\widetilde T^-
:=
\begin{cases}
n\alpha+\dfrac{\sigma Q}{\widehat\omega(B^-)z(B^-)},
& \text{if } \omega^I\le \widehat\omega(B^-),\\[1.2em]
n\alpha+\dfrac{\sigma Q\bar z(\omega^I)-B^- -n\alpha}{\omega^I},
& \text{if } \widehat\omega(B^-)<\omega^I,
\end{cases}
\]
and
\[
\widetilde T^+
:=
\begin{cases}
n\alpha+\dfrac{\sigma Q}{\widehat\omega(B^+)z(B^+)},
& \text{if } \widehat\omega(B^+)\le 1,\\[1.2em]
\sigma Q\bar z(1)-B^+,
& \text{if } \widehat\omega(B^+)>1.
\end{cases}
\]

Then define
\[
s^\star
:=
\dfrac{B^+-B^-}{\widetilde T^- - \widetilde T^+}.
\]
In particular, $\widetilde T^- - \widetilde T^+>0$, given $B^+>B^-$ as assumed.
\end{proposition}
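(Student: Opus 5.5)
The plan is to reduce everything to the one-variable branch values $V^\pm$ from \eqref{eq:V-plus}--\eqref{eq:V-minus}, show that each is unimodal in $\omega$ with peak at $\widehat\omega(B^\pm)$, and then compare the two branch maxima, which turn out to be affine in $s$.

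\textbf{Step 1: a single-peaked reduced objective.} By Lemma~\ref{lem:reduction-b} and Proposition~\ref{prop:opt-assign-given-omega}, the planner's problem is $\max\{\max V^+,\max V^-\}$ over the respective domains. For a generic $B$ I would rewrite the branch value as
\[
V_B(\omega)=B+s\,n\alpha-s\,G_B(\omega),\qquad G_B(\omega):=\frac{B+n\alpha-\sigma Q\,\bar z(\omega)}{\omega}.
\]
Maximizing $V_B$ is then the same as minimizing $G_B$. Using $\bar z(\omega)^2=2\log(\omega/\underline\omega)$ gives $d\bar z/d\omega=1/(\omega\bar z)$, and hence
\[
G_B'(\omega)=\frac{1}{\omega^2}\Big(\sigma Q\big(\bar z-1/\bar z\big)-(B+n\alpha)\Big).
\]
Since $\bar z-1/\bar z$ is strictly increasing in $\omega$, $G_B'$ changes sign exactly once, from negative to positive. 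The crossing is where $\bar z-1/\bar z=K(B)$, that is, where $\bar z=z(B)$, which is $\omega=\widehat\omega(B)$.

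So $V_B$ is strictly quasi-concave with peak at $\widehat\omega(B)$. Its maximizer over any interval is the projection of $\widehat\omega(B)$ onto that interval, which is exactly how $\omega_\pm^\ast$ are defined.

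\textbf{Step 2: Cases 1 and 2, and the boundary point $\omega=s$.} The key observation is that at $\omega=s$ the assignment term $(1-s/\omega)B$ vanishes, so $V^+(s)=V^-(s)$.

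In Case 1, $\widehat\omega(B^\pm)>1$, so both branch values are increasing on $[\omega^I,1]$. The best point on the negative branch is therefore $\omega=s$, and its value $V^-(s)=V^+(s)\le V^+(\omega_+^\ast)$. Hence $(\omega_+^\ast,m^+)$ is optimal.

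In Case 2, $\widehat\omega(B^\pm)<\omega^I$, so both branch values are decreasing. Symmetrically, $V^+$ is maximized at $s$, and $V^+(s)=V^-(s)\le V^-(\omega^I)=V^-(\omega_-^\ast)$. If $s=\omega^I$, the domain collapses to the single point $\omega^I=s$, where every assignment yields the same value.

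\textbf{Step 3: Case 3 and the cutoff $s^\star$.} First ignore the $s$-constraints. Under the hypotheses $\widehat\omega(B^-)\le1$ and $\widehat\omega(B^+)\ge\omega^I$, the projection of $\widehat\omega(B^\pm)$ onto $[\omega^I,1]$ is the point appearing in the definition of $\widetilde T^\pm$. Substituting into $V_B$ gives $\max V_B=B^\pm+s\,\widetilde T^\pm$, where $\widetilde T(B)=n\alpha-\min_{[\omega^I,1]}G_B$. This works in each sub-case:
\begin{itemize}
\item at an interior stationary point, $B+n\alpha-\sigma Qz=-\sigma Q/z$, giving $\widetilde T=n\alpha+\sigma Q/(\widehat\omega z)$;
\item at the corner $\omega=1$, direct substitution gives $\sigma Q\bar z(1)-B^+$;
\item at the corner $\omega=\omega^I$, direct substitution gives $n\alpha+(\sigma Q\bar z(\omega^I)-B^--n\alpha)/\omega^I$.
\end{itemize}

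Because $G_B(\omega)$ is strictly increasing in $B$ (by $(B'-B)/\omega>0$), $\min G_B$ is strictly increasing in $B$. Hence $\widetilde T^->\widetilde T^+$ whenever $B^+>B^-$. The difference of the two maxima is
\[
D(s)=(B^+-B^-)-s\big(\widetilde T^--\widetilde T^+\big),
\]
which is affine and strictly decreasing in $s$, with its root at $s^\star$.

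\textbf{Step 4: reinstating the $s$-constraints (main obstacle).} The remaining work is to show that the constraints $\omega\ge s$ on the positive branch and $\omega\le s$ on the negative branch do not overturn the comparison in Step 3.

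If neither constraint binds, $D(s)$ is exactly the gap between the two branch values, and the sign conclusion follows.

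If the constraint binds on the positive branch, i.e.\ $\widehat\omega(B^+)<s$, then by quasi-concavity $\max V^+=V^+(s)$. This equals $V^-(s)\le\max V^-$, so the negative policy weakly wins. I must check this is consistent with $s>s^\star$. I would show it directly: the unconstrained positive value satisfies $B^++s\widetilde T^+\ge V^+(s)=V^-(s)$, and the unconstrained negative peak is feasible, so $D(s)\le0$ there. I expect handling this consistency carefully, including ties, to be the main obstacle.

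The symmetric argument applies when $\widehat\omega(B^-)>s$ on the negative branch. At $s=s^\star$ the two values coincide, which gives the stated indifference.
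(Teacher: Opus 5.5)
Your Steps 1--3 follow the paper's proof closely: unimodality of $V_B$ with peak at $\widehat\omega(B)$, the identity $V^+(s)=V^-(s)$, the affine gap $D(s)$, and $\widetilde T^->\widetilde T^+$ from $G_{B^+}-G_{B^-}=(B^+-B^-)/\omega$. The gap is where you expected it, in Step 4, and the argument you propose there does not close.

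When $\widehat\omega(B^+)<s$, you write $B^++s\widetilde T^+\ge V^+(s)=V^-(s)$. Feasibility of the negative peak gives $B^-+s\widetilde T^-\ge V^-(s)$. These two inequalities only say that $V^-(s)$ is a common lower bound for both unconstrained optima. They say nothing about the sign of $D(s)=(B^++s\widetilde T^+)-(B^-+s\widetilde T^-)$. That sign is essential. You have shown which policy wins when a branch constraint binds, i.e.\ when $s$ lies outside $[\bar\omega^-,\bar\omega^+]$ (the projections of $\widehat\omega(B^\mp)$ onto $[\omega^I,1]$). But those conclusions are not yet linked to the sign of $s-s^\star$, which is what the proposition asserts. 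Your claim that the two values coincide at $s=s^\star$ also presupposes that $s^\star$ lies in the non-binding region $[\bar\omega^-,\bar\omega^+]$, and you have not shown this.

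The fix is to compare the branches at $\omega=\bar\omega^+$ rather than at $\omega=s$. For fixed $\omega$, $V_{B^+}(\omega)-V_{B^-}(\omega)=(B^+-B^-)(1-s/\omega)$, which is strictly negative at $\omega=\bar\omega^+<s$. Since $\bar\omega^-$ maximizes $V_{B^-}$ over $[\omega^I,1]$, you also have $V_{B^-}(\bar\omega^+)\le V_{B^-}(\bar\omega^-)$. Together these give $D(s)=V_{B^+}(\bar\omega^+)-V_{B^-}(\bar\omega^-)<0$. Symmetrically, for $s<\bar\omega^-$, compare at $\omega=\bar\omega^->s$ to get $D(s)>0$.

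The paper reaches the same conclusion slightly differently. It treats the output weight as a variable $r$ and evaluates the affine gap $F(r)$ at the switching values $r=\bar\omega^-$ and $r=\bar\omega^+$. At these values the assignment coefficient vanishes at the corresponding reference point, so the two branches coincide there. This yields $F(\bar\omega^-)\ge 0\ge F(\bar\omega^+)$, hence $s^\star\in[\bar\omega^-,\bar\omega^+]$. Strict monotonicity of $F$ then places every binding case strictly on the correct side of $s^\star$, so the tie issues you worried about do not arise.

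A minor point: in Case 2 with $s=\omega^I$, only the negative branch collapses to a point. The positive branch is still $[\omega^I,1]$, and you need its strict decrease there, which you already have, to conclude that $\omega=\omega^I$ is the unique optimal weight.
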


\proofinapp{app:proof-omega-cutoff}

Proposition~\ref{prop:omega-cutoff} highlights that the planner generally does not evaluate performance by true output alone, i.e. set $\omega=s$. Instead, the optimal evaluation may overweight effort, with $\omega>s$, or overweight effective advantage, with $\omega<s$, depending on the relative importance of effort in true output. Interestingly, the result suggests that the planner has an incentive to choose a higher evaluation weight, $\omega_+^\ast$, when the weight of effort in true output is relatively low, i.e. $s<s^\star$, and vice versa. The special cases, such as Lemma~\ref{lem:positive-only} and Case 1 in Proposition~\ref{prop:omega-cutoff}, do not contradict this logic. The case in Lemma~\ref{lem:positive-only}, where $s<\omega^I$, and Case 1 in Proposition~\ref{prop:omega-cutoff}, where $1<\widehat\omega(B^-)\le \widehat\omega(B^+)$, can both be understood as cases in which $s$ is too low relative to the planner's feasible and optimal candidate value of $\omega$, and hence the higher value $\omega_+^\ast$ is chosen.

The intuition for this result must be understood jointly with the planner's assignment choice. Equation~\eqref{eq:effort-derivative-omega} suggests that, roughly speaking, a higher $\omega$ tends to encourage effort by agents with high effective advantage, whereas a lower $\omega$ tends to encourage effort by agents with low effective advantage. When effective advantage is relatively important for output (the $s<s^*$ case), the planner wants to maximize total advantage and therefore chooses a positive assortative assignment. In that case, a higher $\omega$ helps raise total effort by strengthening incentives for agents who are already rich in effective advantage. By contrast, when effort is relatively important for output (the $s>s^\star$ case), the planner no longer has a sufficiently strong incentive to increase total effective advantage. It therefore chooses a negative assortative assignment, which creates relatively disadvantaged agents and allows the planner to use a lower $\omega$ to induce them to exert more effort in order to make up for their disadvantage.

While Lemma~\ref{lem:positive-only} and Proposition~\ref{prop:omega-cutoff} fully characterize the planner's optimal policies, they involve several special cases, which may distract from the main idea. We therefore focus on the most natural case, in which $\omega^I\le \hat \omega(B^-)\le \hat \omega(B^+)\le 1$. In this case, the planner's candidate evaluation weights are interior. The following assumption guarantees this ordering.  

\begin{assumption}\label{ass:interior-candidate-ordering} Assume
\[
\sigma Q\left[
\bar z(\omega^I)-\frac{1}{\bar z(\omega^I)}
\right]
\le
B^-+n\alpha
<
B^+ + n\alpha
\le
\sigma Q\left[
\bar z(1)-\frac{1}{\bar z(1)}
\right],
\]
where
\[
\bar z(\omega):=\sqrt{2\log\left(\frac{\omega}{\underline\omega}\right)}.
\]
\end{assumption}

Assumption~\ref{ass:interior-candidate-ordering} requires the aggregate non-discretionary component to be neither too low nor too high relative to the incentive environment. Intuitively, it means that effective advantage is important enough to affect the planner's evaluation choice, but not so dominant that the planner always wants to put the maximal feasible weight on effort. Technically, it implies $\omega^I\le \hat \omega(B^-)\le \hat \omega(B^+)\le 1$, as shown in Appendix~\ref{app:below-omegaI}. Appendix~\ref{app:below-omegaI} also discusses when Proposition~\ref{prop:omega-cutoff} continues to hold under Assumption~\ref{ass:interior-candidate-ordering}, even if the planner can choose $\omega$ below $\omega^I$, as long as an equilibrium exists for those values. Unless explicitly stated otherwise, the rest of the paper maintains
Assumptions~\ref{ass:network-stability}, \ref{ass:alpha-bound}, and
\ref{ass:interior-candidate-ordering}.

Given Assumptions~\ref{ass:network-stability}--\ref{ass:interior-candidate-ordering}, we can clearly analyze how the planner's cutoff strategy depends on both the spillover structure $M$ and the competition structure $W$.

\begin{corollary}[Comparative statics of the cutoff]\label{cor:sstar-comparative}
Maintain Assumptions~\ref{ass:network-stability}, \ref{ass:alpha-bound}, and
\ref{ass:interior-candidate-ordering}. Suppose $\underline\omega e^{1/2}\le \omega^I\le s$. Then the cutoff $s^\star$ is weakly decreasing in $Q$. Moreover, holding $B^-$ fixed, $s^\star$ is weakly increasing in $B^+$.
\end{corollary}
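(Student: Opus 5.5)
Under Assumption~\ref{ass:interior-candidate-ordering} we have $\omega^I\le\widehat\omega(B^-)\le\widehat\omega(B^+)\le 1$, so we are in Case~3 of Proposition~\ref{prop:omega-cutoff}, and both $\widetilde T^\pm$ take their interior form. The plan is therefore to work directly with
\[
s^\star=\frac{B^+-B^-}{\sigma Q\,[\,g(K^-)-g(K^+)\,]},\qquad g(K):=\frac{1}{\widehat\omega z},\quad K^\pm:=K(B^\pm)=\frac{B^\pm+n\alpha}{\sigma Q},
\]
where $z=z(K)=\tfrac{K+\sqrt{K^2+4}}{2}$ and $\widehat\omega=\underline\omega e^{z^2/2}$. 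The key step is to compute the derivative of $g$ in $K$ and then rewrite the denominator as an integral over $B$.

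First I will differentiate $g$. Since $z(K)$ is the positive root of $z-1/z=K$, implicit differentiation gives $dz/dK=z^2/(z^2+1)$. Write $h(z)=e^{-z^2/2}/z$, so that $g=h/\underline\omega$. Then $h'(z)=-e^{-z^2/2}(1+z^{-2})$, and hence
\[
g'(K)=-\frac{1}{\underline\omega}e^{-z(K)^2/2}.
\]
Next, substitute $dK=dB/(\sigma Q)$ in $\sigma Q\,[g(K^-)-g(K^+)]=-\sigma Q\int_{K^-}^{K^+}g'(K)\,dK$. This gives
\[
D:=\widetilde T^--\widetilde T^+=\frac{1}{\underline\omega}\int_{B^-}^{B^+}\exp\!\Big(-\tfrac12 z\big(\tfrac{B+n\alpha}{\sigma Q}\big)^2\Big)\,dB .
\]
Positivity of $D$ is immediate from this representation. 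The condition $\underline\omega e^{1/2}\le\omega^I$ will be invoked to guarantee that $z\ge 1$, equivalently $K\ge 0$, throughout $[B^-,B^+]$. This places all the relevant candidates on the branch where the interior formulas (and the derivations of Proposition~\ref{prop:omega-cutoff}) apply. I will check that this is consistent with $\omega^I\le\widehat\omega(B^-)$.

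\emph{Monotonicity in $Q$.} Fix $B$. As $Q$ rises, $K$ falls, so $z$ falls and the integrand $e^{-z^2/2}$ rises. The limits of integration do not depend on $Q$. Hence $D$ is weakly increasing in $Q$, and since the numerator $B^+-B^-$ does not depend on $Q$, $s^\star$ is weakly decreasing in $Q$.

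\emph{Monotonicity in $B^+$.} Write $s^\star=1/\bar\iota$, where
\[
\bar\iota=\frac{1}{B^+-B^-}\int_{B^-}^{B^+}\frac{e^{-z(K(B))^2/2}}{\underline\omega}\,dB
\]
is the average of the integrand over $[B^-,B^+]$. The integrand is decreasing in $B$, because $z$ is increasing in $K$ and $K$ is increasing in $B$. The average of a decreasing function over an interval with a fixed left endpoint is weakly decreasing in the right endpoint. Therefore $\bar\iota$ weakly decreases as $B^+$ increases, and $s^\star$ weakly increases.

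The main obstacle is the derivative computation. It must produce the clean form $g'=-e^{-z^2/2}/\underline\omega$, with the factors $(1+z^{-2})$ and $z^2/(z^2+1)$ cancelling exactly. I will verify this cancellation carefully, and confirm that Assumption~\ref{ass:interior-candidate-ordering} keeps the interior expressions for $\widetilde T^\pm$ valid as $Q$ and $B^+$ vary locally, so that no switch to the corner formulas occurs.
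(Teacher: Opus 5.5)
Your argument is correct. For the $B^+$ claim it matches the paper's argument. Your computation $g'(K)=-e^{-z^2/2}/\underline\omega$ is the paper's identity $\partial\widetilde T(B)/\partial B=-1/\widehat\omega(B)$ after the change of variables $dK=dB/(\sigma Q)$, giving $\widetilde T^--\widetilde T^+=\int_{B^-}^{B^+}\widehat\omega(B)^{-1}\,dB$. Both of you then use that the average of a decreasing integrand falls as the right endpoint rises.

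For the $Q$ claim you take a different route. The paper applies the envelope theorem to $\widetilde T(B)=\max_\omega G(B,\omega)$ to get $\partial\widetilde T(B)/\partial Q=\sigma z(B)/\widehat\omega(B)$. The $Q$-derivative of the denominator is then $(\sigma/\underline\omega)\bigl[z(B^-)e^{-z(B^-)^2/2}-z(B^+)e^{-z(B^+)^2/2}\bigr]$, which the paper signs using that $ze^{-z^2/2}$ is decreasing on $z\ge 1$. You instead note that the limits of your integral do not depend on $Q$ and sign the integrand pointwise. Your normalized integral $\bar\iota$ is exactly $\Phi(Q)$ of Lemma~\ref{lem:Q-cutoff}, and your reasoning is the one the paper uses in that lemma's proof.

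The two computations agree. Differentiating under your integral gives $\frac{1}{Q\underline\omega}\int_{B^-}^{B^+}e^{-z^2/2}\frac{z^2(z^2-1)}{z^2+1}\,dB$, which integrates to the paper's endpoint expression, so the same sign condition $z\ge 1$ drives both. Your version is more elementary: one object handles both comparative statics, and no envelope argument or differentiation in $Q$ is needed. It also shows that the operative condition is $B^-+n\alpha\ge 0$, which does not depend on $Q$. The paper's version yields an explicit formula for $\partial\widetilde T(B)/\partial Q$ at a single $B$, which it reuses in the proof of Proposition~\ref{prop:delta-monotone-Q}.

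One correction concerns your account of where the hypothesis $\underline\omega e^{1/2}\le\omega^I$ enters. You say it keeps the interior formulas valid, but Assumption~\ref{ass:interior-candidate-ordering} already does that, through $\omega^I\le\widehat\omega(B^-)\le\widehat\omega(B^+)\le 1$. The hypothesis is actually used in your $Q$ step. The chain $\underline\omega e^{1/2}\le\omega^I\le\widehat\omega(B^-)$ gives $z(B^-)\ge 1$, so the second inequality is needed, not merely ``consistent''. This yields $B+n\alpha\ge 0$ on $[B^-,B^+]$, which is exactly what makes $K=(B+n\alpha)/(\sigma Q)$ weakly decreasing in $Q$. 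Wherever $B+n\alpha<0$, your sentence ``as $Q$ rises, $K$ falls'' would be false and the integrand would move the wrong way. You should state explicitly that this is where the hypothesis is used.
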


\proofinapp{app:proof-corollary}

This follows the same intuition as Proposition~\ref{prop:omega-cutoff}. Recall that $Q$ is the sum of Katz centralities induced by $W$, and a higher $Q$ corresponds to more intense competition. When competition is more intense, the planner is more likely to choose a negative assortative assignment together with a lower $\omega$, because a denser competition structure strengthens disadvantaged agents' incentives in this case to exert effort to make up for their disadvantage. Also, holding $B^-$ fixed, a higher $B^+$ raises the cutoff, because for a given type profile, this corresponds to a more uneven spillover structure and makes it more attractive for the planner to maximize total effective advantage, even at the cost of inducing less effort.

Overall, the planner faces a tradeoff between increasing total effective advantage and inducing more effort, and the optimal policy depends on the structures of competition and advantage spillovers. The example below helps illustrate this tradeoff more clearly.

\paragraph{Empirical implications of the cutoff.} Corollary~\ref{cor:sstar-comparative} gives cross-environment predictions. Since $s^\star$ is decreasing in $Q$, organizations or teams with more intense benchmark competition should be more likely to choose a low evaluation weight and a negative assortative assignment, all else equal. In such environments, assignment is used less to maximize total effective advantage and more to create incentives for disadvantaged agents to exert effort. The corollary also implies that, holding $B^-$ fixed, a larger $B^+$ raises $s^\star$. Hence environments with more unequal spillover opportunities should be more likely to use positive assortative assignment. If placing a high-type agent in the most central position generates a large increase in total effective advantage, the planner is more willing to tolerate weaker effort incentives in order to exploit those spillovers.

The sorting implication is more specific to the model. Existing field evidence nevertheless shows that incentive schemes can affect both productivity and the composition of teams. In particular, \citet{BandieraBarankayRasul2013} show that team incentives affect both effort and endogenous team composition. The model's prediction that the sign of assignment sorting should change with the gap between the evaluation rule and the true production technology is, to our knowledge, a novel implication that calls for further empirical work.

\begin{example}
Consider three agents with abilities
\[
(a_1,a_2,a_3)=(0,1,2),
\]
and fix
\[
\alpha=0,\qquad \sigma=1,\qquad U=10,\qquad v=0.2,\qquad \beta=0.25,\qquad s=0.25.
\]
Then
\[
\underline\omega=\frac{v\sigma\sqrt{2\pi}}{U}\approx 0.050.
\]

We use a star network (\(W^S\)), in which positions 1 and 3 compete only with position 2, to represent the less dense competition structure, and a complete network (\(W^C\)), in which each agent competes with both of the others, to represent the dense competition structure. For the spillover structure, we use a star-shaped matrix (\(M^S\)), in which one central position receives and transmits much stronger spillovers than the others, to represent the relatively uneven case, and a symmetric matrix with similar entries across positions (\(M^C\)) to represent the relatively even case. For all cases discussed below, one can show that there exists some $\omega^I$
such that Assumptions~\ref{ass:network-stability}--\ref{ass:interior-candidate-ordering}
hold. For example, take $\omega^I=0.25$ in case (i) and $\omega^I=0.15$
in case (ii).

\medskip
\noindent
\textbf{(i) Uneven and sparse interaction.}
Let
\[
W^{S}=
\begin{pmatrix}
0&1&0\\
1&0&1\\
0&1&0
\end{pmatrix},
\qquad
M^{S}=I+W^{S}.
\]
In this case, it can be computed that $s^{*}\approx 0.384>s$. Hence the planner chooses the positive assortative policy. Under
\[
a^{(m^+)}=(1,2,0),
\qquad
\widehat\omega(B^+)\approx 0.568,
\]
the effective-advantage and effort profiles are
\[
\theta^{(m^+)}=M^{S}a^{(m^+)}=
\begin{pmatrix}
3\\
3\\
2
\end{pmatrix},
\qquad
E^{*(m^+)}\bigl(\widehat\omega(B^+)\bigr)\approx
\begin{pmatrix}
3.26\\
4.37\\
4.02
\end{pmatrix}.
\]
In this case, the planner assortatively matches agents by type in order to generate higher effective talent. The resulting effective-advantage profile is quite uneven: position 3 is disadvantaged, with effective advantage 2, whereas the other two positions each have effective advantage 3. However, because effective advantage is relatively less important in evaluation, the effort profile does not show position 3 exerting substantially more effort to catch up. In fact, the agent in position 2 exerts the most effort, reflecting the centrality of position 2 in the competition network.

\medskip
\noindent
\textbf{(ii) More even and denser interaction.}
Let
\[
W^{C}=
\begin{pmatrix}
0&1&1\\
1&0&1\\
1&1&0
\end{pmatrix},
\qquad
M^{C}=
\begin{pmatrix}
1&0.6&0.5\\
0.6&1&0.5\\
0.5&0.5&1
\end{pmatrix}.
\]
In this case, it can be computed that $s^{*}\approx 0.193<s$. Hence the planner chooses the negative assortative policy. Under
\[
a^{(m^-)}=(0,1,2),
\qquad
\widehat\omega(B^-)\approx 0.189,
\]
the effective-advantage and effort profiles are
\[
\theta^{(m^-)}=M^{C}a^{(m^-)}=
\begin{pmatrix}
1.6\\
2.0\\
2.5
\end{pmatrix},
\qquad
E^{*(m^-)}\bigl(\widehat\omega(B^-)\bigr)\approx
\begin{pmatrix}
10.37\\
8.66\\
6.52
\end{pmatrix}.
\]
In this case, the planner uses a negatively assortative assignment, so the gaps between advantaged and disadvantaged agents are not large. However, agents’ effective advantages are lower overall. Importantly, effective advantage is made much more important in evaluation, which induces substantially greater effort, especially from the still least-advantaged agent at position 1.

\end{example}

%===============================================================
\section{Pairwise Stable Assignment and Stability Loss}
%===============================================================
\label{sec:pairwise}
So far, the planner's choice $(\omega,m)$ has been treated as fully enforceable. In many organizations, however, agents may be able to mutually agree to swap tasks after the assignment is made. When such swaps are feasible, the planner's chosen assignment must be incentive-compatible in the sense that no two agents can both strictly improve their continuation values by exchanging positions, which is known as \emph{pairwise stability}. This section extends the model to study the planner's problem when pairwise stability must be taken into account.

\subsection{Continuation values and pairwise stability}

We start by deriving the condition under which a pair of agents would agree to swap their positions, and define pairwise stability on this basis. Suppose Assumptions~\ref{ass:network-stability}--\ref{ass:interior-candidate-ordering}
hold. Define agent $i$'s continuation value as her Stage-2 equilibrium payoff
\[
V_i(\omega,m):=\mathbb{E}U_i\!\big(E^{*(m)}(\omega);\omega,m\big).
\]
By Proposition~\ref{prop:stage2-omega}, if $j=m(i)$ then
\begin{equation}\label{eq:Vi-closed-omega}
V_i(\omega,m)
=
U\Phi(\bar z(\omega))
-v\sigma\frac{\bar z(\omega)}{\omega}\,q_{j}
+v\frac{1-\omega}{\omega}\big(\theta^{(m)}_{j}+\alpha\big).
\end{equation}
The terms $U\Phi(\bar z(\omega))$ and $v\frac{1-\omega}{\omega}\alpha$ are common across positions, so they
drop out of swap comparisons.

\paragraph{Swap notation.}
For distinct agents $i\neq i'$, let $m^{ii'}$ denote the assignment obtained from $m$ by swapping
their positions. Write $j:=m(i)$ and $k:=m(i')$.

Fix an assignment $m$ and positions $(j,k)$. For $r\in\{j,k\}$, define the
contribution to his effective advantage $\theta^{(m)}_r$ that comes from all positions other than $\{j,k\}$ as
\[
R_r(m;j,k):=\sum_{\ell\notin\{j,k\}} M_{r\ell}\,a^{(m)}_\ell.
\]
Under $m$, we have
\[
\theta^{(m)}_j = M_{jj}a_i + M_{jk}a_{i'} + R_j(m;j,k),
\qquad
\theta^{(m)}_k = M_{kj}a_i + M_{kk}a_{i'} + R_k(m;j,k),
\]
while under the swapped assignment $m^{ii'}$,
\[
\theta^{(m^{ii'})}_j = M_{jj}a_{i'} + M_{jk}a_i + R_j(m;j,k),
\qquad
\theta^{(m^{ii'})}_k = M_{kj}a_{i'} + M_{kk}a_i + R_k(m;j,k).
\]
Therefore the swap gains, normalized by $v$, are
\begin{small}
\begin{align}
\label{eq:swap-gain-i}
\frac{1}{v}\big(V_i(\omega,m^{ii'})-V_i(\omega,m)\big)
&=
\sigma\frac{\bar z(\omega)}{\omega}\,(q_j-q_k)
+\frac{1-\omega}{\omega}\Big[\big(M_{kk}-M_{jj}\big)a_i+\big(M_{kj}-M_{jk}\big)a_{i'}+\big(R_k-R_j\big)\Big],
\\
\label{eq:swap-gain-ip}
\frac{1}{v}\big(V_{i'}(\omega,m^{ii'})-V_{i'}(\omega,m)\big)
&=
\sigma\frac{\bar z(\omega)}{\omega}\,(q_k-q_j)
+\frac{1-\omega}{\omega}\Big[\big(M_{jj}-M_{kk}\big)a_{i'}+\big(M_{jk}-M_{kj}\big)a_i+\big(R_j-R_k\big)\Big],
\end{align}
\end{small}
where $R_r$ abbreviates $R_r(m;j,k)$.

In some cases, agents can renegotiate roles with side transfers, so that a swap can be implemented whenever it strictly increases the pair's total payoff. This is the standard transferable-utility case. We adopt this transferable-utility view of swaps. Under this interpretation, a swap can be carried out whenever the two agents can jointly raise their total continuation value, which places the framework in the tradition of cooperative and coalition games with transferable utility.\footnote{See, for example, \cite{vonNeumannMorgenstern1944,Shapley1953,Gillies1959}. The same logic can also be viewed through the lens of network formation games with transfers. In classical network formation games, link formation is a joint action by a pair of agents that changes payoffs through network externalities, which is analogous to swapping positions in the present setting. \cite{BlochJackson2007} study pairwise link formation with transfers.} We think that transferable utility is plausible in organizational applications. In many workplace settings agents cannot literally make transfers, but they may exchange tasks, compensate through reciprocal help, or agree on informal workload trades, in quite incremental ways that simulate continuous transfers.

\begin{definition}[Profitable swap and stability]\label{def:stability}
Given $\omega$, the swap $(i,i')$ is \emph{profitable} under $m$ if
\[
V_i(\omega,m^{ii'})+V_{i'}(\omega,m^{ii'})
\;>\;
V_i(\omega,m)+V_{i'}(\omega,m).
\]
Given $\omega$, an assignment $m$ is \emph{stable} if no pair admits a profitable swap. 
$\mathcal M^{S}(\omega)$ denotes the set of stable assignments at $\omega$.
We call $(\omega,m)$ \emph{pairwise stable} if $m\in\mathcal M^{S}(\omega)$.
\end{definition}

Pairwise stable assignments are generally difficult to characterize. Equations~\eqref{eq:swap-gain-i} and~\eqref{eq:swap-gain-ip} imply that the pair's total continuation value depends on their specific positions in the network $M$, since the terms $\big(M_{kk}-M_{jj}\big)a_i$ and $\big(M_{jj}-M_{kk}\big)a_{i'}$ do not in general cancel out. We therefore narrow our analysis to the case in which $M$ is symmetric, that is, $M_{jk}=M_{kj}$ for any $j\neq k$. This means that the advantage spillover effect between any two positions is reciprocal, which is also a common assumption in network analysis. Under this restriction, Proposition~\ref{prop:stability-diagonal} characterizes pairwise stable assignments.

\begin{proposition}[Stability under symmetric $M$]\label{prop:stability-diagonal}
Maintain Assumptions~\ref{ass:network-stability}--\ref{ass:interior-candidate-ordering}. Assume $M$ is symmetric. If $\omega=1$, every assignment is pairwise stable. If $\omega<1$, an assignment $m$ is pairwise stable if and only if for every pair $i\neq i'$
with $j:=m(i)$ and $k:=m(i')$,
\begin{equation}\label{eq:stable-diagonal-ineq}
(a_i-a_{i'})(M_{jj}-M_{kk})\ge 0.
\end{equation}
Equivalently, when $\omega<1$, after ordering abilities so that $a_{i(1)}\le\cdots\le a_{i(n)}$, pairwise stability requires
that the assigned diagonal scores are weakly increasing in the same order:
\[
M_{m(i(1))\,m(i(1))}\le \cdots \le M_{m(i(n))\,m(i(n))},
\]
up to arbitrary permutations within tied blocks.
\end{proposition}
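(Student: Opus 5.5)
The plan is to add the two normalized swap gains \eqref{eq:swap-gain-i} and \eqref{eq:swap-gain-ip} and study the sign of the sum. The centrality terms $\sigma\frac{\bar z(\omega)}{\omega}(q_j-q_k)$ and $\sigma\frac{\bar z(\omega)}{\omega}(q_k-q_j)$ cancel. The residual terms $(R_k-R_j)$ and $(R_j-R_k)$ cancel as well, because $R_r(m;j,k)$ depends only on agents outside $\{j,k\}$ and is unchanged by the swap. Under symmetry of $M$ we have $M_{kj}=M_{jk}$, so the cross terms $(M_{kj}-M_{jk})a_{i'}$ and $(M_{jk}-M_{kj})a_i$ vanish. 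The total gain of the pair, divided by $v$, then reduces to
\[
\frac{1-\omega}{\omega}\Big[(M_{kk}-M_{jj})a_i+(M_{jj}-M_{kk})a_{i'}\Big]
=
-\frac{1-\omega}{\omega}\,(a_i-a_{i'})(M_{jj}-M_{kk}).
\]
Here I will rely on Proposition~\ref{prop:stage2-omega} through \eqref{eq:Vi-closed-omega}. That closed form applies under both $m$ and $m^{ii'}$, because Assumption~\ref{ass:alpha-bound} holds for every assignment, so the swapped assignment also gives the all-active equilibrium with the same $\bar z(\omega)$.

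Next I split into cases. If $\omega=1$, the prefactor $(1-\omega)/\omega$ is zero, so the pair's total gain is zero for every swap. No swap is strictly profitable, and every assignment is stable. If $\omega<1$, the prefactor is strictly positive since $\omega\ge\omega^I>0$. A swap is then profitable exactly when $(a_i-a_{i'})(M_{jj}-M_{kk})<0$. Stability, meaning no profitable swap over all pairs, is therefore equivalent to \eqref{eq:stable-diagonal-ineq} holding for every pair.

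Finally I show that the pairwise condition is equivalent to the ordering statement. Suppose \eqref{eq:stable-diagonal-ineq} holds for all pairs. Then whenever $a_i<a_{i'}$ we get $M_{m(i)m(i)}\le M_{m(i')m(i')}$. So within any ordering $a_{i(1)}\le\cdots\le a_{i(n)}$, the diagonal scores are weakly increasing across strict ability increases. Inside a tied ability block the diagonal scores can be arbitrary, and they can be rearranged into increasing order by permuting the block, which is the claimed freedom up to ties. For the converse, assume the diagonal scores are weakly increasing along some ability ordering. Any pair with $a_i<a_{i'}$ has $i$ earlier in that ordering and hence a weakly smaller diagonal score, and pairs with $a_i=a_{i'}$ satisfy \eqref{eq:stable-diagonal-ineq} trivially. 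The only delicate step is this tie-handling. The algebraic cancellation itself is routine once the $R$ terms are recognized as invariant under the swap.
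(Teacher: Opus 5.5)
Your proposal is correct and follows essentially the paper's argument. You sum the pair's changes in continuation value, so that the centrality terms and the outside-spillover terms $R$ cancel. Symmetry then removes the cross terms, and the sign of $\frac{1-\omega}{\omega}(M_{kk}-M_{jj})(a_i-a_{i'})$ decides whether a swap is profitable. Two details you add are left implicit in the paper: the closed-form continuation value also applies to the swapped assignment, because Assumption~\ref{ass:alpha-bound} covers every $m$, and ties are handled explicitly in the ordering equivalence.
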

\proofinapp{app:proof-stability-diagonal}

\begin{remark}[Constant diagonal]\label{rem:constant-diagonal-stability}
If $M_{jj}$ is constant across $j$, then \eqref{eq:stable-diagonal-ineq} holds for every assignment,
so every assignment is stable.
\end{remark}

The intuition is simple. Under transferable utility, the off-diagonal spillover terms wash out in any two-agent swap, so the gain from swapping depends only on the two agents’ types and the diagonal terms of the two positions in $M$. This highlights the tension between the planner and the agents, as agents care only about the advantage generated by their own positions, and at most that of the other agent involved in the swap, whereas the planner takes a global view and seeks either to maximize or to minimize total effective advantage.

\paragraph{Empirical implication of pairwise stability.} Proposition~\ref{prop:stability-diagonal} implies that when agents can rearrange assignments through mutually beneficial swaps, observed assignments may reflect private stability rather than the planner's preferred allocation. Under symmetric $M$ and $\omega<1$, stable assignments sort ability by the diagonal entries $M_{jj}$, whereas the planner's unconstrained assignment sorts by the aggregate spillover index $b_j$. Therefore, in settings with informal task trading, reciprocal help, or decentralized assignment adjustment, the observed correlation between ability and $M_{m(i)m(i)}$ should be stronger than the correlation between ability and $b_{m(i)}$. The gap between these two correlations is a direct empirical measure of the tension between private assignment incentives and the planner's global spillover objective.

The stability result is related to evidence that decentralized managerial incentives can distort the allocation of talent inside firms. \citet{BandieraBarankayRasul2009} show that social connections between managers and workers affect productivity and allocation, and \citet{haegele2022talent} provides direct evidence that managers hoard talented workers when their own incentives favor retaining talent within their teams. The model's sharper prediction that stable assignments sort by private positional advantage rather than aggregate spillover value is, however, a novel implication that would require data on both private and aggregate spillover measures.

\vspace{15pt}

There are cases in which the planner need not worry about pairwise stability, as Remark~\ref{rem:constant-diagonal-stability} underscores. More interesting, however, is how the requirement of pairwise stability constrains the planner's maximization problem, and how this depends on spillover and competition structure. We turn to this question in the next subsection.

\subsection{Network structures and the stability loss}\label{subsec:delta-monotone-Q}

The previous subsection shows that, under symmetric $M$ and $\omega<1$, pairwise stability forces the assignment to sort ability by the diagonal entries of $M$. This restriction can prevent the planner from attaining the unconstrained extreme values $B^+$ or $B^-$, and hence reduce achieved output. We refer to this reduction as \emph{stability loss}. This subsection first explains the source of this stability loss and then studies how it depends on the competition structure.

In what follows, we focus on the nondegenerate case in which the relevant
evaluation rules are strictly below the boundary \(\omega=1\). When
\(\omega=1\), the assignment component in the pairwise-stability condition
vanishes, so the boundary case removes the assignment-stability tradeoff that
is central to the analysis. We therefore restrict attention to the case in
which pairwise stability imposes a meaningful constraint on assignment and the
stability-loss problem is nontrivial.

Formally, throughout this subsection assume
\[
s<1, \qquad \widehat\omega(B^+)<1, \qquad \widehat\omega(B^{S,+})<1.
\]
Since \(\widehat\omega(B)\) is increasing in \(B\), these conditions also imply
\[
\widehat\omega(B^-)\le \widehat\omega(B^+)<1
\qquad\text{and}\qquad
\widehat\omega(B^{S,-})\le \widehat\omega(B^{S,+})<1.
\]
Thus the relevant maxima over \([\omega^I,1)\) are attained strictly below
\(\omega=1\), rather than only approached at the boundary.

Recall that $Q$ is the aggregate Katz centrality induced by the competition network $W$, and captures the aggregate strength of competition. For ease of exposition, in this subsection we write the planner's value as a function of $Q$. Let
\[
V^{FB}(Q):=\max_{\omega\in[\omega^I,1)}\ \max_m \mathcal W(\omega,m;Q),
\qquad
V^{SB}(Q):=\max_{\omega\in[\omega^I,1)}\ \max_{m\in\mathcal M^S} \mathcal W(\omega,m;Q),
\]
and define the stability loss by
\[
\Delta(Q):=V^{FB}(Q)-V^{SB}(Q)\ge 0.
\]
Here $\mathcal M^S$ denotes the set of assignments satisfying the stability condition in Proposition~\ref{prop:stability-diagonal}. Also define
\[
B^{S,+}:=\max_{m\in\mathcal M^S} b^\top a^{(m)},
\qquad
B^{S,-}:=\min_{m\in\mathcal M^S} b^\top a^{(m)}.
\]

The lemma below separates the stability loss into two parts. One arises from the loss of total effective advantage, while the other arises from the planner’s inability to use assignment more fully as a device for inducing effort.

\begin{lemma}[Stable and unconstrained branch values]\label{lem:stability-loss}
Maintain Assumptions~\ref{ass:network-stability}--\ref{ass:interior-candidate-ordering}. Assume that $M$ is symmetric. Define the stable cutoff
\[
s^{S,\star}
:=
\frac{B^{S,+}-B^{S,-}}
{\widetilde T(B^{S,-})-\widetilde T(B^{S,+})},
\]
with the convention that $s^{S,\star}:=\widehat\omega(B^{S,+})$ if $\widetilde T(B^{S,-})-\widetilde T(B^{S,+})\le0$.

Then the stability loss is given by
\[
\Delta(Q)
=
\begin{cases}
\big(B^+-B^{S,+}\big)
+s\big[\widetilde T(B^+)-\widetilde T(B^{S,+})\big],
& \text{if } s<s^\star \text{ and } s<s^{S,\star},\\[0.9em]
\big(B^+-B^{S,-}\big)
+s\big[\widetilde T(B^+)-\widetilde T(B^{S,-})\big],
& \text{if } s<s^\star \text{ and } s>s^{S,\star},\\[0.9em]
\big(B^--B^{S,+}\big)
+s\big[\widetilde T(B^-)-\widetilde T(B^{S,+})\big],
& \text{if } s>s^\star \text{ and } s<s^{S,\star},\\[0.9em]
\big(B^--B^{S,-}\big)
+s\big[\widetilde T(B^-)-\widetilde T(B^{S,-})\big],
& \text{if } s>s^\star \text{ and } s>s^{S,\star}.
\end{cases}
\]
At equality, the corresponding branches yield the same value.
\end{lemma}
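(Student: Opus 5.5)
The plan is to reduce everything to a one-parameter value function of the aggregate advantage $B$, and then compare the unconstrained and stable problems endpoint by endpoint. By Equation~\eqref{eq:planner-reduced-omega}, for any assignment with $b^\top a^{(m)}=B$ the planner's objective is
\[
V(\omega;B)=B+sn\alpha-\frac{s}{\omega}\bigl(B+n\alpha-\sigma Q\bar z(\omega)\bigr).
\]
Two facts about this function drive the argument. First, $V(\omega;B)$ is affine in $B$ for fixed $\omega$. Second, the inner maximization over $m$ (over $\mathcal M$ or over $\mathcal M^S$) therefore picks either the largest or the smallest attainable $B$. Hence
\[
V^{FB}(Q)=\max\{G(B^+),G(B^-)\},\qquad V^{SB}(Q)=\max\{G(B^{S,+}),G(B^{S,-})\},
\]
where $G(B):=\max_{\omega\in[\omega^I,1)}V(\omega;B)$. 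Exchanging the two maxima is legitimate because both are maxima over product sets. The branch restrictions $\omega\gtrless s$ of Proposition~\ref{prop:opt-assign-given-omega} never need to be imposed, because for each $\omega$ the better endpoint is automatically selected.

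The second step is to compute $G(B)$ in closed form, showing $G(B)=B+s\widetilde T(B)$ with $\widetilde T(B):=n\alpha+\sigma Q/(\widehat\omega(B)z(B))$. Using $\bar z'(\omega)=1/(\omega\bar z(\omega))$, I would compute
\[
\partial_\omega V=\frac{s}{\omega^2}\Bigl[(B+n\alpha)-\sigma Q\bigl(\bar z-1/\bar z\bigr)\Bigr].
\]
Since $\bar z-1/\bar z$ is strictly increasing in $\omega$, the bracket is positive and then negative, so $V(\cdot;B)$ is single-peaked. Its peak is where $\bar z^2-K(B)\bar z-1=0$, that is, $\bar z=z(B)$ and $\omega=\widehat\omega(B)$. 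Substituting $B+n\alpha-\sigma Q z=-\sigma Q/z$ at the peak gives $G(B)=B+sn\alpha+s\sigma Q/(\widehat\omega z)=B+s\widetilde T(B)$.

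The main obstacle is verifying that the peak is feasible, $\widehat\omega(B)\in[\omega^I,1)$, for all four values of $B$.
\begin{itemize}
\item For $B^\pm$, feasibility follows from Assumption~\ref{ass:interior-candidate-ordering} together with the standing hypothesis $\widehat\omega(B^+)<1$.
\item For the stable values, $\mathcal M^S\subseteq\mathcal M$ gives $B^-\le B^{S,-}\le B^{S,+}\le B^+$. Monotonicity of $\widehat\omega$ in $B$, together with $\widehat\omega(B^{S,+})<1$, then places both $\widehat\omega(B^{S,\pm})$ in $[\omega^I,1)$.
\end{itemize}
If $\mathcal M^S$ turned out to be empty this would fail. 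It is not empty, however, since by Proposition~\ref{prop:stability-diagonal} the assignment sorting abilities by $M_{jj}$ is always stable.

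Finally, I would resolve each maximum with the cutoffs. The comparison $G(B^+)\ge G(B^-)$ holds if and only if $s(\widetilde T^- -\widetilde T^+)\le B^+-B^-$. Under Assumption~\ref{ass:interior-candidate-ordering}, $\widetilde T^\pm=\widetilde T(B^\pm)$, and Proposition~\ref{prop:omega-cutoff} gives $\widetilde T^- -\widetilde T^+>0$, so the comparison is equivalent to $s\le s^\star$. The same algebra applies to $B^{S,\pm}$ with cutoff $s^{S,\star}$. If $\widetilde T(B^{S,-})-\widetilde T(B^{S,+})\le0$, then $G(B^{S,+})\ge G(B^{S,-})$ for every $s$, and the stated convention is meant to select the positive stable branch. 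I would check that this convention is consistent, noting that the definition $s^{S,\star}:=\widehat\omega(B^{S,+})$ yields $s<s^{S,\star}$ only when $s<\widehat\omega(B^{S,+})$; this needs a remark, or else the degenerate case should be excluded. Subtracting $V^{SB}$ from $V^{FB}$ in each of the four sign combinations then gives the four displayed expressions for $\Delta(Q)$. At equality the two corresponding $G$-values coincide, which yields the final sentence of the lemma.
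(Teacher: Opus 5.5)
Your proof is correct. Its core computation is exactly the paper's: on a branch with aggregate advantage $B$, the maximized value is $B+s\widetilde T(B)$, attained at $\widehat\omega(B)$, obtained by substituting $\sigma Q z(B)-B-n\alpha=\sigma Q/z(B)$.

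The difference lies in how you decide which branch is selected. The paper imports this from the cutoff characterization of Proposition~\ref{prop:omega-cutoff}. That result uses the branch partition $\omega\le s$ (negative sorting) versus $\omega\ge s$ (positive sorting) together with a switching-point argument. For the stable problem, the paper merely asserts that the analogous statement holds.

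You instead use that the objective is affine in $B$ for fixed $\omega$ and exchange the two maximizations. This gives $V^{FB}=\max\{G(B^+),G(B^-)\}$ and $V^{SB}=\max\{G(B^{S,+}),G(B^{S,-})\}$, with $G$ an unrestricted maximum over $[\omega^I,1)$. The branch choice then reduces to comparing $B^+-B^-$ with $s\bigl(\widetilde T(B^-)-\widetilde T(B^+)\bigr)$.

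Your route buys three things:
\begin{itemize}
\item It is self-contained and treats the unconstrained and stable problems identically.
\item It covers $s<\omega^I$, which Proposition~\ref{prop:omega-cutoff} does not address, without separate treatment.
\item It makes explicit what the paper leaves implicit: feasibility of $\widehat\omega(B^{S,\pm})$ via $B^-\le B^{S,-}\le B^{S,+}\le B^+$ and the monotonicity of $\widehat\omega$, and non-emptiness of $\mathcal M^S$.
\end{itemize}
The paper's route is shorter because it recycles an existing result, and it ties the selected branch to the sign of $\omega-s$ from Proposition~\ref{prop:opt-assign-given-omega}. Yours shows that this sign restriction never has to be imposed.

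Two small remarks.

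First, the exchange of maxima needs the feasible set to be a product. This holds only because $\mathcal M^S(\omega)$ is the same for all $\omega<1$, by Proposition~\ref{prop:stability-diagonal}. That is why the domain is $[\omega^I,1)$, and you should say so.

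Second, your worry about the convention is unnecessary. Since $\partial\widetilde T(B)/\partial B=-1/\widehat\omega(B)<0$ (as shown in the proof of Lemma~\ref{lem:Q-cutoff}), the denominator $\widetilde T(B^{S,-})-\widetilde T(B^{S,+})$ is nonnegative. It vanishes only when $B^{S,+}=B^{S,-}$, and then the two stable branch expressions coincide. So the value assigned to $s^{S,\star}$ is immaterial and no exclusion is needed. This is exactly the situation in the paper's stability-loss example, where $B^{S,+}=B^{S,-}=7$.
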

\proofinapp{app:proof-stability-loss}

Assignment serves two roles. It raises total effective advantage and it induces effort. Pairwise stability restricts both roles by replacing the unconstrained range $[B^-,B^+]$ with the stable range $[B^{S,-},B^{S,+}]$. Lemma~\ref{lem:stability-loss} shows that in all cases, stability loss contains both an effective-advantage component and an effort-inducing component. When the unconstrained and stable problems select different branches, an additional source of loss arises because stability also changes which assignment the planner effectively uses.

Since the second component of stability loss depends on how effective low advantage is at inducing effort, it naturally depends on the competition structure. For example, more intense competition may increase the effectiveness of using low advantage, so when pairwise stability distorts the optimal negative assortative assignment, the resulting loss may be larger. Lemma~\ref{lem:Q-cutoff} and Proposition~\ref{prop:delta-monotone-Q} below make this clear.

\begin{lemma}[Cutoff in competition intensity]\label{lem:Q-cutoff}
Maintain Assumptions~\ref{ass:network-stability}--\ref{ass:interior-candidate-ordering}, and suppose \(\underline\omega e^{1/2}<s\). Define
\[
\Phi(Q)
:=
\frac{1}{B^+-B^-}
\int_{B^-}^{B^+}
\frac{1}{\widehat\omega(B;Q)}\,dB .
\]
Then there exists a unique \(Q^\star>0\) such that
\[
\Phi(Q^\star)=\frac{1}{s}.
\]
Moreover,
\[
Q<Q^\star
\quad\Longleftrightarrow\quad
s<s^\star(Q),
\]
and
\[
Q>Q^\star
\quad\Longleftrightarrow\quad
s>s^\star(Q).
\]
\end{lemma}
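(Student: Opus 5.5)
The plan is to read the cutoff $s^\star$ of Proposition~\ref{prop:omega-cutoff} as a comparison of two optimal values, and to express their difference as an integral over $B$ using the envelope theorem. For fixed $Q$ and a real number $B\in[B^-,B^+]$, define
\[
F(B;Q):=\max_{\omega\in[\omega^I,1]}\Big\{ s\frac{\sigma}{\omega}\bar z(\omega)Q - s\frac{1-\omega}{\omega}n\alpha + \Big(1-\frac{s}{\omega}\Big)B\Big\}.
\]

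\textbf{Step 1: reduce the planner's problem to two values.} By Lemma~\ref{lem:reduction-b}, for every $\omega$ the best assignment attains either $B^+$ or $B^-$. Hence the planner's optimum equals $\max\{F(B^+;Q),F(B^-;Q)\}$. Proposition~\ref{prop:omega-cutoff}, Case~3, then says that $s<s^\star(Q)$ exactly when the positive-assortative value wins. I will check that this is equivalent to $F(B^+;Q)>F(B^-;Q)$, with the case $s>s^\star(Q)$ handled symmetrically.

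\textbf{Step 2: locate the maximizer and apply the envelope theorem.} Substituting $z=\bar z(\omega)$, so that $\omega=\underline\omega e^{z^2/2}$, the first-order condition becomes $z^2-Kz-1=0$ with $K=K(B)$. Its positive root is $z(B)$, which gives $\widehat\omega(B)$ as in \eqref{eq:K-z-omegahat}; I will verify that the objective is single-peaked in $z$, so this root is the maximizer. Assumption~\ref{ass:interior-candidate-ordering} places $\widehat\omega(B^\pm)$ in $[\omega^I,1]$. Since $\widehat\omega(\cdot)$ is increasing in $B$, the same holds for every intermediate $B$, so the maximizer is interior throughout. The envelope theorem then gives $\partial F/\partial B=1-s/\widehat\omega(B;Q)$, and therefore
\[
F(B^+;Q)-F(B^-;Q)=\int_{B^-}^{B^+}\Big(1-\frac{s}{\widehat\omega(B;Q)}\Big)dB=(B^+-B^-)\big(1-s\,\Phi(Q)\big).
\]
Consequently $s<s^\star(Q)$ if and only if $\Phi(Q)<1/s$, and $s>s^\star(Q)$ if and only if $\Phi(Q)>1/s$.

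\textbf{Step 3: properties of $\Phi$.} First, $K(B;Q)=(B+n\alpha)/(\sigma Q)$ is strictly decreasing in $Q$. Since $z$ is increasing in $K$, both $z(B;Q)$ and $\widehat\omega(B;Q)$ are strictly decreasing in $Q$. Hence $\Phi$ is continuous and strictly increasing in $Q$. For the limits: as $Q\to0$ we have $K\to\infty$ and $\widehat\omega\to\infty$, so $\Phi\to0<1/s$. As $Q\to\infty$ we have $K\to0$ and $z\to1$, so $\widehat\omega\to\underline\omega e^{1/2}$ and $\Phi\to 1/(\underline\omega e^{1/2})>1/s$ by the hypothesis $\underline\omega e^{1/2}<s$. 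The intermediate value theorem and strict monotonicity then give a unique $Q^\star$ with $\Phi(Q^\star)=1/s$. Combined with Step~2, this yields both equivalences in the statement.

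\textbf{Main obstacle.} The delicate point is that Steps~2 and~3 vary $Q$ across all of $(0,\infty)$, while the interiority of $\widehat\omega$ in $[\omega^I,1]$ comes from Assumption~\ref{ass:interior-candidate-ordering}, which is stated for the given $Q$. I would address this in two ways. First, I would show that the identification of $s^\star$ in Proposition~\ref{prop:omega-cutoff} coincides with the integral formula on the range of $Q$ where the maintained assumptions hold. Here I expect to check directly that $\widetilde T^-(B)-\widetilde T^+(B)$ equals $\int_{B^-}^{B^+} dB/\widehat\omega$, using $d\widetilde T/dB=-1/\widehat\omega$. Second, I would treat $\Phi$ itself as a purely analytic function of $Q$ defined for every $Q>0$. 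Its monotonicity and limits require no feasibility conditions, so the existence and uniqueness of $Q^\star$ are unaffected.
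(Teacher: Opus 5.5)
Your proposal is correct and follows essentially the same route as the paper. The paper obtains $\widetilde T(B^-)-\widetilde T(B^+)=\int_{B^-}^{B^+}dB/\widehat\omega(B;Q)$ by differentiating $\sigma Q/(\widehat\omega(B;Q)z(B;Q))$ in $B$ with the chain rule; your envelope-theorem step gives the same identity $\partial\widetilde T/\partial B=-1/\widehat\omega$. It then reads off $s^\star(Q)=1/\Phi(Q)$ and uses the same monotonicity of $\widehat\omega$ in $Q$ and the same limits $0$ and $1/(\underline\omega e^{1/2})$. Your explicit treatment of $\Phi$ as defined for all $Q>0$, while $s^\star(Q)=1/\Phi(Q)$ needs the maintained assumptions at the given $Q$, is a point the paper leaves implicit; both arguments also tacitly need $B+n\alpha>0$ for the monotonicity in $Q$ and the $Q\to0$ limit.
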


\proofinapp{app:proof-lem-Q-cutoff}

\begin{proposition}[Monotonicity of $\Delta(Q)$ in $Q$]
\label{prop:delta-monotone-Q} Maintain Assumptions~\ref%
{ass:network-stability}--\ref{ass:interior-candidate-ordering}. Assume that $%
M$ is symmetric and \(\underline\omega e^{1/2}<\omega^I\le s\). Let $Q^\star$ be defined as in Lemma~\ref{lem:Q-cutoff}. Then, on any
interval of $Q$-values over which the maintained assumptions continue to
hold:

\begin{enumerate}
\item if $Q<Q^\star$, then $\Delta(Q)$ is weakly decreasing in $Q$;

\item if $Q>Q^{\star }$, then $\Delta (Q)$ is weakly increasing in $Q$.
\end{enumerate}
\end{proposition}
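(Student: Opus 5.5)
The plan is to write both values through the branch value function. For a given $B$ and $Q$, let
\[
G(B;Q):=\max_{\omega}\Big[s\,\tfrac{\sigma}{\omega}\bar z(\omega)Q-s\,\tfrac{1-\omega}{\omega}n\alpha+\big(1-\tfrac{s}{\omega}\big)B\Big].
\]
The maximum is attained at the interior point $\widehat\omega(B;Q)$ under Assumption~\ref{ass:interior-candidate-ordering} and the standing conditions $\widehat\omega(B^{+})<1$ and $\widehat\omega(B^{S,+})<1$. The interior stationary point should also be checked to remain feasible for $B^{S,\pm}$, since $B^{S,\pm}\in[B^-,B^+]$.

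By the envelope theorem:
\begin{itemize}
\item $\partial_B G=1-s/\widehat\omega(B;Q)$;
\item $\partial_Q G=s\sigma\bar z(\widehat\omega)/\widehat\omega=s\sigma z(B)/\widehat\omega(B)$, which is positive.
\end{itemize}
Then $V^{FB}(Q)=\max\{G(B^+;Q),G(B^-;Q)\}$ and $V^{SB}(Q)=\max\{G(B^{S,+};Q),G(B^{S,-};Q)\}$. This reproduces the four cases of Lemma~\ref{lem:stability-loss}.

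The key structural fact is convexity of $G$ in $B$. Since $\widehat\omega(B)$ is increasing in $B$, $\partial_B G$ is increasing, so $G(\cdot;Q)$ is convex and its maximum over any interval is at an endpoint. The nested intervals $[B^{S,-},B^{S,+}]\subseteq[B^-,B^+]$ give the ordering. Lemma~\ref{lem:Q-cutoff} identifies the sign of $G(B^+)-G(B^-)=\int_{B^-}^{B^+}(1-s/\widehat\omega)\,dB$: it is positive iff $Q<Q^\star$. So for $Q<Q^\star$ the first-best uses $B^+$, and for $Q>Q^\star$ it uses $B^-$.

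Next I would differentiate $\Delta$ in $Q$ using the envelope derivative:
\[
\Delta'(Q)=s\sigma\Big[\frac{z(B^{FB})}{\widehat\omega(B^{FB})}-\frac{z(B^{SB})}{\widehat\omega(B^{SB})}\Big],
\]
where $B^{FB}$ and $B^{SB}$ are the selected endpoints. This holds almost everywhere; at switching points of the stable branch, $V^{SB}$ is a max of smooth functions, so its one-sided derivatives still take this form. Define $h(B):=z(B)/\widehat\omega(B)=z\,e^{-z^2/2}/\underline\omega$ with $z=z(B)$ increasing in $B$. Then $h'$ has the sign of $1-z^2$. The hypothesis $\underline\omega e^{1/2}<\omega^I$, together with $\widehat\omega\ge\omega^I$ on the relevant range, gives $z(B)>1$, so $h$ is strictly decreasing in $B$.

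For $Q<Q^\star$, $B^{FB}=B^+\ge B^{SB}$, so $h(B^{FB})\le h(B^{SB})$ and $\Delta'\le 0$. For $Q>Q^\star$, $B^{FB}=B^-\le B^{SB}$, so $\Delta'\ge0$. This covers both statements, including the cross-branch cases, because only the ordering of $B^{FB}$ against whichever stable endpoint is chosen matters.

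The main obstacle I expect is rigor at the boundaries and kinks. I need to verify:
\begin{itemize}
\item the interior formula $\partial_QG=s\sigma z/\widehat\omega$ on the whole $Q$-interval, including that $\widehat\omega(B;Q)\ge\omega^I$ persists so $z>1$ holds;
\item that $\Delta$ is absolutely continuous, so nonnegative a.e. derivatives imply monotonicity.
\end{itemize}
Since $V^{SB}$ is a maximum of two $C^1$ functions, $V^{SB}$ is Lipschitz in $Q$ on compact intervals, and I would argue monotonicity on each subinterval between switching points, then glue by continuity. I would also check that $Q^\star$ itself, where the first-best switches, is excluded as the statement does, so $B^{FB}$ is constant on each side.
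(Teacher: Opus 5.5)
Your proposal is correct and is essentially the paper's argument: the envelope derivative $s\sigma z(B)/\widehat\omega(B)$ of the branch value in $Q$, the fact that $z e^{-z^2/2}$ is decreasing for $z\ge 1$ (guaranteed by $\underline\omega e^{1/2}<\omega^I\le\widehat\omega(B)$), Lemma~\ref{lem:Q-cutoff} to fix the first-best endpoint on each side of $Q^\star$, and the orderings $B^{S,\pm}\le B^+$ and $B^{S,\pm}\ge B^-$. The only difference is in how kinks are handled. Instead of your a.e.-derivative and absolute-continuity argument at switching points of $V^{SB}$, the paper writes $\Delta(Q)=\min_{B^S\in\{B^{S,+},B^{S,-}\}}\bigl\{G(B^{FB};Q)-G(B^S;Q)\bigr\}$ and uses the fact that a minimum of functions monotone in the same direction is monotone, which avoids the kink issue entirely.
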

\proofinapp{app:proof-delta-monotone-Q}
\paragraph{Empirical implication for stability loss.} The monotonicity result for $\Delta(Q)$ implies that the cost of decentralized or stable assignment constraints depends on the competition environment. When the first-best policy is positive assortative, stronger competition should reduce the output loss from pairwise stability, because evaluation can partly substitute for the planner's preferred assignment. When the first-best policy is negative assortative, stronger competition should increase the output loss from pairwise stability, because the planner loses a more valuable effort-inducing assignment instrument. This prediction can be tested by comparing units with different benchmarking intensity, or by studying reforms that make performance comparison more or less salient while holding the feasible assignment technology fixed.

The comparative statics of the cutoff and of the stability loss are less directly covered by existing empirical work. The prediction that denser benchmarking networks should make negative assortative assignment more attractive, and the prediction that the stability loss changes with competition intensity, appear to be novel implications of the model and suggest a natural agenda for future empirical work.

When competition is stronger, the evaluation rule has a larger systemwide effect on effort. If the first best uses assignment to maximize advantage spillovers, for example because the manager values placing the most talented salespeople in the most connected segments, then stronger competition allows the planner to offset assignment restrictions through evaluation, so the stability loss shrinks. By contrast, if the first best uses assignment as a device for inducing effort, then stronger competition magnifies the gains from that channel. Preventing the planner from using it through pairwise stability therefore becomes more costly, and the stability loss increases. The example below illustrates the mechanism more clearly.

\begin{example}[Low- and high-competition and stability loss]
Set
\[
(a_1,a_2,a_3)=(0,1,2),\qquad
\alpha=0,\qquad
\sigma=3,\qquad
U=10,\qquad
v=0.14,\qquad
\beta=\frac15,
\]
and let
\[
M=
\begin{pmatrix}
2&0&0\\
0&1&2\\
0&2&\frac12
\end{pmatrix}.
\]
Then
\[
\underline\omega=\frac{v\sigma\sqrt{2\pi}}{U}\approx 0.105,
\]
and
\[
b=M^\top\mathbf 1=\left(2,\,3,\,\frac52\right),
\qquad
diag(M)=\left(2,\,1,\,\frac12\right).
\]
Hence pairwise stability places the types $(0,1,2)$ at positions $(3,2,1)$. Thus
\[
B^{S,+}=B^{S,-}=7.
\]
The unconstrained extrema, which correspond to negative and positive assortative assignments with respect to $b$, are
\[
B^+=\frac{17}{2},
\qquad
B^-=\frac{13}{2}.
\]

Now consider
\[
W^L=
\begin{pmatrix}
0&1&0\\
1&0&1\\
0&1&0
\end{pmatrix},
\qquad
W^H=
\begin{pmatrix}
0&1&1\\
1&0&1\\
1&1&0
\end{pmatrix}.
\]
With \(\beta=\frac15\), these imply
\[
Q_L=\mathbf 1^\top(I-\beta W^L)^{-1}\mathbf 1=\frac{95}{23}\approx 4.130,
\qquad
Q_H=\mathbf 1^\top(I-\beta W^H)^{-1}\mathbf 1=5.
\]
Thus \(W^H\) represents stronger competition than \(W^L\), since $W^H$ introduces an additional competitive link between positions 1 and 3. For both competition structures below, one can verify that Assumptions~\ref{ass:network-stability}--\ref{ass:interior-candidate-ordering} hold with $\omega^I=0.18$. The corresponding cutoffs are
\[
s_L^\star\approx 0.261,
\qquad
s_H^\star\approx 0.239.
\]

\paragraph{Case 1: \(s<s^\star\).}
Set \(s=0.20\), so \(s<s^\star\) under both competition structures. By Lemma~\ref{lem:stability-loss},
\[
\Delta(Q)
=
\underbrace{\left(B^+-B^{S,+}\right)}_{\text{blocked advantage allocation}}
-
\underbrace{s\left(\widetilde T(B^{S,+})-\widetilde T(B^+)\right)}_{\text{effort offset}}.
\]
We have
\[
B^+-B^{S,+}=\frac{17}{2}-7=\frac32=1.500,
\]
and
\[
\widetilde T(B^{S,+})-\widetilde T(B^+)
=
\begin{cases}
37.189-31.542=5.647, & \text{under } W^L,\\[0.3em]
51.105-44.914=6.191, & \text{under } W^H.
\end{cases}
\]
The corresponding total effort levels are
\[
\sum_j E_j^\ast=
\begin{cases}
40.042 \text{ at } B^+,\quad 44.189 \text{ at } B^{S,+}, & \text{under } W^L,\\[0.3em]
53.414 \text{ at } B^+,\quad 58.105 \text{ at } B^{S,+}, & \text{under } W^H.
\end{cases}
\]
The resulting stability losses are
\[
\Delta(Q_L)=1.500-0.20\times 5.647=0.371,
\qquad
\Delta(Q_H)=1.500-0.20\times 6.191=0.262.
\]
It can be seen that the high-competition case recovers a larger portion of the stability loss, because it induces more effort under the constrained assignment, relative to the first best.

\paragraph{Case 2: \(s>s^\star\).}
Set \(s=0.45\), so \(s>s^\star\) under both competition structures. By Lemma~\ref{lem:stability-loss},
\[
\Delta(Q)
=
\underbrace{s\left(\widetilde T(B^-)-\widetilde T(B^{S,-})\right)}_{\text{blocked effort inducement}}
-
\underbrace{\left(B^{S,-}-B^-\right)}_{\text{incidental advantage gain}}.
\]
We have
\[
B^{S,-}-B^- = 7-\frac{13}{2}=\frac12=0.500,
\]
and
\[
\widetilde T(B^-)-\widetilde T(B^{S,-})
=
\begin{cases}
39.206-37.189=2.017, & \text{under } W^L,\\[0.3em]
53.280-51.105=2.174, & \text{under } W^H.
\end{cases}
\]
The corresponding total effort levels are
\[
\sum_j E_j^\ast=
\begin{cases}
45.706 \text{ at } B^-,\quad 44.189 \text{ at } B^{S,-}, & \text{under } W^L,\\[0.3em]
59.780 \text{ at } B^-,\quad 58.105 \text{ at } B^{S,-}, & \text{under } W^H.
\end{cases}
\]
The resulting stability losses are
\[
\Delta(Q_L)=0.45\times 2.017-0.500=0.408,
\qquad
\Delta(Q_H)=0.45\times 2.174-0.500=0.478.
\]
It can be seen that the high-competition case incurs a larger stability loss, because stronger competition magnifies the effort-inducement gain under the first best, relative to the constrained case.
\end{example}

\section{Conclusion}
\label{sec:conclusion}
In this paper, we develop a theory of relative performance evaluation in an environment where a planner seeks to maximize output by jointly choosing the evaluation criterion and the assignment of agents to positions. Agents are linked by two distinct structures: a competition structure that determines how performance is compared, and a spillover structure that determines how fixed types propagate across positions. We show that under reasonable assumptions, equilibrium effort is increasing in an agent's Katz centrality in the competition structure and decreasing in effective advantage. Building on this, we show that it is generally not optimal for the planner to evaluate agents solely on the basis of output. Instead, the planner may either increase or decrease the weight placed on effort in evaluation, together with choosing an assignment that either maximizes or minimizes total effective advantage. We then extend the analysis to assignments that are stable to agents' position swaps and show how the resulting stability loss depends on the intensity of competition. Our paper suggests that relative performance evaluation and position assignment should be designed jointly rather than treated as separate organizational choices.

Our analysis also has several limitations that point to useful extensions. First, we treat type as fixed. In practice, agents may learn, accumulate knowledge, or improve their skills over time, and such growth may itself depend on assignment, for example if more central positions generate stronger learning opportunities. Second, treating non-discretionary spillovers as exogenous is a shortcut. In practice, such spillovers may themselves depend on strategic behavior. For example, agents may choose to interact less with direct competitors, which would make the spillover structure endogenous and could alter the planner's problem in non-trivial ways. While extensions in these directions are interesting, they are beyond the scope of this paper, and we leave them for future research.
\bibliographystyle{apalike}
\bibliography{bib1}

%===============================================================
\appendix

\counterwithin{lemma}{section}
\counterwithin{proposition}{section}
\renewcommand{\thelemma}{\thesection.\arabic{lemma}}
\renewcommand{\theproposition}{\thesection.\arabic{proposition}}
\section{Proofs}
%===============================================================

\subsection{Proof of Lemma~\ref{lem:collapse-threshold}}\label{app:proof-lemma1}
\begin{proof}
Fix $(\omega,m)$ and consider the agent at position $j$. The marginal payoff from
increasing $E_j^{(m)}$ is
\[
\frac{\partial}{\partial E_j^{(m)}}
\left(U\Phi(z_j^{(m)})-vE_j^{(m)}\right)
=
U\phi(z_j^{(m)})\frac{\omega}{\sigma}-v
\le
U\phi(0)\frac{\omega}{\sigma}-v.
\]
If $\omega\le\underline\omega$, then
\[
U\phi(0)\frac{\omega}{\sigma}-v\le 0.
\]
Hence each agent's payoff is weakly decreasing in own effort, and strictly
decreasing except possibly at isolated points. Therefore each agent's unique best
response is $E_j^{(m)}=0$, regardless of others' efforts. Thus
$E^{*(m)}(\omega)\equiv 0$ is the unique Stage-2 pure-strategy Nash equilibrium.
\end{proof}

\subsection{Proof of Proposition~\ref{prop:stage2-omega}}\label{app:proof-stage2-omega}

\begin{proof}
Given others' efforts, the effort at position $j$ affects only $x_j^{(m)}$,
hence only $z_j^{(m)}$, and
\[
\frac{\partial z_j^{(m)}}{\partial E_j^{(m)}}=\frac{\omega}{\sigma}>0.
\]
The payoff at position $j$ is
\[
U\Phi(z_j^{(m)})-vE_j^{(m)}.
\]
At any positive best response, the first-order condition is
\[
U\phi(z_j^{(m)})\frac{\omega}{\sigma}=v,
\]
or equivalently,
\[
\phi(z_j^{(m)})=\frac{v\sigma}{U\omega}.
\]
Since $\omega\in[\omega^I,1]$ and $\omega^I>\underline\omega$, we have
\[
\frac{v\sigma}{U\omega}<\phi(0),
\]
so this equation has two roots. We use the positive root, denoted by
$\bar z(\omega)>0$.

The other root is $-\bar z(\omega)$. It cannot be a payoff maximum. Indeed,
the derivative of the payoff with respect to effort has the same sign as 
\begin{equation*}
\phi(z_j)-\phi(\bar z(\omega)).
\end{equation*}
For the normal density, this expression is negative for $z_j<-\bar z(\omega)$%
, positive for $-\bar z(\omega)<z_j<\bar z(\omega)$, and negative for $%
z_j>\bar z(\omega)$. Hence $-\bar z(\omega)$ is a local minimum, whereas $%
\bar z(\omega)$ is the relevant interior maximizer. Therefore, in any
all-active equilibrium, every active agent must satisfy $z_j=\bar z(\omega)$.

Consider the candidate profile satisfying
\[
z^{(m)}=\bar z(\omega)\mathbf 1.
\]
By the definition of $z$,
\[
(I-\beta W)x^{*(m)}(\omega)=\sigma\bar z(\omega)\mathbf 1.
\]
Assumption~\ref{ass:network-stability} implies that $I-\beta W$ is invertible, hence
\[
x^{*(m)}(\omega)
=
\sigma\bar z(\omega)(I-\beta W)^{-1}\mathbf 1
=
\sigma\bar z(\omega)q.
\]
Using
\[
x_j^{*(m)}(\omega)
=
\omega E_j^{*(m)}(\omega)
+
(1-\omega)\big(\theta_j^{(m)}+\alpha\big),
\]
we obtain
\[
E_j^{*(m)}(\omega)
=
\frac{\sigma}{\omega}\bar z(\omega)q_j
-
\frac{1-\omega}{\omega}\big(\theta_j^{(m)}+\alpha\big).
\]

It remains to show that this candidate is the unique best response. By
Assumption~\ref{ass:alpha-bound},
\[
0<
\bar z(\omega)q_j
-
\frac{1-\omega}{\sigma}
\big(\theta_j^{(m)}+\alpha\big)
<
2\bar z(\omega).
\]
Equivalently,
\[
0<
\frac{\omega}{\sigma}E_j^{*(m)}(\omega)
<
2\bar z(\omega).
\]
Thus $E_j^{*(m)}(\omega)>0$. So the candidate solution is feasible and all-active.

Now consider a deviation from
$E_j^{*(m)}(\omega)$ to zero effort, holding the other agents' efforts fixed. The
success margin falls from $\bar z(\omega)$ to
\[
\bar z(\omega)-\frac{\omega}{\sigma}E_j^{*(m)}(\omega).
\]
Since
\[
0\le
\frac{\omega}{\sigma}E_j^{*(m)}(\omega)
\le
2\bar z(\omega),
\]
the interval
\[
\left[
\bar z(\omega)-\frac{\omega}{\sigma}E_j^{*(m)}(\omega),
\bar z(\omega)
\right]
\]
is contained in $[-\bar z(\omega),\bar z(\omega)]$. Hence
\[
\phi(t)\ge \phi(\bar z(\omega))
\]
for every $t$ in this interval. Therefore,
\begin{align*}
\Phi(\bar z(\omega))
-
\Phi\left(
\bar z(\omega)-\frac{\omega}{\sigma}E_j^{*(m)}(\omega)
\right)
&=
\int_{\bar z(\omega)-\frac{\omega}{\sigma}E_j^{*(m)}(\omega)}^{\bar z(\omega)}
\phi(t)\,dt \\
&\ge
\int_{\bar z(\omega)-\frac{\omega}{\sigma}E_j^{*(m)}(\omega)}^{\bar z(\omega)}
\phi(\bar z(\omega))\,dt \\
&=
\phi(\bar z(\omega))
\left[
\bar z(\omega)
-
\left(
\bar z(\omega)-\frac{\omega}{\sigma}E_j^{*(m)}(\omega)
\right)
\right] \\
&=
\phi(\bar z(\omega))
\frac{\omega}{\sigma}E_j^{*(m)}(\omega).
\end{align*}
Using the first-order condition
\[
v=U\phi(\bar z(\omega))\frac{\omega}{\sigma},
\]
we have
\[
\phi(\bar z(\omega))
\frac{\omega}{\sigma}E_j^{*(m)}(\omega)
=
\frac{v}{U}E_j^{*(m)}(\omega).
\]
Therefore,
\[
\Phi(\bar z(\omega))
-
\Phi\left(
\bar z(\omega)-\frac{\omega}{\sigma}E_j^{*(m)}(\omega)
\right)
\ge
\frac{v}{U}E_j^{*(m)}(\omega).
\]
Multiplying both sides by $U$ gives
\[
U\Phi(\bar z(\omega))
-
U\Phi\left(
\bar z(\omega)-\frac{\omega}{\sigma}E_j^{*(m)}(\omega)
\right)
\ge
vE_j^{*(m)}(\omega).
\]
Rearranging yields
\[
U\Phi(\bar z(\omega))-vE_j^{*(m)}(\omega)
\ge
U\Phi\left(
\bar z(\omega)-\frac{\omega}{\sigma}E_j^{*(m)}(\omega)
\right).
\]
Thus deviating to zero effort is not profitable.

Finally, we show that no one deviates to any other positive effort levels. For any
$E_j>0$, the derivative of the payoff with respect to $E_j$ is
\[
U\phi(z_j)\frac{\omega}{\sigma}-v.
\]
Using
\[
v=U\phi(\bar z(\omega))\frac{\omega}{\sigma},
\]
as discussed earlier, this derivative has the same sign as $\phi(z_j)-\phi(\bar z(\omega))$. Moreover, Assumption~\ref{ass:alpha-bound} implies that, at zero effort,
\[
z_j(0)
=
\bar z(\omega)-\frac{\omega}{\sigma}E_j^{*(m)}(\omega)
\ge
-\bar z(\omega).
\]
Since \(E_j^{*(m)}(\omega)>0\), we also have
\[
z_j(0)<\bar z(\omega).
\]
As \(E_j\) increases from \(0\) to \(E_j^{*(m)}(\omega)\), \(z_j(E_j)\)
therefore moves from a value in \([-\bar z(\omega),\bar z(\omega))\) to
\(\bar z(\omega)\). Since the derivative of the payoff has the same sign as
\(\phi(z_j)-\phi(\bar z(\omega))\), it is strictly positive whenever
\[
-\bar z(\omega)<z_j(E_j)<\bar z(\omega),
\]
and equals zero only at \(z_j(E_j)=\pm \bar z(\omega)\). Hence the payoff is
strictly increasing on \((0,E_j^{*(m)}(\omega))\), up to the possible endpoint
case \(z_j(0)=-\bar z(\omega)\), and is strictly decreasing once
\(z_j>\bar z(\omega)\). Therefore no positive effort level other than
\(E_j^{*(m)}(\omega)\) can be optimal.

Together with the previous comparison showing that zero effort does not give
a higher payoff against the candidate profile, this proves that $E_j^{*(m)}(\omega)$ is a best response for each position $j$. Hence the
candidate profile is a pure-strategy Nash equilibrium.

Finally, suppose there is an all-active pure-strategy equilibrium. Since
every agent chooses positive effort, the preceding first-order condition
must hold for each agent. The negative root is not a payoff maximum, so each
agent must satisfy $z_{j}=\bar{z}(\omega )$. Therefore 
\begin{equation*}
z^{(m)}=\bar{z}(\omega )\mathbf{1},
\end{equation*}%
which uniquely implies 
\begin{equation*}
x^{\ast (m)}(\omega )=\sigma \bar{z}(\omega )(I-\beta W)^{-1}\mathbf{1}%
=\sigma \bar{z}(\omega )q.
\end{equation*}%
The effort vector is then uniquely pinned down by 
\begin{equation*}
x_{j}^{\ast (m)}(\omega )=\omega E_{j}^{\ast (m)}(\omega )+(1-\omega )\bigl(%
\theta _{j}^{(m)}+\alpha \bigr).
\end{equation*}%
Thus the candidate profile is the unique all-active pure-strategy
equilibrium.
\end{proof}

\subsection{Proof of Proposition~\ref{prop:opt-assign-given-omega}}\label{app:proof-opt-assign-given-omega}

\begin{proof}
Fix $\omega\in[\omega^I,1]$ and recall $b:=M^\top\mathbf 1$. By
Lemma~\ref{lem:reduction-b}, conditional on $\omega$ the planner's assignment
problem is
\[
\max_m b^\top a^{(m)} \quad \text{if }\omega>s,
\qquad
\min_m b^\top a^{(m)} \quad \text{if }\omega<s,
\]
and the objective is independent of $m$ if $\omega=s$.

Let $(i(1),\dots,i(n))$ be an ordering such that
\[
a_{i(1)}\le\cdots\le a_{i(n)},
\]
and let $(j(1),\dots,j(n))$ be an ordering such that
\[
b_{j(1)}\le\cdots\le b_{j(n)}.
\]

Consider any assignment $m$ and two agents $i,i'$ assigned to positions
$j=m(i)$ and $k=m(i')$. Let $m^{ii'}$ denote the assignment obtained by swapping
$i$ and $i'$. Then
\[
b^\top a^{(m^{ii'})}-b^\top a^{(m)}
=
b_j a_{i'}+b_k a_i-(b_j a_i+b_k a_{i'})
=
(b_k-b_j)(a_i-a_{i'}).
\]

If $\omega>s$, the planner maximizes $b^\top a^{(m)}$. If an assignment has an
inversion, i.e., there are two agents $i,i'$ such that $a_i>a_{i'}$ but
$b_{m(i)}<b_{m(i')}$, then the swap formula above shows that swapping them
strictly increases $b^\top a^{(m)}$. Hence no maximizing assignment can have such
an inversion, except within tied values of $\{a_i\}$ or $\{b_j\}$. Therefore an
assignment maximizes $b^\top a^{(m)}$ if and only if it matches higher abilities
to weakly higher values of $b_j$, up to arbitrary permutations within tied
blocks. This is exactly the first case in \eqref{eq:opt-assignment-given-omega}.

If $\omega<s$, the planner minimizes $b^\top a^{(m)}$. If an assignment fails to
match higher abilities to weakly lower values of $b_j$, then there are two agents
$i,i'$ such that $a_i<a_{i'}$ but $b_{m(i)}<b_{m(i')}$. The swap formula shows
that swapping them strictly decreases $b^\top a^{(m)}$. Hence no minimizing
assignment can have such an inversion, except within tied values of $\{a_i\}$ or
$\{b_j\}$. Therefore an assignment minimizes $b^\top a^{(m)}$ if and only if it
matches higher abilities to weakly lower values of $b_j$, up to arbitrary
permutations within tied blocks. This is exactly the second case in
\eqref{eq:opt-assignment-given-omega}.

Finally, if $\omega=s$, the coefficient on $b^\top a^{(m)}$ is zero by
Lemma~\ref{lem:reduction-b}, so every assignment is planner-optimal. This is
exactly the third case in \eqref{eq:opt-assignment-given-omega}.

This completes the proof.
\end{proof}

\subsection{Proof of Lemma~\ref{lem:positive-only}}\label{app:proof-lemma3}

\begin{proof}
Since $s<\omega^I$ and the planner's choice is restricted to
$\omega\in[\omega^I,1]$, every feasible $\omega$ satisfies $\omega>s$.
Therefore, by Proposition~\ref{prop:opt-assign-given-omega}, the planner always
chooses a positive assortative assignment conditional on any feasible $\omega$.
Hence the negative-assortative branch is empty.

It remains to choose $\omega$ on the positive-assortative branch. On this branch,
the planner's value is
\[
V^+(\omega)
=
s\,\frac{\sigma}{\omega}\bar z(\omega)\,Q
-\;s\,\frac{1-\omega}{\omega}\,n\alpha
+\Big(1-\frac{s}{\omega}\Big)B^+,
\qquad \omega\in[\omega^I,1].
\]
By definition,
\[
\omega_+^\ast
=
\min\left\{1,\max\left\{\omega^I,\widehat\omega(B^+)\right\}\right\}
\]
is the maximizer of $V^+(\omega)$ over $[\omega^I,1]$. To see this, write
\[
V^+(\omega)
=
B^+
+
sn\alpha
+
\frac{s}{\omega}
\left(
\sigma Q\bar z(\omega)-B^+-n\alpha
\right).
\]
Since
\[
\bar z(\omega)=\sqrt{2\log\left(\frac{\omega}{\underline\omega}\right)},
\]
differentiating w.r.t. $\omega$ gives
\[
\bar z'(\omega)
=
\frac{1}{2}
\left(2\log\left(\frac{\omega}{\underline\omega}\right)\right)^{-1/2}
\cdot \frac{2}{\omega}
=
\frac{1}{\omega\bar z(\omega)}.
\]
we then have
\[
\frac{dV^+(\omega)}{d\omega}
=
\frac{s}{\omega^2}
\left[
-\sigma Q\bar z(\omega)
+
(B^+ + n\alpha)
+
\frac{\sigma Q}{\bar z(\omega)}
\right].
\]
From the first order condition that $\frac{dV^+(\omega)}{d\omega}=0$, we get that the stationary point satisfies
\[
\bar z(\omega)-\frac{1}{\bar z(\omega)}
=
\frac{B^+ + n\alpha}{\sigma Q}
=
K(B^+).
\]
Because the function \(z-1/z\) is strictly increasing from $-\infty$ to $+\infty$ on \(z>0\), this equation
has the unique positive solution \(z(B^+)\). Let \(z=\bar z(\omega)\). Then \(z\) solves
\[
z-\frac{1}{z}=K(B^+),
\]
or equivalently,
\[
z^2-K(B^+)z-1=0.
\]
The unique positive solution is
\[
z(B^+)
=
\frac{K(B^+)+\sqrt{K(B^+)^2+4}}{2}.
\]
Therefore the stationary point must satisfy
\[
\bar z(\omega)=z(B^+).
\]
Since
\[
\bar z(\omega)=\sqrt{2\log\left(\frac{\omega}{\underline\omega}\right)},
\]
we get
\[
z(B^+)^2
=
2\log\left(\frac{\omega}{\underline\omega}\right).
\]
Hence
\[
\omega
=
\underline\omega\exp\left(\frac{z(B^+)^2}{2}\right).
\]
Therefore the unique stationary point is
\[
\widehat\omega(B^+)
=
\underline\omega\exp\left(\frac{z(B^+)^2}{2}\right).
\]
To see that this stationary point is a maximum, note that
\[
\frac{dV^+(\omega)}{d\omega}
=
\frac{s\sigma Q}{\omega^2}
\left[
-\bar z(\omega)
+
K(B^+)
+
\frac{1}{\bar z(\omega)}
\right].
\]
Since $\bar z(\omega)$ is strictly increasing in $\omega$, and the function
\[
h(z):=-z+K(B^+)+\frac{1}{z}
\]
is strictly decreasing in $z>0$, the term
\[
h(\bar z(\omega))
\]
is strictly decreasing in $\omega$. Hence $\frac{dV^+(\omega)}{d\omega}$ changes sign from positive to negative at the unique solution $z(B^+)$. Hence $V^+(\omega)$ is increasing for
$\omega<\widehat\omega(B^+)$ and decreasing for
$\omega>\widehat\omega(B^+)$. Therefore the maximizer of \(V^+(\omega)\) over
the restricted interval \([\omega^I,1]\) is the projection of
\(\widehat\omega(B^+)\) onto this interval:
\[
\omega_+^\ast
=
\min\left\{1,\max\left\{\omega^I,\widehat\omega(B^+)\right\}\right\}.
\] Therefore the policy
\[
(\omega,m)=(\omega_+^\ast,m^+)
\]
is optimal for the planner.
\end{proof}

\subsection{Proof of Proposition~\ref{prop:omega-cutoff}}\label{app:proof-omega-cutoff}

\begin{proof}
Suppose $\omega^I\le s$. Then the feasible negative-assortative branch is
\[
\omega\in[\omega^I,s],
\]
and the feasible positive-assortative branch is
\[
\omega\in[s,1].
\]
By Proposition~\ref{prop:opt-assign-given-omega}, the planner chooses $m^-$ on
the negative branch and $m^+$ on the positive branch.

For any branch value $B\in\{B^-,B^+\}$, the planner's value can be written as
\[
V_B(\omega)
=
s\,\frac{\sigma}{\omega}\bar z(\omega)\,Q
-\;s\,\frac{1-\omega}{\omega}\,n\alpha
+\Big(1-\frac{s}{\omega}\Big)B.
\]
Equivalently,
\[
V_B(\omega)
=
B
+
s\left[
n\alpha+
\frac{\sigma Q\bar z(\omega)-B-n\alpha}{\omega}
\right].
\]
By the same argument as in the proof of Lemma~\ref{lem:positive-only}, replacing
$B^+$ by a generic $B$, the unique stationary point of $V_B(\omega)$ is
\[
\widehat\omega(B)
=
\underline\omega\exp\left(\frac{z(B)^2}{2}\right),
\]
where
\[
K(B):=\frac{B+n\alpha}{\sigma Q},
\qquad
z(B):=\frac{K(B)+\sqrt{K(B)^2+4}}{2}.
\]
Moreover, $V_B(\omega)$ is increasing before $\widehat\omega(B)$ and decreasing
after $\widehat\omega(B)$.

Since $B^+\ge B^-$, we have $K(B^+)\ge K(B^-)$. Since the map
\[
K\mapsto \frac{K+\sqrt{K^2+4}}{2}
\]
is strictly increasing, it follows that
\[
z(B^+)\ge z(B^-),
\]
and therefore
\[
\widehat\omega(B^+)\ge \widehat\omega(B^-).
\]

The maximizer on the positive-assortative branch $[s,1]$ is
\[
\omega_+^\ast
=
\min\left\{1,\max\left\{s,\widehat\omega(B^+)\right\}\right\}.
\]
Since $\omega^I\le s$, this is equivalent to
\[
\omega_+^\ast
=
\min\left\{1,\max\left\{\max\{s,\omega^I\},\widehat\omega(B^+)\right\}\right\}.
\]
Similarly, the maximizer on the negative-assortative branch $[\omega^I,s]$ is
\[
\omega_-^\ast
=
\min\left\{s,\max\left\{\omega^I,\widehat\omega(B^-)\right\}\right\}.
\]

For any scalar \(r\in[\omega^I,1]\), write \(V^-_r\) and \(V^+_r\) for the
negative- and positive-assortative branch values when the true output weight is
\(r\).

First consider the case
\[
1<\widehat\omega(B^-)\le \widehat\omega(B^+).
\]
Then both stationary points lie above the feasible interval. Hence $V^-(\omega)$
is increasing on $[\omega^I,s]$ and $V^+(\omega)$ is increasing on $[s,1]$.
Therefore
\[
\omega_-^\ast=s,
\qquad
\omega_+^\ast=1.
\]
At $\omega=s$, the assignment coefficient is
\[
1-\frac{s}{s}=0,
\]
so the planner is indifferent over assignments. Thus the best value on the
negative branch equals the value at $\omega=s$ under any assignment. Since
$V^+(\omega)$ is increasing on $[s,1]$, we have
\[
V^+(1)\ge V^+(s)=V^-(s).
\]
Therefore the policy $(\omega_+^\ast,m^+)=(1,m^+)$ is optimal for the planner.

Now consider Case 2:
\[
\widehat\omega(B^+)<\omega^I\le s.
\]
Then
\[
\widehat\omega(B^-)\le \widehat\omega(B^+)<\omega^I,
\]
so both stationary points lie below the active-contest region. Since each branch
value is decreasing after its stationary point, the negative branch is maximized
at
\[
\omega_-^\ast=\omega^I,
\]
and the positive branch is maximized at
\[
\omega_+^\ast=s.
\]

At $\omega=s$, the assignment coefficient is
\[
1-\frac{s}{s}=0.
\]
Hence the planner is indifferent over assignments at $\omega=s$, and therefore
\[
V^-_s(s)=V^+_s(s).
\]
Since \(\widehat\omega(B^+)<\omega^I\le s\), the positive-branch value is
decreasing on \([s,1]\), so its maximum is attained at \(\omega=s\). Since
\(\widehat\omega(B^-)<\omega^I\le s\), the negative-branch value is decreasing on
\([\omega^I,s]\), so its maximum is attained at \(\omega=\omega^I\). Therefore
\[
V^-_s(\omega^I)\ge V^-_s(s)=V^+_s(s).
\]
Thus the negative-assortative policy \((\omega_-^\ast,m^-)\) is optimal. If \(s=\omega^I\), then \(\omega=s=\omega^I\) is the switching point, and we have $\omega_-^\ast=s=\omega^I$. So every policy \((\omega^I,m)\) is optimal, regardless of the assignment \(m\). If \(s>\omega^I\), \((\omega_-^\ast,m^-)\) is optimal.

Finally consider Case 3:
\[
\widehat\omega(B^-)\le 1
\qquad\text{and}\qquad
\omega^I\le \widehat\omega(B^+).
\]
Define the fixed reference points
\[
\bar\omega^-:=
\begin{cases}
\widehat\omega(B^-), & \text{if } \omega^I\le \widehat\omega(B^-)\le 1,\\
\omega^I, & \text{if } \widehat\omega(B^-)<\omega^I,
\end{cases}
\]
and
\[
\bar\omega^+:=
\begin{cases}
\widehat\omega(B^+), & \text{if } \omega^I\le \widehat\omega(B^+)\le 1,\\
1, & \text{if } \widehat\omega(B^+)>1.
\end{cases}
\]
By the monotonicity of $V_B(\omega)$ around its stationary point,
\[
\bar\omega^-\le \bar\omega^+.
\]

By construction,
\[
V^-_r(\bar\omega^-)=B^-+r\widetilde T^-,
\qquad
V^+_r(\bar\omega^+)=B^+ + r\widetilde T^+.
\]
Indeed, when the relevant reference point is a stationary point, this follows
from the first-order condition
\[
\sigma Q z(B)-B-n\alpha=\frac{\sigma Q}{z(B)},
\]
Substituting the expression for \(\sigma Q z(B)-B-n\alpha\) obtained
from this condition into
\[
n\alpha+
\frac{\sigma Q z(B)-B-n\alpha}{\widehat\omega(B)}
\]
yields the result. When the relevant reference point is a boundary point, it follows directly from
substitution into $V_B(\omega)$.

We now show that \(\widetilde T^- - \widetilde T^+>0\). Define
\[
G(B,\omega)
:=
n\alpha+
\frac{\sigma Q\bar z(\omega)-B-n\alpha}{\omega}.
\]
By the definitions above,
\[
\widetilde T^-=\max_{\omega\in[\omega^I,1]}G(B^-,\omega),
\qquad
\widetilde T^+=\max_{\omega\in[\omega^I,1]}G(B^+,\omega).
\]
For any fixed \(\omega\in[\omega^I,1]\),
\[
G(B^-,\omega)-G(B^+,\omega)
=
\frac{B^+-B^-}{\omega}.
\]
Since \(B^+>B^-\), this difference is strictly positive for every feasible
\(\omega\). Therefore
\[
\widetilde T^- > \widetilde T^+.
\]

Define
\[
F(r):=V^+_r(\bar\omega^+)-V^-_r(\bar\omega^-)
=
B^+-B^-+r(\widetilde T^+-\widetilde T^-).
\]

At $r=\bar\omega^-$, the point $\omega=\bar\omega^-$ is the switching point, so
the assignment coefficient vanishes:
\[
1-\frac{\bar\omega^-}{\bar\omega^-}=0.
\]
Hence
\[
V^+_{\bar\omega^-}(\bar\omega^-)
=
V^-_{\bar\omega^-}(\bar\omega^-).
\]
Since $\bar\omega^+$ maximizes the positive-side value over $[\bar\omega^-,1]$,
\[
V^+_{\bar\omega^-}(\bar\omega^+)
\ge
V^+_{\bar\omega^-}(\bar\omega^-)
=
V^-_{\bar\omega^-}(\bar\omega^-).
\]
Therefore
\[
F(\bar\omega^-)\ge0.
\]

At $r=\bar\omega^+$, the point $\omega=\bar\omega^+$ is the switching point, so
the assignment coefficient vanishes:
\[
1-\frac{\bar\omega^+}{\bar\omega^+}=0.
\]
Hence
\[
V^+_{\bar\omega^+}(\bar\omega^+)
=
V^-_{\bar\omega^+}(\bar\omega^+).
\]
Since $\bar\omega^-$ maximizes the negative-side value over
$[\omega^I,\bar\omega^+]$,
\[
V^-_{\bar\omega^+}(\bar\omega^-)
\ge
V^-_{\bar\omega^+}(\bar\omega^+)
=
V^+_{\bar\omega^+}(\bar\omega^+).
\]
Therefore
\[
F(\bar\omega^+)\le0.
\]

We now return to the actual output weight $s$.

First, suppose
\[
s<\bar\omega^-.
\]
Then the negative branch is $[\omega^I,s]$ and cannot reach $\bar\omega^-$. Since
$V^-_s(\omega)$ is increasing on $[\omega^I,s]$, the best feasible point on the
negative branch is $\omega=s$. At $\omega=s$, the assignment coefficient is zero:
\[
1-\frac{s}{s}=0.
\]
Thus
\[
V^-_s(s)=V^+_s(s).
\]
Since $\bar\omega^+$ maximizes the positive-side value over $[s,1]$,
\[
V^+_s(\bar\omega^+)\ge V^+_s(s)=V^-_s(s).
\]
Therefore the positive-assortative policy is optimal whenever $s<\bar\omega^-$.

Second, suppose
\[
s>\bar\omega^+.
\]
Then the positive branch is $[s,1]$ and cannot reach $\bar\omega^+$. Since
$V^+_s(\omega)$ is decreasing after $\bar\omega^+$, the best feasible point on
the positive branch is $\omega=s$. At $\omega=s$, the assignment coefficient is
zero:
\[
1-\frac{s}{s}=0.
\]
Thus
\[
V^+_s(s)=V^-_s(s).
\]
Since $\bar\omega^-$ maximizes the negative-side value over $[\omega^I,s]$,
\[
V^-_s(\bar\omega^-)\ge V^-_s(s)=V^+_s(s).
\]

Therefore the negative-assortative policy is optimal whenever $s>\bar\omega^+$.

Third, suppose
\[
\bar\omega^-\le s\le \bar\omega^+.
\]
Then $\bar\omega^-$ is feasible on the negative branch and $\bar\omega^+$ is
feasible on the positive branch. Hence the comparison of the two branch-optimal
values is exactly
\[
F(s)=B^+-B^-+s(\widetilde T^+-\widetilde T^-).
\]

Since \(\widetilde T^- - \widetilde T^+>0\), \(F(r)\) is strictly decreasing in
\(r\). The unique solution to \(F(r)=0\) is
\[
s^\star
=
\frac{B^+-B^-}{\widetilde T^- - \widetilde T^+}.
\]
Since \(F(\bar\omega^-)\ge0\) and \(F(\bar\omega^+)\le0\), we have
\[
\bar\omega^-\le s^\star\le \bar\omega^+.
\]
Additionally, if \(s<\bar\omega^-\), then \(s<s^\star\), and if
\(s>\bar\omega^+\), then \(s>s^\star\).   Therefore, combining the three cases together shows that the positive-assortative policy is optimal if
\(s<s^\star\), the negative-assortative policy is optimal if \(s>s^\star\), and
the planner is indifferent if \(s=s^\star\).

This completes the proof.
\end{proof}

\subsection{Extending the Choice Set Below $\omega^I$}\label{app:below-omegaI}

This section provides auxiliary results, aimed at simplifying the Assumption~\ref{ass:interior-candidate-ordering} and giving a discussion regarding the restriction to $\omega\in[\omega^I,1]$. We first show that Assumption~\ref{ass:interior-candidate-ordering} implies that
the relevant unconstrained candidate evaluation weights lie inside the admissible
region considered in Proposition~\ref{prop:omega-cutoff}.

\begin{lemma}\label{lem:assumption-implies-interior-ordering}
Under Assumption~\ref{ass:interior-candidate-ordering},
\[
\omega^I
\le
\widehat\omega(B^-)
\le
\widehat\omega(B^+)
\le
1.
\]
\end{lemma}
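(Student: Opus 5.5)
The plan is to convert each inequality in Assumption~\ref{ass:interior-candidate-ordering} into an inequality between a stationary point $\widehat\omega(B)$ and a boundary value of $\omega$. The link is the map $g(z):=z-1/z$ on $z>0$, which is continuous and strictly increasing. By the proof of Lemma~\ref{lem:positive-only}, $z(B)$ is the unique positive solution of
\[
g\bigl(z(B)\bigr)=K(B)=\frac{B+n\alpha}{\sigma Q},
\]
and $\widehat\omega(B)$ is the unique $\omega>\underline\omega$ with $\bar z(\omega)=z(B)$. In addition, $\bar z(\omega)=\sqrt{2\log(\omega/\underline\omega)}$ is strictly increasing in $\omega$ on $(\underline\omega,\infty)$. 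Hence for any $\omega_0>\underline\omega$,
\[
\widehat\omega(B)\ge\omega_0 \iff z(B)\ge\bar z(\omega_0)\iff K(B)\ge g\bigl(\bar z(\omega_0)\bigr)\iff B+n\alpha\ge \sigma Q\Bigl[\bar z(\omega_0)-\frac{1}{\bar z(\omega_0)}\Bigr],
\]
and the same chain holds with every inequality reversed. Here we use $\sigma Q>0$. This holds because $q=(I-\beta W)^{-1}\mathbf 1=\sum_{t\ge0}\beta^tW^t\mathbf 1\ge \mathbf 1$ entrywise under Assumption~\ref{ass:network-stability}, so $Q\ge n>0$.

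With this equivalence the three inequalities follow in turn. Take $\omega_0=\omega^I$, which satisfies $\omega^I>\underline\omega$ by Assumption~\ref{ass:alpha-bound}. The left inequality of Assumption~\ref{ass:interior-candidate-ordering} then gives $\widehat\omega(B^-)\ge\omega^I$. Take $\omega_0=1$, which satisfies $1\ge\omega^I>\underline\omega$. The right inequality gives $\widehat\omega(B^+)\le 1$. The middle inequality $\widehat\omega(B^-)\le\widehat\omega(B^+)$ follows from $B^-<B^+$, since $B\mapsto K(B)\mapsto z(B)\mapsto\widehat\omega(B)$ is a composition of increasing maps. This monotonicity was already noted in the proof of Proposition~\ref{prop:omega-cutoff}.

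The only delicate point is a domain issue, and it is minor. The equivalence needs $\bar z(\omega_0)$ to be well defined and positive, so that the quantity $\bar z-1/\bar z$ in the assumption is finite. Both evaluation points satisfy $\omega_0>\underline\omega$, so this holds. Everything else is monotonicity bookkeeping.
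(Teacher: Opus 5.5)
Your proof is correct and follows essentially the same route as the paper: strict monotonicity of $z\mapsto z-1/z$ and of $z\mapsto\underline\omega\exp(z^2/2)$ on $z>0$ turns the two outer bounds in Assumption~\ref{ass:interior-candidate-ordering} into $\bar z(\omega^I)\le z(B^-)$ and $z(B^+)\le\bar z(1)$, and the middle inequality comes from $\widehat\omega(B)$ being increasing in $B$. You also justify two points the paper takes for granted: $Q>0$, via the Neumann series, and that $\bar z$ is well defined at both evaluation points because $\omega^I>\underline\omega$ and $1>\underline\omega$.
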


\begin{proof}
Recall that $\widehat\omega(B)$ is defined by
\[
\widehat\omega(B)
=
\underline\omega\exp\left(\frac{z(B)^2}{2}\right),
\]
where $z(B)>0$ solves
\[
z(B)-\frac{1}{z(B)}
=
\frac{B+n\alpha}{\sigma Q}.
\]
The function
\[
z\mapsto z-\frac{1}{z}
\]
is strictly increasing on $z>0$, because its derivative is
\[
1+\frac{1}{z^2}>0.
\]
The function
\[
z\mapsto \underline\omega\exp\left(\frac{z^2}{2}\right)
\]
is also strictly increasing on $z>0$. Hence $\widehat\omega(B)$ is increasing in
$B$.

Since $B^+\ge B^-$, it follows immediately that
\[
\widehat\omega(B^-)\le \widehat\omega(B^+).
\]

It remains to show the two boundary inequalities. By
Assumption~\ref{ass:interior-candidate-ordering},
\[
\sigma Q\left[
\bar z(\omega^I)-\frac{1}{\bar z(\omega^I)}
\right]
\le
B^-+n\alpha.
\]
Dividing by $\sigma Q>0$ gives
\[
\bar z(\omega^I)-\frac{1}{\bar z(\omega^I)}
\le
\frac{B^-+n\alpha}{\sigma Q}
=
z(B^-)-\frac{1}{z(B^-)}.
\]
Since $z-1/z$ is strictly increasing on $z>0$, this implies
\[
\bar z(\omega^I)\le z(B^-).
\]
Therefore,
\[
\omega^I
=
\underline\omega
\exp\left(\frac{\bar z(\omega^I)^2}{2}\right)
\le
\underline\omega
\exp\left(\frac{z(B^-)^2}{2}\right)
=
\widehat\omega(B^-).
\]

Similarly, Assumption~\ref{ass:interior-candidate-ordering} gives
\[
B^+ + n\alpha
\le
\sigma Q\left[
\bar z(1)-\frac{1}{\bar z(1)}
\right].
\]
Dividing by $\sigma Q>0$ gives
\[
z(B^+)-\frac{1}{z(B^+)}
=
\frac{B^+ + n\alpha}{\sigma Q}
\le
\bar z(1)-\frac{1}{\bar z(1)}.
\]
Again, since $z-1/z$ is strictly increasing on $z>0$, we obtain
\[
z(B^+)\le \bar z(1).
\]
Thus,
\[
\widehat\omega(B^+)
=
\underline\omega
\exp\left(\frac{z(B^+)^2}{2}\right)
\le
\underline\omega
\exp\left(\frac{\bar z(1)^2}{2}\right)
=
1.
\]
Combining these inequalities yields
\[
\omega^I
\le
\widehat\omega(B^-)
\le
\widehat\omega(B^+)
\le
1.
\]
\end{proof}

 In each of the two propositions below, we state a sufficient condition under which restricting the planner's choice of $\omega$ to $[\omega^I,1]$ is without
loss. That is, even if an equilibrium, whether interior or corner, exists
for some $\omega<\omega^I$, the planner does not strictly prefer it to the best
policy already attainable in the active-contest region.

The first sufficient condition controls the behavior of every possible
active set below $\omega^I$. Recall that $J$ denotes the set of positions. For
any assignment $m$, write $c^{(m)}:=\theta^{(m)}+\alpha\mathbf 1$. For any nonempty subset $A\subseteq J$, define
\[
Q_A
:=
\mathbf 1_A^\top (I-\beta W_{AA})^{-1}\mathbf 1_A,
\]
and
\[
H_A^{(m)}
:=
\mathbf 1_A^\top c_A^{(m)}
-
\mathbf 1_A^\top (I-\beta W_{AA})^{-1}\beta W_{A,-A}c_{-A}^{(m)}.
\]
where $\mathbf 1_A$ denotes a vector of ones indexed by positions in $A$,
$c_A^{(m)}$ and $c_{-A}^{(m)}$ denote the subvectors of $c^{(m)}$ corresponding
to positions in $A$ and $J\setminus A$, respectively, $W_{AA}$ is the submatrix
of $W$ with rows and columns in $A$, and $W_{A,-A}$ is the submatrix of $W$ with
rows in $A$ and columns in $J\setminus A$.\footnote{Since $W\ge0$ and $W_{AA}$ is a principal submatrix of $W$,
Perron-Frobenius monotonicity implies $\rho(W_{AA})\le\rho(W)$; hence
Assumption~\ref{ass:network-stability} implies that $I-\beta W_{AA}$ is
invertible.} Then define
\[
\Xi
:=
\min_{m\in\mathcal M}
\min_{\varnothing\neq A\subseteq J}
\frac{H_A^{(m)}}{\sigma Q_A},
\qquad
\tau_I
:=
\bar z(\omega^I)-\frac{1}{\bar z(\omega^I)}.
\]

\begin{proposition}\label{prop:below-omegaI-Xi}
Maintain Assumptions~\ref{ass:network-stability}, \ref{ass:alpha-bound}, and
\ref{ass:interior-candidate-ordering}. Suppose $\omega^I\le s$. If
\[
\Xi\ge \tau_I,
\]
then allowing the planner to choose $\omega<\omega^I$ does not change the optimal
policy characterized in Proposition~\ref{prop:omega-cutoff}.
\end{proposition}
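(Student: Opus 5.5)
The plan is to bound the planner's payoff from any Stage-2 equilibrium at any $\omega<\omega^I$ by a payoff that is already attainable in the active-contest region $[\omega^I,1]$ under the same assignment. Since Proposition~\ref{prop:omega-cutoff} characterizes the optimum over $[\omega^I,1]$, this shows the optimal policy is unchanged. For fixed $m$, the planner's objective is $s\sum_j E_j+(1-s)b^\top a^{(m)}$, and the second term does not depend on $\omega$. So it suffices to show that total equilibrium effort at any $\omega<\omega^I$ is at most total all-active effort at $\omega^I$ under the same $m$.

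\textbf{Step 1 (low $\omega$).} For $\omega\le\underline\omega$, Lemma~\ref{lem:collapse-threshold} gives zero effort. The all-active effort at $\omega^I$ is strictly positive by Assumption~\ref{ass:alpha-bound}, so these $\omega$ are dominated.

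\textbf{Step 2 (an equilibrium with active set $A$).} For $\omega\in(\underline\omega,\omega^I)$, fix any pure equilibrium and let $A\neq\varnothing$ be its set of active positions; if $A=\varnothing$, Step 1 applies. As in the proof of Proposition~\ref{prop:stage2-omega}, every active position satisfies $z_j=\bar z(\omega)$, since the negative root is a local minimum. Inactive positions have $x_{-A}=(1-\omega)c^{(m)}_{-A}$. Solving the block system
\[
(I-\beta W_{AA})x_A=\sigma\bar z(\omega)\mathbf 1_A+\beta W_{A,-A}x_{-A}
\]
and using $\omega E_A=x_A-(1-\omega)c_A$ gives
\[
\mathbf 1^\top E=g_A(\omega):=\frac{\sigma Q_A\bar z(\omega)-(1-\omega)H_A^{(m)}}{\omega}
=H_A^{(m)}+\frac{\sigma Q_A\bar z(\omega)-H_A^{(m)}}{\omega}.
\]

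\textbf{Step 3 (monotonicity in $\omega$).} Using $\bar z'(\omega)=1/(\omega\bar z(\omega))$,
\[
g_A'(\omega)=\frac{\sigma Q_A}{\omega^2}\Big[\frac{H_A^{(m)}}{\sigma Q_A}-\Big(\bar z(\omega)-\frac1{\bar z(\omega)}\Big)\Big].
\]
For $\omega<\omega^I$, the bracket is at least $\Xi-\big(\bar z(\omega)-1/\bar z(\omega)\big)>\Xi-\tau_I\ge0$, because $z-1/z$ is strictly increasing. Hence $g_A$ is increasing on $(\underline\omega,\omega^I)$, and the effort at $\omega$ is at most $g_A(\omega^I)$.

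\textbf{Step 4 (from $A$ to the all-active set $J$).} This is the main obstacle: $g_A(\omega^I)$ is the total effort of a fictitious profile in which the positions outside $A$ exert zero effort at $\omega^I$. I would compare it with the all-active equilibrium at $\omega^I$ from Proposition~\ref{prop:stage2-omega}. In the all-active equilibrium, $x_{-A}=\sigma\bar z(\omega^I)q_{-A}\ge(1-\omega^I)c_{-A}$, because those efforts are positive by Assumption~\ref{ass:alpha-bound}. Since $(I-\beta W_{AA})^{-1}\beta W_{A,-A}\ge0$ by Assumption~\ref{ass:network-stability} and the Neumann series, the all-active $x_A$ dominates the $A$-profile's $x_A$ entrywise. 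Hence efforts on $A$ are weakly larger, and efforts on $J\setminus A$ are positive rather than zero. Therefore $g_A(\omega^I)\le\mathbf 1^\top E^{*(m)}(\omega^I)$.

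Combining Steps 1--4, any policy $(\omega,m)$ with $\omega<\omega^I$ is weakly dominated by $(\omega^I,m)$, which lies in the region covered by Proposition~\ref{prop:omega-cutoff}. The step I am least sure of is whether Step 4 needs extra care with the signs of $c_{-A}$; I would check it first via the nonnegativity of $W$, $M$, $a$, and, if $\alpha<0$, of $c^{(m)}$ itself.
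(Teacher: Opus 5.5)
Your proposal is correct and takes essentially the same route as the paper: the same active-set formula for aggregate effort, the same derivative-sign argument from $\Xi\ge\tau_I$, and the same comparison at $\omega^I$ via $(I-\beta W_{AA})^{-1}\beta W_{A,-A}\ge 0$. The paper also passes from $(\omega^I,m)$ to $(\omega^I,m^-)$, which your direct appeal to optimality over $[\omega^I,1]$ makes unnecessary, and your worry about the sign of $c_{-A}$ is unfounded: the gap between the two $x_A$ vectors is $(I-\beta W_{AA})^{-1}\beta W_{A,-A}\bigl(x^J_{-A}-(1-\omega^I)c_{-A}\bigr)=\omega^I(I-\beta W_{AA})^{-1}\beta W_{A,-A}E^J_{-A}\ge 0$ whatever the sign of $c_{-A}$.
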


\begin{proof}
By Lemma~\ref{lem:assumption-implies-interior-ordering},
\[
\omega^I
\le
\widehat\omega(B^-)
\le
\widehat\omega(B^+)
\le
1.
\]

First consider $\omega\le \underline\omega$. By
Lemma~\ref{lem:collapse-threshold}, the unique Stage-2 equilibrium is zero
effort. Hence the planner's value is no greater than $B^+$, which is weakly
dominated by the value of the positive-assortative candidate policy in the
active-contest region. Therefore no $\omega\le\underline\omega$ can strictly
improve on the policy characterized in Proposition~\ref{prop:omega-cutoff}.

Now consider $\omega\in(\underline\omega,\omega^I)$. Fix an assignment $m$ and a
Stage-2 equilibrium, if one exists. Let $A$ be its active set. Active agents
satisfy $z_j=\bar z(\omega)$, while inactive agents exert zero effort. Hence the
aggregate effort in this equilibrium is given by \[
S_A^{(m)}(\omega)
:=
\sum_{j\in A}E_j(\omega)
=
\frac{\sigma Q_A}{\omega}\bar z(\omega)
-
\frac{1-\omega}{\omega}H_A^{(m)}.
\]

Differentiating $S_A^{(m)}(\omega)$ gives
\[
\frac{dS_A^{(m)}(\omega)}{d\omega}
=
\frac{\sigma Q_A}{\omega^2}
\left[
\frac{1}{\bar z(\omega)}
-
\bar z(\omega)
+
\frac{H_A^{(m)}}{\sigma Q_A}
\right].
\]
Since $\Xi\ge\tau_I$, we have
\[
\frac{H_A^{(m)}}{\sigma Q_A}
\ge
\bar z(\omega^I)-\frac{1}{\bar z(\omega^I)}
\]
for every $m$ and every nonempty $A$. Since
\[
\bar z(\omega)-\frac{1}{\bar z(\omega)}
\]
is increasing in $\omega$, for every $\omega<\omega^I$ we have
\[
\bar z(\omega)-\frac{1}{\bar z(\omega)}
<
\bar z(\omega^I)-\frac{1}{\bar z(\omega^I)}
\le
\frac{H_A^{(m)}}{\sigma Q_A}.
\]
Therefore
\[
\frac{dS_A^{(m)}(\omega)}{d\omega}>0
\]
on $(\underline\omega,\omega^I)$. Hence
\[
S_A^{(m)}(\omega)
\le
S_A^{(m)}(\omega^I).
\]

At \(\omega^I\), the full-active profile is feasible by
Assumption~\ref{ass:alpha-bound}. We now show that its aggregate effort weakly
dominates the aggregate effort associated with any active set \(A\). Let
\(E_A^A(\omega^I)\) denote the effort vector of the agents in \(A\) under the
active-set profile, and let \(E^J(\omega^I)=E^{*(m)}(\omega^I)\) denote the
full-active effort vector. By the active-set formula,
\[
E_A^A(\omega^I)
=
\frac{1}{\omega^I}
\left[
\sigma\bar z(\omega^I)(I-\beta W_{AA})^{-1}\mathbf 1_A
-
(1-\omega^I)c_A^{(m)}
+
(1-\omega^I)(I-\beta W_{AA})^{-1}\beta W_{A,-A}c_{-A}^{(m)}
\right].
\]
By contrast, the full-active effort vector satisfies
\[
E_A^J(\omega^I)
=
\frac{1}{\omega^I}
\left[
\sigma\bar z(\omega^I)q_A
-
(1-\omega^I)c_A^{(m)}
\right].
\]
Using
\[
q_A
=
(I-\beta W_{AA})^{-1}
\left[
\mathbf 1_A+\beta W_{A,-A}q_{-A}
\right],
\]
we obtain
\[
E_A^J(\omega^I)-E_A^A(\omega^I)
=
(I-\beta W_{AA})^{-1}\beta W_{A,-A}E_{-A}^J(\omega^I).
\]
Since \((I-\beta W_{AA})^{-1}\ge0\), \(W_{A,-A}\ge0\), and
\(E_{-A}^J(\omega^I)\ge0\), it follows that
\[
E_A^J(\omega^I)\ge E_A^A(\omega^I).
\]
Therefore
\[
S_A^{(m)}(\omega^I)
=
\mathbf 1_A^\top E_A^A(\omega^I)
\le
\mathbf 1_A^\top E_A^J(\omega^I)
\le
\sum_{j=1}^n E_j^{*(m)}(\omega^I).
\]
Therefore any equilibrium with $\omega<\omega^I$ satisfies
\[
\mathcal W(\omega,m)
\le
s\sum_j E_j^{*(m)}(\omega^I)
+
(1-s)b^\top a^{(m)}
=
\mathcal W(\omega^I,m).
\]

Since $\omega^I\le s$, the assignment coefficient at $\omega^I$ is nonpositive.
Thus, conditional on $\omega^I$, the planner weakly prefers a negative
assortative assignment. Hence
\[
\mathcal W(\omega^I,m)
\le
\mathcal W(\omega^I,m^-).
\]
Finally, $(\omega^I,m^-)$ is feasible in the restricted problem
$\omega\in[\omega^I,1]$, and Proposition~\ref{prop:omega-cutoff} characterizes
the optimum over that restricted region. Thus
\[
\mathcal W(\omega^I,m^-)
\le
\max\left\{
\mathcal{W}(\omega_+^\ast,m^+),
\mathcal{W}(\omega_-^\ast,m^-)
\right\}.
\]
Therefore no equilibrium induced by any $\omega<\omega^I$ gives the planner a
strictly higher value than the best policy characterized in
Proposition~\ref{prop:omega-cutoff}. Allowing $\omega<\omega^I$ does not change
the planner's optimal policy.
\end{proof}

Intuitively, the only reason a corner equilibrium could be attractive to
the planner is that a small group of active agents may be pushed to exert very
high effort. This can happen when active agents have low effective advantage but
compete against inactive agents with high effective advantage. The
condition $\Xi\ge\tau_I$ rules out this force by requiring every possible active
set to have enough own effective advantage, relative to the competition pressure it
faces from inactive positions. Suppose an extreme case where
only one position $j$ is active, so $A=\{j\}$. Then $Q_A=1$, and
\[
H_{\{j\}}^{(m)}
=
c_j^{(m)}
-
\beta\sum_{k\neq j}w_{jk}c_k^{(m)}.
\]
Thus the condition $\Xi\ge\tau_I$ requires
\[
c_j^{(m)}
-
\beta\sum_{k\neq j}w_{jk}c_k^{(m)}
\ge
\sigma\left[
\bar z(\omega^I)-\frac{1}{\bar z(\omega^I)}
\right]
\]
for every assignment $m$ and every position $j$. It means that the inactive positions
do not create enough benchmark through their effective advantage, for $j$ to exert a very high effort.

Particularly, the condition $\Xi\ge\tau_I$ is more likely to hold when $\sigma$
is very small. Since the relevant threshold in the condition is scaled by $\sigma$ through
$\sigma Q_A\tau_I$. Intuitively, a small $\sigma$ means that success is determined with
little noise, so effort translates more predictably into performance comparisons, making the planner not need to rely on very low evaluation
weights and corner equilibria to generate incentives.

The second sufficient condition is more conservative but uses a direct upper
bound on the value of any corner profile below $\omega^I$. Define
\[
\Lambda_A^{(m)}
:=
\mathbf 1_A^\top (I-\beta W_{AA})^{-1}\beta W_{A,-A}c_{-A}^{(m)}
-
\mathbf 1_A^\top c_A^{(m)}
=
-H_A^{(m)}.
\]
Let
\[
\Lambda_{\max}
:=
\max\left\{0,\max_{m\in\mathcal M}\max_{\varnothing\neq A\subsetneq J}
\Lambda_A^{(m)}\right\}.
\]
Also define
\[
V^I
:=
\max\left\{
\mathcal{W}(\omega_+^\ast,m^+),
\mathcal{W}(\omega_-^\ast,m^-)
\right\}.
\]

\begin{proposition}\label{prop:below-omegaI-Lambda}
Maintain Assumptions~\ref{ass:network-stability}, \ref{ass:alpha-bound}, and
\ref{ass:interior-candidate-ordering}. Suppose $\omega^I\le s$. If
\[
(1-s)B^+
+
s\left[
\frac{\sigma Q}{\underline\omega}\bar z(\omega^I)
+
\frac{1-\underline\omega}{\underline\omega}\Lambda_{\max}
\right]
\le
V^I,
\]
then allowing the planner to choose $\omega<\omega^I$ does not change the optimal
policy characterized in Proposition~\ref{prop:omega-cutoff}.
\end{proposition}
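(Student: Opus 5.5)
The plan is to bound the planner's value at any $\omega<\omega^I$ by the left-hand side of the stated condition, and then compare with $V^I$.

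\textbf{Range $\omega\le\underline\omega$.} By Lemma~\ref{lem:collapse-threshold}, effort is zero. The planner's value is then $(1-s)b^\top a^{(m)}\le(1-s)B^+$. This is at most the left-hand side, because the bracketed term is nonnegative: $\bar z(\omega^I)>0$ and $\Lambda_{\max}\ge0$. So it is at most $V^I$.

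\textbf{Range $\omega\in(\underline\omega,\omega^I)$.} Fix an assignment $m$ and any Stage-2 equilibrium (interior or corner), with active set $A$. As in the proof of Proposition~\ref{prop:below-omegaI-Xi}, every active agent satisfies $z_j=\bar z(\omega)$ and inactive agents exert zero effort. Solving the active block gives aggregate effort
\[
S_A^{(m)}(\omega)=\frac{\sigma Q_A}{\omega}\bar z(\omega)+\frac{1-\omega}{\omega}\Lambda_A^{(m)} .
\]
I bound the two terms separately, using $\omega>\underline\omega$ and $\bar z(\omega)<\bar z(\omega^I)$ since $\bar z$ is increasing.

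\begin{itemize}
\item \emph{First term.} Use $Q_A\le Q$. This follows because $(I-\beta W_{AA})^{-1}\mathbf 1_A\le q_A$ entrywise by the Neumann series and nonnegativity of $W$, and $q\ge0$. Hence the first term is at most $\sigma Q\bar z(\omega^I)/\underline\omega$.
\item \emph{Second term.} For $A\subsetneq J$, we have $\Lambda_A^{(m)}\le\Lambda_{\max}$ with $\Lambda_{\max}\ge0$. The map $\omega\mapsto(1-\omega)/\omega$ is decreasing, so the second term is at most $\frac{1-\underline\omega}{\underline\omega}\Lambda_{\max}$.
\item \emph{Case $A=J$.} Here $\Lambda_J^{(m)}=-\mathbf 1^\top c^{(m)}<0$, since $a>0$, $M\ge0$, and $\alpha$ is presumably nonnegative (I would check this or note it). The second term is then nonpositive, and the same bound still holds.
\item \emph{Case $A=\varnothing$.} Effort is zero, so the bound holds trivially.
\end{itemize}

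Combining, and using $b^\top a^{(m)}\le B^+$, the planner's value satisfies
\[
\mathcal W\le (1-s)B^+ + s\left[\frac{\sigma Q}{\underline\omega}\bar z(\omega^I)+\frac{1-\underline\omega}{\underline\omega}\Lambda_{\max}\right]\le V^I .
\]
By Proposition~\ref{prop:omega-cutoff} (with Lemma~\ref{lem:assumption-implies-interior-ordering}), $V^I$ is the optimum over $[\omega^I,1]$. So no $\omega<\omega^I$ strictly improves on the characterized policy.

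\textbf{Main obstacle.} The delicate step is justifying the active-set effort formula for arbitrary corner equilibria. I would argue it as follows. Active agents must be at an interior maximum, which forces $z_j=\bar z(\omega)$ by the first-order condition and the argument in Proposition~\ref{prop:stage2-omega} that rules out the negative root. Substituting $E_{-A}=0$, so that $x_{-A}=(1-\omega)c_{-A}$, gives the linear system
\[
(I-\beta W_{AA})x_A=\sigma\bar z\mathbf 1_A+\beta W_{A,-A}(1-\omega)c_{-A},
\]
which is invertible by the footnoted Perron--Frobenius remark. The monotonicity $Q_A\le Q$ is the only other nontrivial ingredient.
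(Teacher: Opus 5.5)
Your proposal follows the paper's proof essentially step for step. You dispose of $\omega\le\underline\omega$ via Lemma~\ref{lem:collapse-threshold}, write the planner's value at an equilibrium with active set $A$ as $s\bigl[\frac{\sigma Q_A}{\omega}\bar z(\omega)+\frac{1-\omega}{\omega}\Lambda_A^{(m)}\bigr]+(1-s)b^\top a^{(m)}$, and bound it termwise using $b^\top a^{(m)}\le B^+$, $Q_A\le Q$, $\bar z(\omega)\le\bar z(\omega^I)$, $\omega>\underline\omega$ and $\Lambda_A^{(m)}\le\Lambda_{\max}$. Your extra details are sound additions rather than departures: the Neumann-series proof of $Q_A\le Q$, the derivation of the active-set formula, and the separate treatment of $A=J$, which the definition of $\Lambda_{\max}$ excludes and which does need $\mathbf 1^\top c^{(m)}\ge 0$ (e.g.\ $\alpha\ge 0$), a point the paper's proof passes over silently.
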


\begin{proof}
Again, $\omega\le\underline\omega$ is dominated by an active-contest policy by
Lemma~\ref{lem:collapse-threshold}. Consider
$\omega\in(\underline\omega,\omega^I)$ and any equilibrium with assignment $m$
and active set $A$. Its planner value is
\[
\mathcal W_A(\omega,m)
=
s\left[
\frac{\sigma Q_A}{\omega}\bar z(\omega)
+
\frac{1-\omega}{\omega}\Lambda_A^{(m)}
\right]
+
(1-s)b^\top a^{(m)}.
\]
Since $b^\top a^{(m)}\le B^+$, $Q_A\le Q$, $\bar z(\omega)\le \bar z(\omega^I)$,
$\omega>\underline\omega$, and
$\Lambda_A^{(m)}\le \Lambda_{\max}$, we have
\[
\mathcal W_A(\omega,m)
\le
(1-s)B^+
+
s\left[
\frac{\sigma Q}{\underline\omega}\bar z(\omega^I)
+
\frac{1-\underline\omega}{\underline\omega}\Lambda_{\max}
\right].
\]
By the maintained condition, the right-hand side is no larger than $V^I$.
Therefore no equilibrium with $\omega<\omega^I$ can give the planner a value
above the best value already attainable in $[\omega^I,1]$. Hence allowing
$\omega<\omega^I$ does not change the optimal policy characterized in
Proposition~\ref{prop:omega-cutoff}.
\end{proof}

\subsection{Proof of Corollary~\ref{cor:sstar-comparative}}\label{app:proof-corollary}

\begin{proof}
By Assumption~\ref{ass:interior-candidate-ordering},
\[
\omega^I
\le
\widehat\omega(B^-)
\le
\widehat\omega(B^+)
\le
1. \]
Moreover, since $\underline\omega e^{1/2}\le \omega^I$, we have
\[
\bar z(\omega^I)
=
\sqrt{2\log\left(\frac{\omega^I}{\underline\omega}\right)}
\ge 1.
\]
Together with $\omega^I\le \widehat\omega(B^-)$, this implies
\[
z(B^-)=\bar z(\widehat\omega(B^-))
\ge
\bar z(\omega^I)
\ge 1.
\]

By Proposition~\ref{prop:omega-cutoff}, under $B^+>B^-$,
\[
\widetilde T(B^-)-\widetilde T(B^+)>0
\]
and
\[
s^\star
=
\frac{B^+-B^-}{\widetilde T(B^-)-\widetilde T(B^+)},
\]
where, in the interior case,
\[
\widetilde T(B)
=
n\alpha+\frac{\sigma Q}{\widehat\omega(B)z(B)}.
\]

We first consider the effect of $Q$. By the envelope theorem, differentiating $\widetilde T(B)$ with
respect to $Q$ gives
\[
\frac{\partial \widetilde T(B)}{\partial Q}
=
\frac{\sigma z(B)}{\widehat\omega(B)}.
\]
Therefore
\[
\frac{\partial}{\partial Q}
\left[
\widetilde T(B^-)-\widetilde T(B^+)
\right]
=
\sigma
\left[
\frac{z(B^-)}{\widehat\omega(B^-)}
-
\frac{z(B^+)}{\widehat\omega(B^+)}
\right].
\]
Since $B^+> B^-$, we have $z(B^+)> z(B^-)$. Also,
\[
\widehat\omega(B)
=
\underline\omega\exp\left(\frac{z(B)^2}{2}\right),
\]
so
\[
\frac{z(B)}{\widehat\omega(B)}
=
\frac{1}{\underline\omega}
z(B)\exp\left(-\frac{z(B)^2}{2}\right).
\]
The function $z\exp(-z^2/2)$ is weakly decreasing for $z\ge1$. Since
$z(B^-)\ge1$, we obtain
\[
\frac{z(B^-)}{\widehat\omega(B^-)}
\ge
\frac{z(B^+)}{\widehat\omega(B^+)},
\]
and hence
\[
\frac{\partial}{\partial Q}
\left[
\widetilde T(B^-)-\widetilde T(B^+)
\right]
\ge0.
\]
Since $B^+-B^-$ does not depend on $Q$, it follows that $s^\star$ is weakly
decreasing in $Q$.

Next consider $B^+$, holding $B^-$ fixed. Since
\[
\frac{\partial \widetilde T(B)}{\partial B}
=
-\frac{1}{\widehat\omega(B)},
\]
we have
\[
\widetilde T(B^-)-\widetilde T(B^+)
=
\int_{B^-}^{B^+}\frac{1}{\widehat\omega(B)}\,dB.
\]
Because $\widehat\omega(B)$ is increasing in $B$, the integrand
$1/\widehat\omega(B)$ is decreasing in $B$. Differentiating $s^\star$ with
respect to $B^+$ gives
\[
\frac{\partial s^\star}{\partial B^+}
=
\frac{
\left[\widetilde T(B^-)-\widetilde T(B^+)\right]
-
\dfrac{B^+-B^-}{\widehat\omega(B^+)}
}{
\left[\widetilde T(B^-)-\widetilde T(B^+)\right]^2
}.
\]
Since
\[
\widetilde T(B^-)-\widetilde T(B^+)
=
\int_{B^-}^{B^+}\frac{1}{\widehat\omega(B)}\,dB
\ge
\frac{B^+-B^-}{\widehat\omega(B^+)},
\]
we obtain
\[
\frac{\partial s^\star}{\partial B^+}\ge0.
\]
This proves the result.
\end{proof}

\subsection{Proof of Proposition~\ref{prop:stability-diagonal}}\label{app:proof-stability-diagonal}
\begin{proof}
Given $\omega$, consider two agents $i\neq i'$ assigned to $j:=m(i)$ and
$k:=m(i')$. By \eqref{eq:Vi-closed-omega}, the common terms cancel in pairwise
comparisons. Therefore,
\[
\frac{1}{v}\Big(
V_i(\omega,m^{ii'})+V_{i'}(\omega,m^{ii'})
-
V_i(\omega,m)-V_{i'}(\omega,m)
\Big)
=
\frac{1-\omega}{\omega}
\Big(
\theta^{(m^{ii'})}_j+\theta^{(m^{ii'})}_k
-\theta^{(m)}_j-\theta^{(m)}_k
\Big).
\]
Using the notation in the main text,
\[
\theta^{(m)}_j = M_{jj}a_i + M_{jk}a_{i'} + R_j,
\qquad
\theta^{(m)}_k = M_{kj}a_i + M_{kk}a_{i'} + R_k,
\]
and
\[
\theta^{(m^{ii'})}_j = M_{jj}a_{i'} + M_{jk}a_i + R_j,
\qquad
\theta^{(m^{ii'})}_k = M_{kj}a_{i'} + M_{kk}a_i + R_k,
\]
where $R_j$ and $R_k$ are unchanged by the swap. Hence
\begin{align*}
&\theta^{(m^{ii'})}_j+\theta^{(m^{ii'})}_k
-\theta^{(m)}_j-\theta^{(m)}_k \\
&\qquad =
(M_{kk}-M_{jj})(a_i-a_{i'})
+
(M_{jk}-M_{kj})(a_i-a_{i'}).
\end{align*}
If $M$ is symmetric, then $M_{jk}=M_{kj}$, and therefore
\[
\frac{1}{v}\Big(
V_i(\omega,m^{ii'})+V_{i'}(\omega,m^{ii'})
-
V_i(\omega,m)-V_{i'}(\omega,m)
\Big)
=
\frac{1-\omega}{\omega}
(M_{kk}-M_{jj})(a_i-a_{i'}).
\]

If $\omega=1$, the expression is zero for every pair. Hence no pair has a
strictly profitable swap, and every assignment is pairwise stable.

Now suppose $\omega<1$. Since $\omega>0$, the factor $(1-\omega)/\omega$ is
strictly positive. Thus a swap is profitable if and only if
\[
(M_{kk}-M_{jj})(a_i-a_{i'})>0,
\]
or equivalently, if and only if
\[
(a_i-a_{i'})(M_{jj}-M_{kk})<0.
\]
Therefore, no profitable swap exists if and only if
\[
(a_i-a_{i'})(M_{jj}-M_{kk})\ge0
\]
for every pair $i\neq i'$. This proves \eqref{eq:stable-diagonal-ineq}.

The ordering statement follows immediately. After ordering abilities so that
$a_{i(1)}\le\cdots\le a_{i(n)}$, the pairwise inequalities require the assigned
diagonal scores to be weakly increasing in the same order, up to arbitrary
permutations within tied blocks.
\end{proof}
\subsection{Proof of Lemma~\ref{lem:stability-loss}}\label{app:proof-stability-loss}

\begin{proof}
We first record the branch value when the relevant stationary point is
feasible. For a branch with assignment value \(B\), define
\[
V_B(\omega;Q)
=
B+s\left[
n\alpha+
\frac{\sigma Q\bar z(\omega)-B-n\alpha}{\omega}
\right].
\]
If the branch selects its interior stationary point \(\widehat\omega(B)\), then
its optimized branch value is
\[
V_B^\ast(Q)
:=
V_B(\widehat\omega(B);Q)
=
B+s\widetilde T(B),
\]
where
\[
\widetilde T(B)
:=
n\alpha+
\frac{\sigma Q}{\widehat\omega(B)z(B)}.
\]

To see this, note that at the interior maximizer \(\widehat\omega(B)\), we have
\[
\bar z(\widehat\omega(B))=z(B),
\]
and the first-order condition implies
\[
z(B)-\frac{1}{z(B)}
=
\frac{B+n\alpha}{\sigma Q}.
\]
Hence
\[
B+n\alpha
=
\sigma Q\left(z(B)-\frac{1}{z(B)}\right),
\]
so
\[
\sigma Qz(B)-B-n\alpha
=
\frac{\sigma Q}{z(B)}.
\]
Substituting this into \(V_B(\widehat\omega(B);Q)\) gives
\[
V_B^\ast(Q)
=
B+s\left[
n\alpha+
\frac{\sigma Q}{\widehat\omega(B)z(B)}
\right]
=
B+s\widetilde T(B).
\]

The cutoff characterization guarantees that this is the relevant optimized
branch value in the cases used below. In particular,
\[
s<s^\star
\quad \Longrightarrow \quad
V^{FB}(Q)=V_{B^+}^\ast(Q)=B^+ +s\widetilde T(B^+),
\]
whereas
\[
s>s^\star
\quad \Longrightarrow \quad
V^{FB}(Q)=V_{B^-}^\ast(Q)=B^- +s\widetilde T(B^-).
\]
Analogously,
\[
s<s^{S,\star}
\quad \Longrightarrow \quad
V^{SB}(Q)=V_{B^{S,+}}^\ast(Q)=B^{S,+}+s\widetilde T(B^{S,+}),
\]
and
\[
s>s^{S,\star}
\quad \Longrightarrow \quad
V^{SB}(Q)=V_{B^{S,-}}^\ast(Q)=B^{S,-}+s\widetilde T(B^{S,-}).
\]
At equality, the corresponding two branches yield the same value.

Therefore, subtracting \(V^{SB}(Q)\) from \(V^{FB}(Q)\) gives four cases:
\[
\Delta(Q)
=
\begin{cases}
\big(B^+-B^{S,+}\big)
+s\big[\widetilde T(B^+)-\widetilde T(B^{S,+})\big],
& \text{if } s<s^\star \text{ and } s<s^{S,\star},\\[0.9em]
\big(B^+-B^{S,-}\big)
+s\big[\widetilde T(B^+)-\widetilde T(B^{S,-})\big],
& \text{if } s<s^\star \text{ and } s>s^{S,\star},\\[0.9em]
\big(B^--B^{S,+}\big)
+s\big[\widetilde T(B^-)-\widetilde T(B^{S,+})\big],
& \text{if } s>s^\star \text{ and } s<s^{S,\star},\\[0.9em]
\big(B^--B^{S,-}\big)
+s\big[\widetilde T(B^-)-\widetilde T(B^{S,-})\big],
& \text{if } s>s^\star \text{ and } s>s^{S,\star}.
\end{cases}
\]
This proves the lemma.
\end{proof}

\subsection{Proof of Lemma~\ref{lem:Q-cutoff}}\label{app:proof-lem-Q-cutoff}

\begin{proof}
Under the interior candidate ordering,
\[
\widetilde T(B;Q)
=
n\alpha+
\frac{\sigma Q}{\widehat\omega(B;Q)z(B;Q)}.
\]
From this, one obtains
\[
\frac{\partial}{\partial B}
\left[
\frac{\sigma Q}{\widehat\omega(B;Q)z(B;Q)}
\right]
=
-\frac{1}{\widehat\omega(B;Q)}.
\]
To see this, differentiate the identity
\[
z(B;Q)-\frac{1}{z(B;Q)}
=
\frac{B+n\alpha}{\sigma Q}
\]
with respect to \(B\), holding \(Q\) fixed. This gives
\[
\left(1+\frac{1}{z(B;Q)^2}\right)
\frac{\partial z(B;Q)}{\partial B}
=
\frac{1}{\sigma Q}.
\]
Hence
\[
\frac{\partial z(B;Q)}{\partial B}
=
\frac{z(B;Q)^2}{\sigma Q\left(1+z(B;Q)^2\right)}.
\]

Now recall that
\[
\widehat\omega(B;Q)
=
\underline\omega
\exp\left(\frac{z(B;Q)^2}{2}\right).
\]
Therefore
\[
\frac{\partial \widehat\omega(B;Q)}{\partial z}
=
\widehat\omega(B;Q)z(B;Q).
\]
It follows that
\[
\frac{\partial}{\partial z}
\left[
\frac{\sigma Q}{\widehat\omega(B;Q)z(B;Q)}
\right]
=
-\sigma Q
\frac{1+z(B;Q)^2}
{\widehat\omega(B;Q)z(B;Q)^2}.
\]
Using the chain rule,
\[
\frac{\partial}{\partial B}
\left[
\frac{\sigma Q}{\widehat\omega(B;Q)z(B;Q)}
\right]
=
-\sigma Q
\frac{1+z(B;Q)^2}
{\widehat\omega(B;Q)z(B;Q)^2}
\cdot
\frac{z(B;Q)^2}
{\sigma Q\left(1+z(B;Q)^2\right)}.
\]
After cancellation, this becomes
\[
\frac{\partial}{\partial B}
\left[
\frac{\sigma Q}{\widehat\omega(B;Q)z(B;Q)}
\right]
=
-\frac{1}{\widehat\omega(B;Q)}.
\]
Therefore
\[
\widetilde T(B^-;Q)-\widetilde T(B^+;Q)
=
\int_{B^-}^{B^+}
\frac{1}{\widehat\omega(B;Q)}\,dB.
\]
Hence
\[
s^\star(Q)
=
\frac{B^+-B^-}
{\int_{B^-}^{B^+}
\frac{1}{\widehat\omega(B;Q)}\,dB},
\]
so \(s=s^\star(Q)\) is equivalent to
\[
\Phi(Q)=\frac{1}{s}.
\]

It remains to show that this equation has a unique positive solution. For each
\(B\), \(K(B;Q)=(B+n\alpha)/(\sigma Q)\) is strictly decreasing in \(Q\). Since
\(z(B;Q)\) is strictly increasing in \(K(B;Q)\), \(z(B;Q)\) is strictly decreasing
in \(Q\). Hence
\[
\widehat\omega(B;Q)
=
\underline\omega\exp\left(\frac{z(B;Q)^2}{2}\right)
\]
is strictly decreasing in \(Q\), and therefore
\(1/\widehat\omega(B;Q)\) is strictly increasing in \(Q\). Thus \(\Phi(Q)\) is
strictly increasing in \(Q\).

Moreover,
\[
\lim_{Q\to0}\Phi(Q)=0,
\]
while
\[
\lim_{Q\to\infty}\Phi(Q)
=
\frac{1}{\underline\omega e^{1/2}}.
\]
Since \(\underline\omega e^{1/2}<s\), we have
\[
0<\frac{1}{s}<\frac{1}{\underline\omega e^{1/2}}.
\]
Therefore there exists a unique \(Q^\star>0\) satisfying
\(\Phi(Q^\star)=1/s\). Since \(\Phi(Q)\) is strictly increasing, the equivalence
between \(Q<Q^\star\) and \(s<s^\star(Q)\), and between \(Q>Q^\star\) and
\(s>s^\star(Q)\), follows immediately.
\end{proof}

\subsection{Proof of Proposition~\ref{prop:delta-monotone-Q}}\label{app:proof-delta-monotone-Q}

\begin{proof}
Since \(\underline\omega e^{1/2}<\omega^I\le s\), Lemma~\ref{lem:Q-cutoff}
applies. Moreover, the monotonicity statement below is understood on intervals
of \(Q\)-values on which Assumptions~\ref{ass:network-stability}--\ref{ass:interior-candidate-ordering}
continue to hold.

By Assumption~\ref{ass:interior-candidate-ordering},
\[
\omega^I\le \widehat\omega(B^-)\le \widehat\omega(B^+)\le 1.
\]
Since $\underline\omega e^{1/2}< \omega^I$, we have
\[
\bar z(\omega^I)
=
\sqrt{2\log\left(\frac{\omega^I}{\underline\omega}\right)}
\ge 1.
\]
Together with $\omega^I\le \widehat\omega(B^-)$, this implies
\[
z(B^-)\ge \bar z(\omega^I)\ge 1.
\]
Since $z(B)$ is increasing in $B$, we have $z(B)\ge1$ for every
$B\in[B^-,B^+]$.

For any branch value $B$, recall that
\[
\frac{\partial \widetilde T(B)}{\partial Q}
=
\frac{\sigma z(B)}{\widehat\omega(B)}.
\]
and
\[
\frac{z(B)}{\widehat\omega(B)}
=
\frac{1}{\underline\omega}
z(B)\exp\left(-\frac{z(B)^2}{2}\right).
\]
Since the function $z\exp(-z^2/2)$ is weakly decreasing for $z\ge1$,
\[
B_1\ge B_0
\quad\Longrightarrow\quad
\frac{\partial \widetilde T(B_1)}{\partial Q}
\le
\frac{\partial \widetilde T(B_0)}{\partial Q}.
\]
Therefore,
\[
B_1\ge B_0
\quad\Longrightarrow\quad
\frac{\partial}{\partial Q}
\left[
\widetilde T(B_1)-\widetilde T(B_0)
\right]
\le0.
\]

Now suppose \(Q<Q^\star\). By Lemma~\ref{lem:Q-cutoff}, this is equivalent to
\(s<s^\star(Q)\). Hence the unconstrained problem chooses the positive branch:
\[
V^{FB}(Q)=B^+ + s\widetilde T(B^+;Q).
\]
The stable constrained problem can choose either the stable positive branch
\(B^{S,+}\) or the stable negative branch \(B^{S,-}\). Therefore
\[
\Delta(Q)
=
\min_{B^S\in\{B^{S,+},B^{S,-}\}}
\left\{
(B^+-B^S)
+
s\left[\widetilde T(B^+;Q)-\widetilde T(B^S;Q)\right]
\right\}.
\]
For each \(B^S\in\{B^{S,+},B^{S,-}\}\), we have \(B^S\le B^+\). Hence
\[
\frac{\partial}{\partial Q}
\left[
\widetilde T(B^+;Q)-\widetilde T(B^S;Q)
\right]
\le 0.
\]
Thus each candidate loss is weakly decreasing in \(Q\). Since the minimum of
weakly decreasing functions is weakly decreasing, \(\Delta(Q)\) is weakly
decreasing in \(Q\).

Now suppose \(Q>Q^\star\). By Lemma~\ref{lem:Q-cutoff}, this is equivalent to
\(s>s^\star(Q)\). Hence the unconstrained problem chooses the negative branch:
\[
V^{FB}(Q)=B^- + s\widetilde T(B^-;Q).
\]
Therefore
\[
\Delta(Q)
=
\min_{B^S\in\{B^{S,+},B^{S,-}\}}
\left\{
(B^- -B^S)
+
s\left[\widetilde T(B^-;Q)-\widetilde T(B^S;Q)\right]
\right\}.
\]
For each \(B^S\in\{B^{S,+},B^{S,-}\}\), we have \(B^S\ge B^-\). Hence
\[
\frac{\partial}{\partial Q}
\left[
\widetilde T(B^-;Q)-\widetilde T(B^S;Q)
\right]
\ge 0.
\]
Thus each candidate loss is weakly increasing in \(Q\). Since the minimum of
weakly increasing functions is weakly increasing, \(\Delta(Q)\) is weakly
increasing in \(Q\).

This proves both claims.
\end{proof}

\end{document}